\newcommand{\deleted}[1]{}
\providecommand{\norm}[1]{\lVert#1\rVert}
\title{In-Range Farthest Point Queries and Related Problem in High Dimensions}
\titlerunning{In-Range Farthest Point Queries in High Dimensions}
\author{Ziyun Huang}{Department of Computer Science and Software Engineering,\\
Penn State Erie, The Behrend College, Erie, United States}{zxh201@psu.edu}{}{}
\author{Jinhui Xu}{Department of Computer Science and Engineering,\\
State University of New York at Buffalo, Buffalo, United States}{jinhui@buffalo.edu}{}{The research of this author was supported in part by NSF through grant IIS-1910492 and by KAUST through grant CRG10 4663.2.}
\authorrunning{Z. Huang and J. Xu}
\keywords{Farthest Point Query, Range Aggregate Query, Minimum Enclosing Ball, Approximation, High Dimensional Space}
\begin{document}

\maketitle

\begin{abstract}
Range-aggregate query is an important type of queries with numerous applications. It aims to obtain some structural information (defined by an \emph{aggregate function} $F(\cdot)$) 
of the points (from a point set $P$) inside a given query range $B$. 
In this paper, we study the range-aggregate query problem in high dimensional space for two  aggregate functions:
(1) $F(P \cap B)$ is the farthest point in $P \cap B$ to a query point $q$ in $\mathbb{R}^d$ and
(2) $F(P \cap B)$ is the minimum enclosing ball (MEB) of $P \cap B$.
For 
problem (1), called \emph{In-Range Farthest Point (IFP) Query}, 
we develop a bi-criteria approximation scheme: For any  $\epsilon>0$ that specifies the approximation ratio of the farthest distance
and any $\gamma>0$ that measures the ``fuzziness'' of the query range, we show that it is possible to pre-process  
$P$ into a data structure of size $\tilde{O}_{\epsilon,\gamma}(dn^{1+\rho})$ in $\tilde{O}_{\epsilon,\gamma}(dn^{1+\rho})$ time such that given any $\mathbb{R}^d$ query ball $B$ and query point $q$, it outputs in $\tilde{O}_{\epsilon,\gamma}(dn^{\rho})$ time a point $p$ that is a $(1-\epsilon)$-approximation of the farthest point to $q$ among all points lying in a  $(1+\gamma)$-expansion $B(1+\gamma)$ of $B$,    
where 
$0<\rho<1$  is a constant depending on 
$\epsilon$ and $\gamma$
and the hidden constants in big-O notations depend only on $\epsilon$, $\gamma$ and $\text{Polylog}(nd)$. 
For problem (2),
we show that the IFP result can be applied to
develop query scheme with similar time and space complexities to achieve a $(1+\epsilon)$-approximation for MEB.
To the best of our knowledge, these are 
the first theoretical results on such high dimensional range-aggregate query problems. 
Our results are based on several new techniques, such as {\em multi-scale construction} and {\em ball difference range query},  
which are interesting in their own rights and could be potentially used to solve other range-aggregate problems in high dimensional space. 
\end{abstract}


\section{Introduction}

Range search is a fundamental problem in computational geometry and finds applications in many fields like database systems and data mining \cite{agarwal1999geometric,tao2007range}. It has the following  basic form:  Given a set of $n$ points $P$ in $\mathbb{R}^d$, pre-process $P$ into a data structure  so that for any query range $B$ from a certain range family (\emph{e.g.,} spheres, rectangles, and halfspaces), it reports or counts the number of the points in $P \cap B$ 
efficiently.
Range search allows us to obtain some basic information of the points that lie
in a specific local region of the space.

In many applications, it is often expected to know more information than simply the number of points in the range. This leads to the study of range-aggregate query 
\cite{abr07-rcluster,agarwal2013efficient,arya2015approximate,brass2013range,gupta2014data,kha14-qmeb-improved,li2020polyfit,nek10-qmeb,rah10-range,rah11-range,yun2014fastraq}, 
which is a relatively new type of range search.  
The goal of range-aggregate query is to 
obtain more complicated structural information (such as the diameter, the minimum enclosing ball, and the minimum spanning tree) of the points
in the query range. 
Range-aggregate query can be generally defined as follows: Given a point set $P$, pre-process $P$ into a data structure such that for any range $B$ in a specific family, it
outputs $F(P \cap B)$,
where $F(\cdot)$ is a given \emph{aggregate function}
that computes a certain type of information or structure of $P \cap B$ like ``diameter'',``minimum enclosing ball'', and ``minimum spanning tree''.
Range-aggregate queries have some interesting applications in data analytics and big data \cite{ho1997range,vitter1999approximate,wu2013scorpion,yun2014fastraq}, where it is often required to retrieve aggregate information of the records in a dataset with keys that lie in any given (possibly high dimensional) range. 

In this paper, we study the range-aggregate query problem in high dimensions
for spherical ranges. Particularly, 
we consider two aggregate functions for any $\mathbb{R}^d$ query ball $B$:
(1) $F(P \cap B)$ is the farthest point in $P \cap B$ to a query point $q$
in $\mathbb{R}^d$ and (2) $F(P \cap B)$ is the minimum enclosing ball (MEB) of $P \cap B$.
We will focus in this paper on 
problem (1), called the
\emph{In-Range Farthest Point (IFP) Query}, and 
show that an efficient solution to IFP query also yields efficient solutions to
the MEB 
problems.
We start with some definitions.   

\begin{definition} 
({\bf Approximate IFP (AIFP)})
Let $P$ be a set of $n$ points in $\mathbb{R}^d$,
$q$ be a point and $B$ be a $d$-dimensional (closed) ball. 
A point $p \in P$ is a bi-criteria $(\epsilon, \gamma)$-approximate in-range farthest point (or AIFP) of $q \in P$ in $B$,
if there exists a point set $P'$ such that the following holds, where  $\epsilon$ and $\gamma$ are small positive constants,
and $B(1+\gamma)$ is the ball concentric with $B$ and with radius $(1+\gamma)r$:  
(1) $P \cap B \subseteq P' \subseteq P \cap B(1+\gamma)$; 
(2) $p \in P'$;
and 
(3) for any $p' \in P'$,  $(1-\epsilon) \norm{p' - q} \leq \norm{p - q}$.
\end{definition}

Defining AIFP in this way enables us to consider all points in $B$ and exclude all points outside of $B(1+\gamma)$. Points in the fuzzy region $B(1+\gamma) \setminus B$ may or may not be included in the farthest point query. 
Note that allowing fuzzy region is a commonly used strategy to deal with the challenges in many high dimensional similarity search and range query problems.
For example, consider the classic near neighbor search problem, which is
equivalent to spherical emptiness range search:
Given a query sphere $B$ in $\mathbb{R}^d$, report a data point $p$ that lies in $B$ if such a data point exists.
In high dimensional space, obtaining an exact solution to such a query is very difficult.
A commonly used technique for this problem is the Locality Sensitive Hashing (LSH) scheme \cite{datar2004locality}.
Given a query ball $B$, LSH could report a data point in $B(1+\epsilon)$ for some given factor $\epsilon > 0$.
In other words, a fuzzy region $B(1+\epsilon) \setminus B$ is allowed.
Similarly, we can define 
approximate MEB for points
in a given range with a fuzzy region.

\begin{definition}
(\textbf{Minimum Enclosing Ball (MEB)})
Let $P$ be a set of $n$ points in $\mathbb{R}^d$. 
A $d$-dimensional (closed) ball $B$ is an enclosing ball of $P$ if $P \subset B$ and $B$ is the minimum enclosing ball (MEB) of $P$ if its radius $r$ is the smallest among all enclosing balls.
A ball $B'$ is a $(1+\epsilon)$-approximate MEB of $P$ for some constant $\epsilon > 0$ if it is an enclosing ball of $P$ and its radius is no larger than $(1+\epsilon)Rad(P)$, where $Rad(P)$ is the radius of the MEB of $P$.  
\end{definition}

\begin{definition} 
({\bf Approximate MEB (AMEB)})
Let $P$ be a set of $n$ points and $B$ be any ball with radius $r$ in $\mathbb{R}^d$. A ball $B'$ with radius $r'$ is a bi-criteria 
$(\epsilon, \gamma)$-approximate MEB (or AMEB) of $P$ in range $B$,
if there exists a point set $P'$ such that the following holds, where $\gamma$ and $\epsilon$ are small positive constants:  
    (1) $P \cap B \subseteq P' \subseteq P \cap B(1+\gamma)$; and 
    (2) $B'$ is a $(1+\epsilon)$-approximate MEB of $P'$.
\end{definition}

In this paper, we will focus on 
building a data structure for $P$ 
so that given any query ball $B$ and a point $q \in \mathbb{R}^d$,
an AIFP of $q$ in $P\cap B$ can be computed efficiently ({\em i.e.,} in sub-linear time in terms of $n$). 
Below are the main theorems of this paper.
Let $\epsilon > 0,\gamma > 0$, $0 < \delta < 1$ be any real numbers.

\begin{theorem}
\label{thm-main-aipf}
For any set $P$ of $n$ points in $\mathbb{R}^d$,
it is possible to build a data structure of 
size $O_{\epsilon,\gamma}(d n^{1+\rho} \log \delta^{-1}\text{Polylog}(nd))$ in  
$O_{\epsilon,\gamma}(dn^{1+\rho} \log \delta^{-1} \text{Polylog}(nd))$ pre-processing time, 
where $0<\rho<1$ is a small constant depending on
$\epsilon$ and $\gamma$.
With this data structure, it is then possible to find a $(\epsilon, \gamma)$-AIFP of any given query point $q$ and
query ball $B$
in 
$O_{\epsilon,\gamma}(dn^{\rho} \log \delta^{-1} \text{Polylog}(nd))$ time with probability at least $1 - \delta$.
\end{theorem}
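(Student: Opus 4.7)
The plan is to reduce the AIFP query to a logarithmic number of decision queries of the form ``does there exist $p \in P$ with $p \in B(1+\gamma)$ and $\norm{p-q} \geq R$?'', ranging over a geometric progression of distance thresholds $R$, and to solve each decision query by a novel \emph{ball-difference range query} (BDRQ) data structure with query time $\tilde{O}_{\epsilon,\gamma}(dn^{\rho})$.

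First, since a $(1-\epsilon)$-approximation of the farthest distance is sufficient, I would enumerate candidate thresholds $R_k = R_0 (1+\epsilon)^k$ for $k = 0, \ldots, K$, where $K = O(\epsilon^{-1}\log(n\Delta))$ and $\Delta$ is the aspect ratio of $P$. Since the predicate ``$P \cap B(1+\gamma)$ contains a point at distance $\geq R$ from $q$'' is monotone in $R$, a binary search over the $R_k$'s locates the largest $R_{k^\ast}$ for which it holds using $O(\log K)$ BDRQ calls; returning the associated witness $p^\ast$ then yields an $(\epsilon,\gamma)$-AIFP, because $\norm{p^\ast-q}\ge R_{k^\ast} \ge R^\ast/(1+\epsilon)$, where $R^\ast$ is the true farthest distance in $P \cap B(1+\gamma)$.

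Each decision is then a BDRQ: given two balls $B_1 = B(c_B, r_B(1+\gamma))$ and $B_2 = B(q,R_k)$, find some $p \in P \cap (B_1 \setminus B_2)$ up to small fuzziness at both boundaries. I would build one BDRQ structure per pair of radius scales $(r_{B_1}, r_{B_2})$ on a geometric grid (a polylogarithmic number of pairs in $n, d, 1/\epsilon, 1/\gamma$), so that the appropriate structure is dispatched to at query time. Each such structure is an LSH-style composite: hash values are tuples $(h_1(p), h_2(p))$, where $h_1$ comes from a near-neighbor LSH sensitive to proximity to $c_{B_1}$ and $h_2$ from a ``far-neighbor'' LSH sensitive to being far from $c_{B_2}$. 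For the product family the good- and bad-case collision probabilities multiply, and provided the $(1+\gamma)$-fuzziness yields a strict gap $p_1>p_2$, the resulting exponent $\rho=\log(1/p_1)/\log(1/p_2)$ is a constant below $1$ depending only on $\epsilon,\gamma$. A standard LSH analysis then gives $\tilde{O}_{\epsilon,\gamma}(dn^{1+\rho})$ space and preprocessing and $\tilde{O}_{\epsilon,\gamma}(dn^{\rho})$ query time per structure, and $O(\log\delta^{-1})$-fold amplification boosts the success probability to $1-\delta$.

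The main obstacle will be the construction and tight analysis of the conjunctive near/far LSH used in the BDRQ step: we must show that the product collision probabilities still yield $\rho<1$ with a quantitative gap, and that the two-sided boundary fuzziness aligns cleanly with the $(1+\gamma)$-expansion of $B$ and the $(1-\epsilon)$-distance slack required by AIFP, without compounding losses that erode either approximation guarantee. A secondary subtlety is absorbing the aspect-ratio dependence into the $\text{Polylog}(nd)$ factor via an initial coarse quantization of both $P$ and the query parameters, so that the number of active scale-pairs remains polylogarithmic in $nd$.
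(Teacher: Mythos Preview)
Your ball-difference query and its two-family LSH construction match the paper's BD scheme (Section~\ref{sec-bd}), so that core component is sound. The real gap is what you dismiss as a ``secondary subtlety.'' A BD structure works only for a fixed pair of radii $(r_{in},r_{out})$, so your plan requires one structure per scale pair on a $(1+\epsilon)$-grid; the number of such pairs is $\Theta(\epsilon^{-2}\log^2\Delta)$ where $\Delta$ is the spread of $P$ (and the query ball radius is itself unconstrained), and $\Delta$ is not bounded by any function of $n$ and $d$. Your proposed ``initial coarse quantization'' cannot repair this: a grid coarse enough to force $\Delta=\text{poly}(nd)$ collapses any cluster of diameter below the grid width, destroying the farthest-point structure inside small query balls, while a grid fine enough to preserve that structure leaves the spread unchanged. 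This is not a detail to be patched later; it is the central technical obstacle of the problem, and building full-$n$ BD structures at every scale would give space $\tilde O(dn^{1+\rho}\log\Delta)$ rather than the claimed bound.

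The paper resolves this with its multi-scale construction (Section~3). It builds an HST-style aggregation tree over $P$ and, for each scale $t$, extracts a sketch $B_t^+$ consisting of representatives of tree nodes whose diameter is small relative to $(1+\lambda)^t$. The key combinatorial fact (Lemma~\ref{lm-rangesize}) is that the \emph{total} number of representatives summed over all scales is $O_\lambda(n\,\text{Polylog}(nd))$, independent of the spread, because each tree node lands in only $O_\lambda(\log nd)$ buckets (the range-cover parameter $\Delta$ there is set to $4nd$, a function of $n,d$, not the geometric spread). Constrained AIFP structures are then built on each sketch rather than on all of $P$ at each scale. A separate case analysis (Section~\ref{sec-aifp}, Claims~\ref{cl-far}--\ref{cl-case2}), using an approximate nearest neighbor of $q$ together with the aggregation tree, reduces an arbitrary query $(B,q)$ to at most three constrained queries with $d_{\max}/d_{\min}=\text{poly}(nd)$ and aligned ball radius; this is what replaces your binary search over an unbounded threshold range.
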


{\em Note: In the above result, the relationship between $\rho$ and $\epsilon, \gamma$ has a rather complicated dependence on several constants of $p$-stable distribution, which is inherited from the  underlying technique of Locality Sensitive Hashing (LSH) scheme \cite{datar2004locality}. 
This indicates that for any $\epsilon,\gamma$, we have $0<\rho<1$ and $\rho$ approaches $1$ as $\epsilon,\gamma$ approach $0$.
}

We will also show how to use the AIFP data structure to answer MEB queries efficiently. 

\begin{theorem}
\label{thm-main}
For any set $P$ of $n$ points in $\mathbb{R}^d$,
it is possible to build a data structure of size 
$O_{\epsilon,\gamma}(d n^{1+\rho} \log \delta^{-1} \text{Polylog}(nd))$ in  
$O_{\epsilon,\gamma}(dn^{1+\rho} \log \delta^{-1} \text{Polylog}(nd))$ pre-processing time, 
where $0<\rho<1$ is a small constant depending on
$\epsilon$ and $\gamma$.
With this data structure, it is then possible to find a  $(\epsilon, \gamma)$-AMEB for any query ball $B$ in 
$O_{\epsilon,\gamma}(dn^{\rho} \log \delta^{-1} \text{Polylog}(nd))$ time with 
probability at least $1 - \delta$.
\end{theorem}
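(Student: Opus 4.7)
The plan is to reduce approximate MEB to a sequence of AIFP queries via the classical B\u{a}doiu--Clarkson (B--C) iterative scheme, using the data structure of Theorem~\ref{thm-main-aipf} as the farthest-point oracle. I preprocess $P$ once into the AIFP structure with parameters $(\epsilon', \gamma)$ where $\epsilon' = \Theta(\epsilon)$, chosen so that the per-iteration $(1-\epsilon')$-approximation composes into the target $(1+\epsilon)$-slack for the MEB radius; the space and preprocessing bounds are then inherited directly from Theorem~\ref{thm-main-aipf}.

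At query time, given the input ball $B$, the algorithm picks an arbitrary initial center $c_0$ and iterates, for $i = 1,\ldots,T$ with $T = \Theta(1/\epsilon^2)$, the B--C update $c_i = c_{i-1} + \tfrac{1}{i+1}(p_i - c_{i-1})$, with each $p_i$ obtained from a single AIFP call on $(c_{i-1}, B)$. One final AIFP call returns $p_{T+1}$, and the output $B'$ is the ball centered at $c_T$ whose radius is large enough to contain the collected witnesses $p_1,\ldots,p_T$ as well as $\|p_{T+1} - c_T\|/(1-\epsilon')$; the latter serves, by the AIFP guarantee, as a certified upper bound on $\max_{p \in P \cap B}\|p - c_T\|$, so $B'$ automatically encloses all of $P \cap B$.

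For the bi-criteria guarantee I set $P' := (P \cap B)\cup\{p_1,\ldots,p_{T+1}\}$. Each AIFP witness lies in $B(1+\gamma)$, so $P \cap B \subseteq P' \subseteq P \cap B(1+\gamma)$ is automatic, and $B'$ encloses $P'$ by the choice of radius above. The B--C convergence analysis, generalized to $(1-\epsilon')$-approximate farthest-point oracles, yields $\max_{p \in P \cap B}\|p - c_T\| \leq (1+\epsilon)\,Rad(P \cap B) \leq (1+\epsilon)\,Rad(P')$, controlling the ``enclose $P \cap B$'' component of the output radius. Bounding the ``enclose the witnesses'' component requires showing that each $p_i$ is also within roughly $(1+\epsilon)\,Rad(P')$ of $c_T$, which follows by combining the convergence of the iterates $c_i \to c_T$ with the triangle inequality and the fact that $p_i \in P'$, so MEB$(P')$ encloses $p_i$.

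The query cost consists of $O(1/\epsilon^2)$ AIFP calls, each run with failure probability $\delta/(T+1)$ so that a union bound yields overall success probability $1-\delta$; the extra $\log(T+1)$ factor is absorbed into the $\log\delta^{-1}$ and $\text{Polylog}(nd)$ dependencies already present in Theorem~\ref{thm-main-aipf}, producing the stated query time. The main obstacle I anticipate is the adaptation of the B--C analysis to the bi-criteria setting: the AIFP oracle's approximation guarantee is only relative to $P \cap B$, yet its witnesses $p_i$ may lie in the fuzzy annulus $B(1+\gamma)\setminus B$, so the iterate $c_T$ must be shown to approximate the MEB center of the enlarged set $P'$ well enough that the witness-enclosure bound does not exceed $(1+\epsilon)\,Rad(P')$. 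Carrying out this coupling --- between the standard B--C convergence for $P \cap B$ and the behaviour of the fuzzy-region witnesses --- is where the technical effort concentrates.
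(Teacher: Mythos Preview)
Your high-level reduction --- AMEB via iterated AIFP calls in the B\u{a}doiu--Clarkson style --- matches the paper's, and the complexity bookkeeping is fine. But the specific B--C variant you choose differs from the paper's in a way that matters. The paper does \emph{not} use the raw center update $c_i = c_{i-1} + \tfrac{1}{i+1}(p_i - c_{i-1})$. Instead, at each iteration it recomputes a $(1+\epsilon')$-approximate MEB of the accumulated core-set $P_{core}$, takes that ball's center as $c_i$, and terminates once the core-set radius stops growing by a factor $1+\epsilon_0$; the output is a slight inflation of the last core-set MEB.

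That choice sidesteps exactly the obstacle you flag. With the paper's scheme the radius upper bound is free: $P_{core} \subseteq P'$, so $Rad(P_{core}) \leq Rad(P')$, and the output radius is at most $(1+\epsilon')(1+\epsilon/3)\,Rad(P_{core}) \leq (1+\epsilon)\,Rad(P')$. The remaining work is coverage of $P\cap B$, which the paper handles by a center-stability argument (their Claim~11): when the core-set radius stabilizes, $\|c_i - c_{i-1}\| \leq (\epsilon/10)r_{i-1}$, so the AIFP witness $p_i$ (which bounded the farthest distance from $c_{i-1}$) still certifies that all of $P\cap B$ lies in $B_i(1+\epsilon/3)$.

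Your variant inverts the difficulty --- coverage is free from the final AIFP call, but the radius bound is the hard part --- and the resolution you sketch does not obviously go through. The B--C potential argument controls $c_T$ relative to the MEB center of the set on which the oracle is a farthest-point oracle; here the AIFP guarantee is only relative to $P\cap B$, yet the returned $p_i$ may lie outside $P\cap B$. In the potential recursion for $\|c^*_{P\cap B} - c_i\|^2$, the term $\|c^*_{P\cap B} - p_i\|^2$ is then no longer bounded by $Rad(P\cap B)^2$, so the potential can \emph{increase} at such steps. Thus the analysis delivers neither $\|c^*_{P\cap B} - c_T\|$ small nor $\|c^*_{P'} - c_T\|$ small, and your triangle-inequality step through $c^*_{P'}$ needs one of these. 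The clean fix is precisely the paper's: keep the core-set explicitly and output an inflation of its MEB, so that the radius comparison to $Rad(P')$ is monotonicity rather than a convergence argument.
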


To our best knowledge, 
these are the first results on such range-aggregate problems
in high dimensions.
Each data structure has only a near linear dependence on $d$, a sub-quadratic dependence on $n$ in space complexity, and a sub-linear dependence on $n$ in query time. 

\vspace{0.1in}

\noindent\textbf{Our Method:} The main result on AIFP is based on several novel techniques, such as {\em multi-scale construction} and {\em ball difference range query}.
Briefly speaking,
multi-scale construction is a general technique that allow us to
break the task of building an AIFP query data structure into a number of ``constrained'' data structures. Each such data structure is capable of correctly answering an AIFP query given that some assumption about the query holds (for example, the distance from $q$ to its IFP is within a
certain range). 
Multi-scale construction uses a number of
``constrained'' data structures of small size to cover all possible cases of a query, which leads to a data structure that can handle any arbitrary queries. 
Multi-scale construction is independent of the aggregate function, and thus has the potential be used as a general method for other types of range-aggregate query problems in high dimensional space. 
Another important technique is a data structure for the ball difference range query problem, which returns a point, if there is one, in the difference of two given query balls. 
The ball difference data structure is the building block for the constrained AIFP data structures,
and is interesting in its own right as a new high dimensional range search problem.

\noindent\textbf{Related Work:} 
There are many results for the ordinary farthest point query problem in
high dimensional space
 \cite{curtin2016fast,huang2017-far,indyk2003better,pagh2015approximate}. 
However, to the best of our knowledge,
none of them
is sufficient to solve the IFP problem, 
and our result is the first one to consider the farthest point problem under the query setting. 
Our technique for the IFP problem also yields solutions to other range-aggregate queries problems,
including the MEB query problem.

A number of results exist for
various types of the range-aggregate query problem in fixed dimensional space.
In \cite{arya2015approximate}, Arya, Mount, and Park proposed an elegant scheme for querying minimum spanning tree inside a query range. 
They showed that there exists a bi-criteria $(\epsilon_q,\epsilon_w)$-approximation with a query time of $O(\log n + (1/\epsilon_q\epsilon_w)^d)$.
In \cite{nek10-qmeb}, Nekrich and Smid introduced a data structure to compute an $\epsilon$-coreset for the case of  orthogonal query ranges and aggregate functions  satisfying some special properties. 
Xue \cite{xue2019colored} considered the colored closest-pair problem in a (rectangular) range and obtained a couple of data structures with near linear size and polylogarithmic query time. 
Recently,  Xue \emph{et. al.} \cite{xue2020approximate} further studied more general versions of the closest-pair problem
and achieved similar results.
For the MEB problem under the range-aggregate settings,  Brass {\em et al.}  are the first to investigate the problem in 2D space,  along with other types of aggregate functions (like width and the size of convex hull) \cite{brass2013range}. 
They showed that it is possible to build a data structure with 
$O(n \cdot polylog(n))$ pre-processing space/time and $O(polylog(n))$ query time.

All the aforementioned methods were designed for fixed dimensional space, and thus are not applicable to high dimensions.
Actually, range aggregation has rarely been considered in high dimensions, except for a few results that may be viewed as loosely relevant. 
For example,  Abbar {\em et al.} \cite{abbar2013chishi} studied the problem of finding the maximum diverse set for points inside a ball with  fixed radius around a query point. Their ideas are  seemingly useful to our problem. However, since  their ball always has the same fixed radius, 
their techniques are not directly applicable. In fact,   
a main technical challenge of our problem is how to deal with the arbitrary radius and location of the query range, which is overcome by our multi-scale construction framework. 
Another related work by Aum{\"u}ller \emph{et. al.} \cite{aumuller2021sampling}
has focused on random sampling in a given range.
The technique is also not directly applicable to IFP.

\subsection{Overviews of the Main Ideas}

Below we describe the main ideas of our approaches. For simplicity,
in the following we ignore the fuzziness of the query range. 
We approach the AIFP query problem by first looking at an easier version:
given ball $B$ and point $q$, find an approximate farthest point in $P \cap B$ to $q$,
with the (strong) assumption that the radius of $B$ is a fixed constant $r_{B} > 0$,
and that the distance between $q$ and its IFP in $P \cap B$ is within a range  of $( d_{min}, d_{max} ]$,
where $d_{max} > d_{min} > 0$ are fixed constants.
We call such a problem a constrained AIFP problem.
We use a tuple $(r_{B}, d_{min}, d_{max})$ to denote such a constraint.

To solve the constrained AIFP problem,
we develop a data structure for the 
 \emph{ball difference (BD) range query} problem, which is defined as follows: given two balls $B_{in}$ and $B_{out}$, find a point that lies in $P \cap B_{in} \setminus B_{out}$.
With such a data structure,
it is possible to reduce an AIFP query with constraint $(r_{B}, d_{min}, d_{max})$
to a series of BD queries.
Below we briefly describe the idea.
Let $r_0 = d_{min}$,
and for $i=1,2,3 \ldots$, let $r_{i} = (1+\epsilon) r_{i-1}$,
where $\epsilon > 0$ is an approximation factor.
For $i = 0,1,\ldots$, we try to determine whether there is a point in $P \cap B$ whose distance to $q$ is larger than
$r_i$. Note that this can be achieved by a BD query with $B_{in} := B$ and $B_{out}$ being the ball centered at $q$ and with radius
$r_i$.
By iteratively doing this, eventually we will reach an index $j$ such that it is possible to find a point $p \in B \cap P$ that satisfies the condition of $\norm{p - q} > r_{j}$, but no point lies in $P \cap B$ whose distance to $q$ is larger than
$r_{j+1} = (1+\epsilon)r_j$. Thus, $p$ is a $(1-O(\epsilon))$-approximate farthest point to $q$ in $P \cap B$.
From the definition of constrained AIFP query,
it is not hard to see that this process finds the AIFP after at most $O(\log_{1+\epsilon} \frac{d_{max}}{d_{min}})$ iterations.
Every BD data structure supports only $B_{in}$ and $B_{out}$ with fixed radii.
This means that we need to build $O(\log_{1+\epsilon} \frac{d_{max}}{d_{min}})$ BD data structures for answering
any AIFP query with constraint $(r_{B}, d_{min}, d_{max})$.

With the constrained AIFP data structure, we then extend it to a data structure for answering general AIFP queries. 
Our main idea is to use the aforementioned multi-scale construction technique to build a collection of constrained data structures,
which can effectively cover (almost) all possible cases of the radius of $B$ and the farthest distance from $q$ to 
any point in $B \cap P$.
More specifically, for any AIFP query, it is always possible to 
either answer the query easily without using any constrained data structures,
or find a constrained data structure such that the AIFP query satisfies the constraint $(r_{B}, d_{min}, d_{max})$, 
and thus can be used to answer the AIFP query.   

For AMEB query,
we follow the main idea of Badoiu and Clarkson \cite{badoiu2003-mebcore},
and show that an AMEB query is reducible to a sequence of AIFP queries.
More discussions are left to Section \ref{sec-ameb}.

\section{Constrained AIFP Query}

In this section, we discuss how to construct a data structure to answer constrained AIFP queries. 
Particularly, 
given any ball $B$ and point $q$ satisfying the constraint $(r_{B}, d_{min}, d_{max})$, 
\begin{itemize}
    \item the radius of $B$ is $r_{B}$,
    \item the distance from $q$ to its farthest point to $P \cap B$ is within the range of $( d_{min}, d_{max} ]$,
\end{itemize}
the data structure can find the AIFP to $q$ in $P \cap B$ in sub-linear time (with high probability).

In the following, we let $\epsilon > 0$ be an approximation factor, $\gamma > 0$ be a factor that controls the
region fuzziness and $0 < \delta < 1$ be a factor controlling the query success probability.
The main result of this section is summarized as the following lemma.

\begin{lemma}
\label{lm-aifp}
Let $P$ be a set of $n$ points in $\mathbb{R}^d$. 
It is possible to build a
data structure for $P$ with size $O_{\epsilon,\gamma}(dn^{1+\rho} \log \delta^{-1} \log (d_{max}/d_{min}))$ 
in $O_{\epsilon,\gamma}(dn^{1+\rho} \log \delta^{-1} \log (d_{max}/d_{min}))$ time, 
where $0 < \rho < 1$ is a real number depending on $\epsilon$ and $\gamma$,
and the constants hidden in the big-O notation
depend only on $\epsilon, \gamma$.
Given any query $(B,q)$ that satisfies the constraint of  $(r_{B}, d_{min}, d_{max})$,
with probability at least $1 - \delta$,
the data structure finds an $(\epsilon,\gamma)$-AIFP for $q$ in $P \cap B$ within time
$O_{\epsilon,\gamma}(dn^{\rho} \log \delta^{-1} \log (d_{max}/d_{min}))$.
\end{lemma}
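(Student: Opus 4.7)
The plan is to reduce a constrained AIFP query to a sequence of ball difference (BD) range queries, as sketched in the overview. I will invoke, as a black box, a BD data structure (developed elsewhere in the paper) with the following properties: given fixed radii $r_{in}, r_{out}$ and a fuzziness parameter $\gamma' > 0$ at preprocessing, it has size and preprocessing time $O_{\epsilon,\gamma}(dn^{1+\rho} \log \delta'^{-1})$; given query centers $c_{in}, c_{out}$ that define balls $B_{in}, B_{out}$ of the prescribed radii, it returns in time $O_{\epsilon,\gamma}(dn^{\rho} \log \delta'^{-1})$, with probability at least $1 - \delta'$, a point of $P$ in a $\gamma'$-fuzzy version of $B_{in} \setminus B_{out}$ whenever such a point exists in the non-fuzzy version.

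First, I discretize the range $(d_{min}, d_{max}]$ by a geometric sequence $r_i := d_{min}(1+\epsilon)^i$ for $i = 0, 1, \ldots, M$, with $M = \lceil \log_{1+\epsilon}(d_{max}/d_{min}) \rceil = O(\epsilon^{-1} \log(d_{max}/d_{min}))$. For each $i$, I build one BD data structure $\mathcal{D}_i$ with inner radius $r_B$, outer radius $r_i$, fuzziness $\gamma' = \Theta(\min(\epsilon, \gamma))$, and per-query failure probability $\delta' = \delta/(M+1)$. Summing the space and preprocessing time over the $M+1$ structures produces the claimed bounds after absorbing $\epsilon^{-1}$ into the $O_{\epsilon,\gamma}$ notation and the $\log(M+1)$ factor into the stated $\log \delta^{-1} \log(d_{max}/d_{min})$ dependence.

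To answer a query $(B,q)$ satisfying the constraint, I observe that the exact predicate ``$P \cap B \setminus B(q, r_i) \neq \emptyset$'' is monotone non-increasing in $i$. I therefore either binary-search or sequentially probe over $i \in \{0, 1, \ldots, M\}$, calling $\mathcal{D}_i$ with $c_{in}$ equal to the center of $B$ and $c_{out} = q$, to locate the largest index $i^*$ at which a point $p$ is returned. By the constraint $(r_B, d_{min}, d_{max})$, such an $i^*$ always exists. By the BD fuzziness, $p \in P \cap B(1+\gamma')$ and $\|p - q\| \geq (1-\gamma') r_{i^*}$; by the failure of $\mathcal{D}_{i^*+1}$, no point of $P \cap B$ has distance greater than $(1+\gamma') r_{i^*+1} = (1+\gamma')(1+\epsilon) r_{i^*}$ from $q$. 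Combining, $\|p - q\| \geq (1 - O(\epsilon + \gamma')) \cdot \max_{p' \in P \cap B} \|p' - q\|$, and rescaling $\epsilon, \gamma'$ yields a valid $(\epsilon, \gamma)$-AIFP.

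The main technical obstacle will be reconciling the BD structure's intrinsic $\gamma'$-fuzziness with the clean $(\epsilon, \gamma)$-AIFP guarantee of Definition~1: I need to choose $\gamma'$ small enough that (i) the point $p$ returned at scale $i^*$ still lies inside $B(1+\gamma)$ and witnesses the $(1-\epsilon)$ inequality against every point of the certifying set $P'$ appearing in the definition, and (ii) the one-sided error of the BD structure does not break the monotonicity underlying the search. A secondary but routine issue is the union bound over the BD queries made during a single AIFP query, which is absorbed by the choice $\delta' = \delta/(M+1)$ and accounts for the extra $\log \delta^{-1}$ factor in the final query-time bound.
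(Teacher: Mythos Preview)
Your proposal is correct and follows essentially the same route as the paper: geometrically discretize the interval $(d_{min},d_{max}]$, build one fixed-radii BD structure per scale with per-query failure $\delta' = \delta/m$, and scan the scales to find the last one at which the BD query still returns a point; the paper uses step $1+\xi$ with $\xi=\min\{(1-\epsilon)^{-1/2}-1,\gamma\}$ and exactly your union-bound and complexity accounting. One small caveat: the binary-search option you mention is delicate because the \emph{fuzzy} BD answers are not guaranteed to be monotone in $i$ (a NULL at scale $i$ only certifies emptiness of the non-fuzzy set, while scale $i{+}1$ may still return a fuzzy witness), so stick with the sequential probe you also propose, which is what the paper does.
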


In the following, we consider an AIFP query that satisfies constraint $(r_{B}, d_{min}, d_{max})$.
As mentioned in last section,
it is possible to reduce a constrained AIFP query to a series of
\emph{ball difference(BD) range queries},
which report a point in $P$ that lies (approximately) in $B_{in} \setminus B_{out}$ for a given
pair of $\mathbb{R}^d$ balls $(B_{in}$ and $B_{out})$,
or return NULL if no such point exists.
Below, we describe the reduction using a ball-peeling strategy.
We consider a series of balls $B_0, B_{1}, B_{2} ,\ldots$ concentric at $q$
with an exponentially increasing radius.
Let $\xi > 0$ be a to-be-determined approximation factor, and $B_0 := \mathcal{B}(q, d_{min})$ which is the ball centered at $q$ with radius $d_{min}$. 
For integer $i > 0$, let $B_{i+1} = B_{i}(1+\xi)$ which is the ball obtained by enlarging the radius of $B_{i}$ 
by a factor of $(1+\xi)$.
\footnote{Throughout this paper we use similar notations. Let $q$ be any point and $x>0$ be real number.
Then, $\mathcal{B}(q,x)$ denotes the ball centered at $q$ and with radius $x$. Let $B$ be any ball. For real number $y>0$, we let $B(y)$ denote the ball obtained by enlarging (or shrinking if $y < 1$) the radius of $B$ by a factor of $y$.}
For $i=0,1,2,\ldots$, repeatedly perform a BD query with $B_{in} := B$ and $B_{out} := B_{i}$,
until an index $j$ is encountered such that
the BD query reports a point $p_j$ that lies in $P \cap B \setminus B_{j}$, but returns NULL when trying to find a point in $P \cap B \setminus B_{j+1}$.
If $\xi$ is a small enough constant, it is not hard to see that $p_j$ is a good approximation of the IFP to $q$ in $P \cap B$.
Note that in this process, no more than
$\log_{1+\xi} (d_{max}/d_{min})$ BD queries are required.
This is because the distance between $q$ and any point in $B$ is at most $d_{max}$.
Thus, it is not necessary to increase the radius of $B_{out}$ to be more than $d_{max}$ in the BD range query. 
The bound on the number of BD range queries then follows from the facts that the series of BD range queries starts with a $B_{out}$ ball of radius $d_{min}$ 
and each time the radius of $B_{out}$ is increased by a factor of $1+\xi$. 
This process is similar to peel a constant portion of $B_{in}$ each time by $B_{out}$. 
See Figure \ref{fig-ifp} for an illustration.

\begin{figure}[htbp]
\centering
\includegraphics[width=0.4\textwidth]{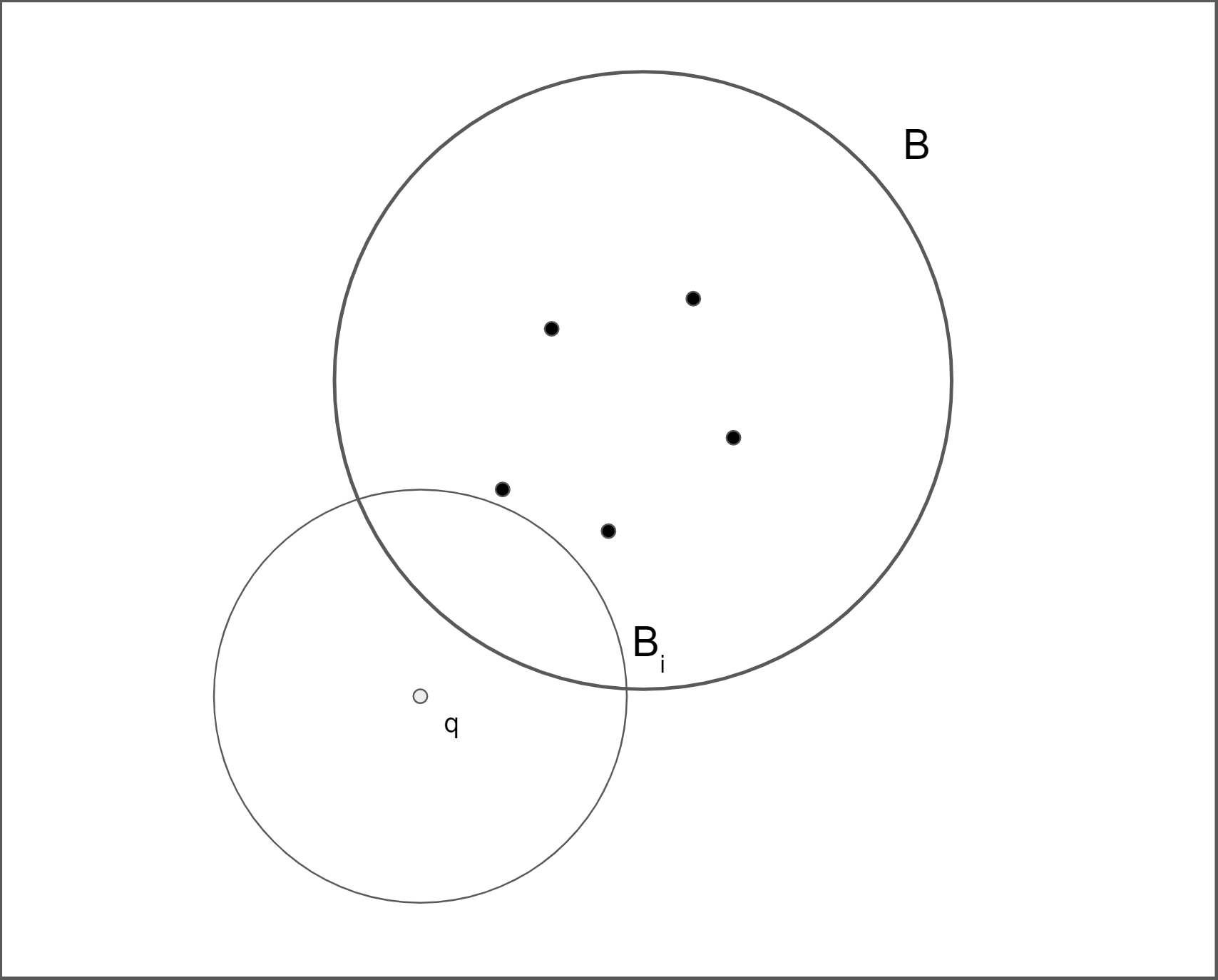}
\includegraphics[width=0.4\textwidth]{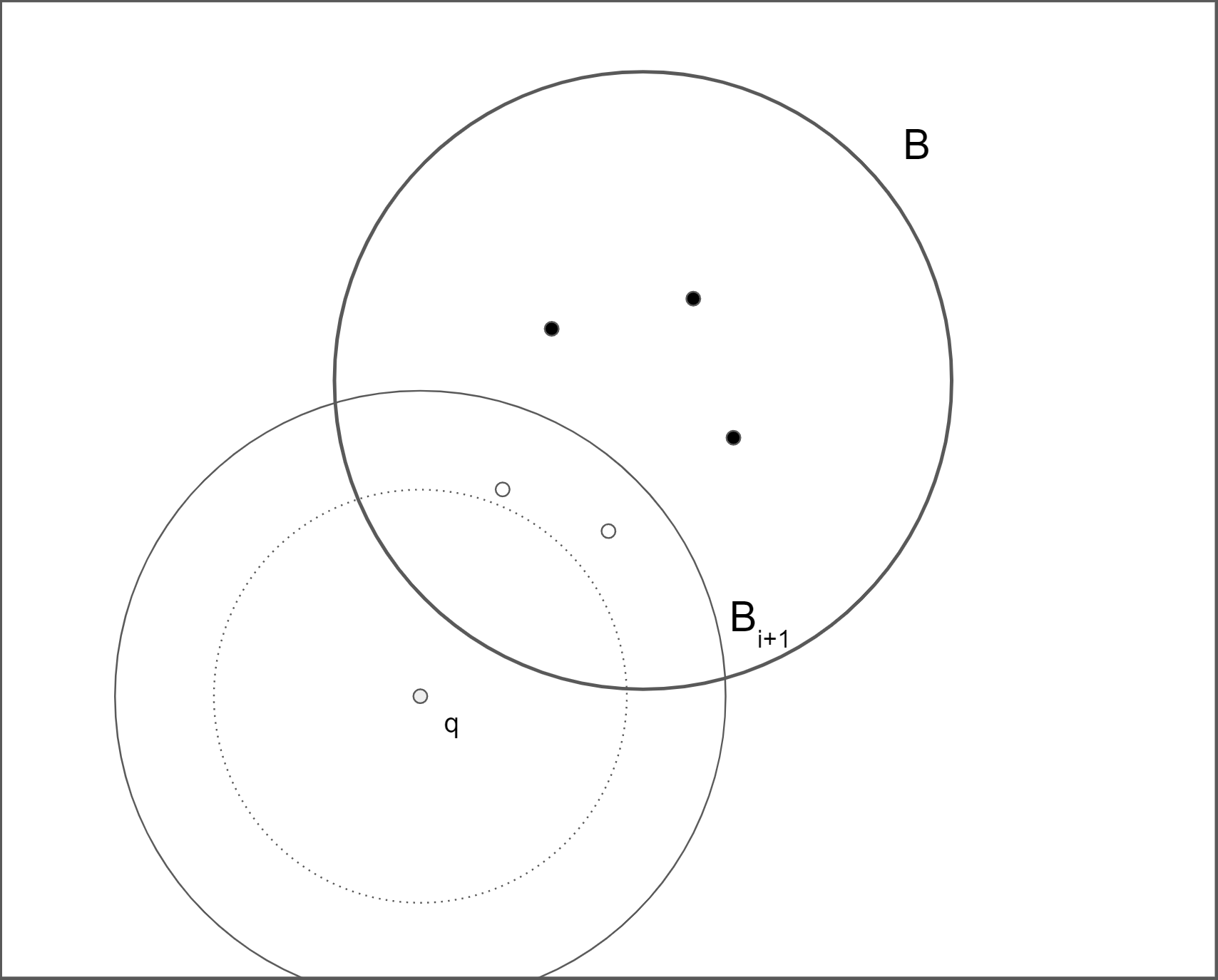}
\includegraphics[width=0.4\textwidth]{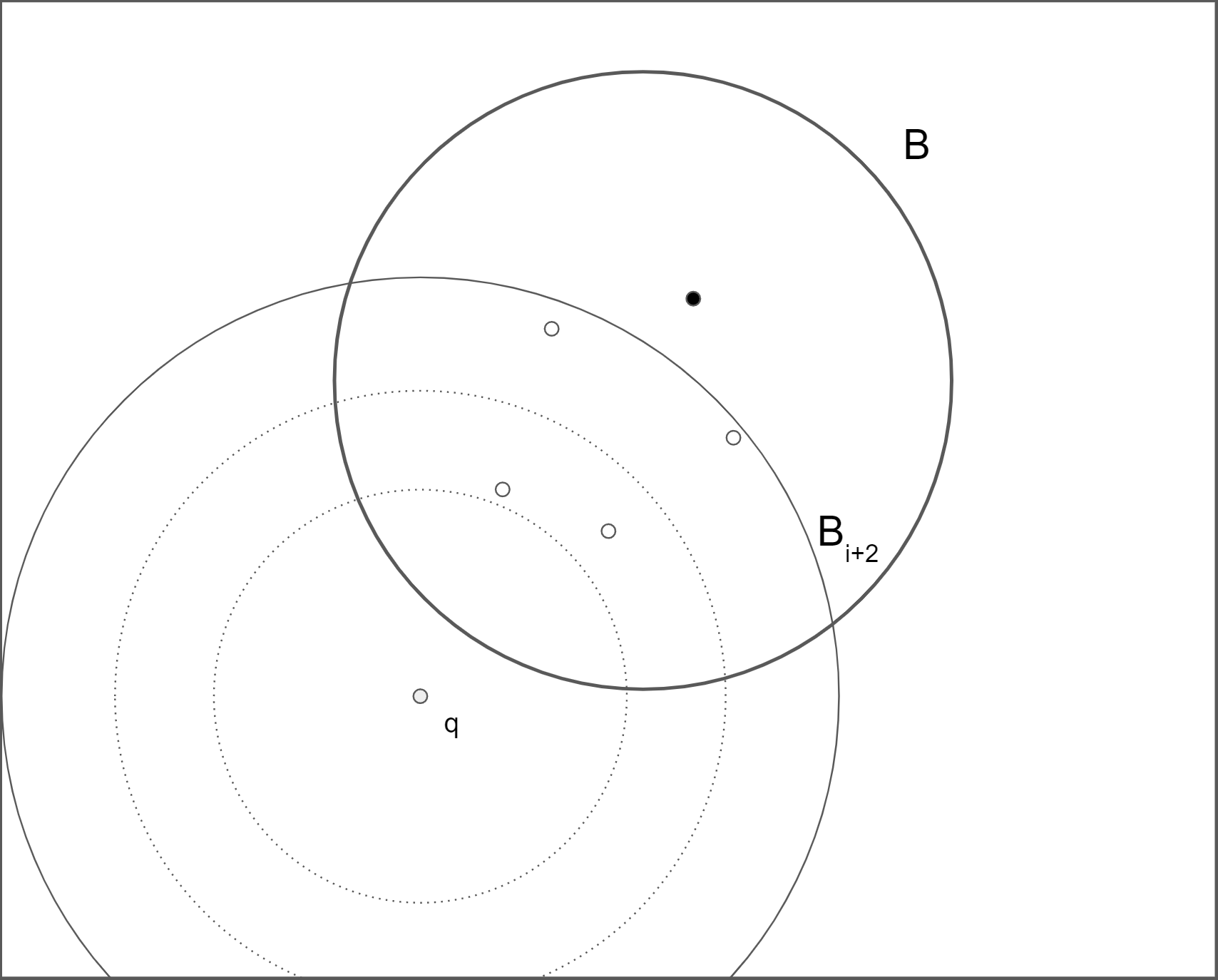}
\includegraphics[width=0.4\textwidth]{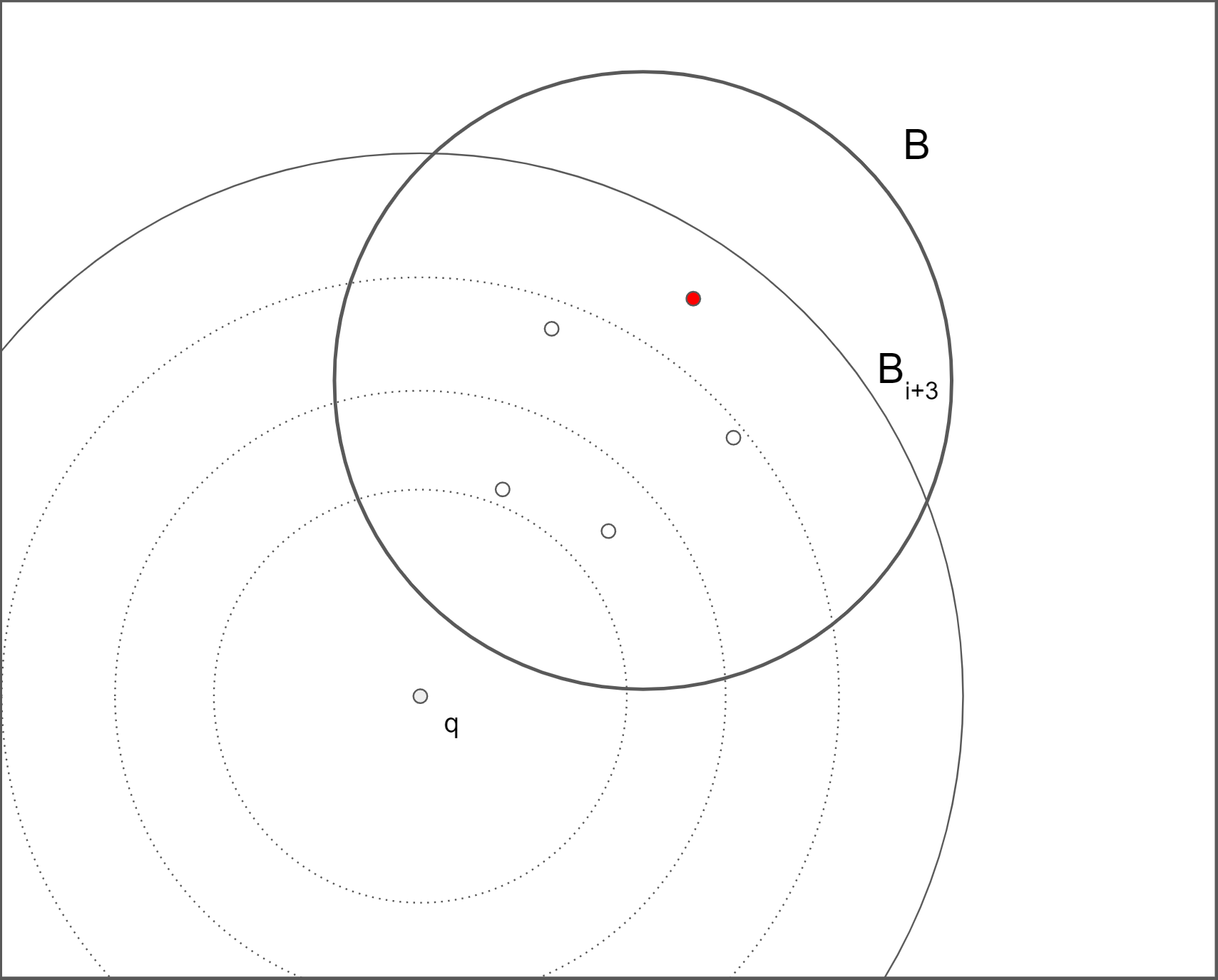}
\caption{An illustration of answering a constrained AIFP query using BD queries.
}
\label{fig-ifp}
\end{figure}

The above discussion suggests that 
a constrained AIFP data structure can be built through 
(approximate) BD query data structures, which have the following definition. 
Let $\xi > 0$ be an approximation factor. 
A data structure is called $\xi$-error BD for a point set $P$,
if given any balls 
$B_{in}$ and $B_{out}$,
it answers the following query (with high success probability):

\begin{enumerate}
\item If there exists a point in $P \cap (B_{in} \setminus B_{out})$, the data structure returns a
point in $P \cap (B_{in}(1+\xi) \setminus B_{out}((1+\xi)^{-1}))$.
\item Otherwise, it returns a
point in $P \cap (B_{in}(1+\xi) \setminus B_{out}((1+\xi)^{-1}))$
or  NULL.
\end{enumerate}

The details of how to construct a  $\xi$-error BD data structure is left to
the next subsection.
Below is 
the main result of the BD query data structure for 
$\xi > 0$, fixed constant $r_{in} > 0, r_{out} > 0$ and success probability controlling factor $0 < \delta < 1$. 

\begin{restatable}{lemma}{BD}
\label{lm-bd}
It is possible to build a $\xi$-error BD query data structure of size  $O_{\xi}(dn^{1+\rho} \log \delta^{-1})$ in
$O_{\xi}(dn^{1+\rho} \log \delta^{-1})$ time, where $0 < \rho < 1$ depends only on $\xi$. The query time of this data structure is 
$O_{\xi}(dn^{\rho} \log \delta^{-1})$.
For any pair of query balls $B_{in}$ and $B_{out}$ with radius $r_{in}$ and $r_{out}$, respectively,
the data structure answers the query with success probability at least $1 - \delta$.
\end{restatable}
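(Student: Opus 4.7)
The plan is to realize the ball-difference data structure as a light wrapper around a classical LSH scheme for Euclidean near-neighbor queries, e.g., the $p$-stable LSH of Datar et al.\ \cite{datar2004locality}, tuned to discriminate the distances $r_{in}$ and $(1+\xi)r_{in}$. Choose hash width $w$ giving the corresponding collision probabilities $p_1>p_2$, set the concatenation parameter to $k=\Theta(\log_{1/p_2}n)$ so that the expected number of far-point collisions per bucket is $O(1)$, and build $L=\Theta(n^{\rho}\log\delta^{-1})$ independent hash tables, where $\rho=\log(1/p_1)/\log(1/p_2)\in(0,1)$ is the classical LSH exponent depending only on $\xi$. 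Hashing all of $P$ into all tables yields the claimed $O_{\xi}(dn^{1+\rho}\log\delta^{-1})$ preprocessing time and space. Note that the radii $r_{in},r_{out}$ are fixed parameters of the data structure, so the tuning of $w$, $k$ and $L$ can be baked in at build time.

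At query time with balls $B_{in},B_{out}$ of the prescribed radii and centers $c_{in},c_{out}$, hash $c_{in}$ into every table, iterate over the candidates in the resulting $L$ buckets, and for each candidate $p$ compute $\|p-c_{in}\|$ and $\|p-c_{out}\|$ to test whether $p\in B_{in}(1+\xi)\setminus B_{out}((1+\xi)^{-1})$. Return the first candidate passing both tests, otherwise NULL. The ``negative'' guarantee is immediate since any returned point is explicitly verified to lie in the relaxed region. For the ``positive'' guarantee, if $P\cap(B_{in}\setminus B_{out})$ is non-empty, pick any $p^{*}$ in it; the standard LSH collision analysis gives that $p^{*}$ (which satisfies $\|p^{*}-c_{in}\|\le r_{in}$) appears in at least one of the $L$ query buckets with probability at least $1-\delta$, and $p^{*}$ automatically passes both relaxed tests because $B_{in}\subset B_{in}(1+\xi)$ and $B_{out}\supset B_{out}((1+\xi)^{-1})$. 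Hence some valid candidate (possibly $p^{*}$, possibly another close-good point) is returned.

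The main technical obstacle is the query-time bound of $O_{\xi}(dn^{\rho}\log\delta^{-1})$. The classical LSH argument already caps the expected number of \emph{far} collisions across all $L$ buckets by $O(L)$, but a priori many \emph{close} candidates could also be \emph{bad}, i.e., lying in $B_{out}((1+\xi)^{-1})$ as well, forcing repeated rejections that would break the query-time budget. To contain this I would cap the total number of candidates inspected at $\Theta(L)$ and process the points inside each bucket in a random order. A Markov-style argument, conditioned on the existence of $p^{*}$, shows that a single copy of the scheme either finds some valid candidate within the cap with constant probability or certifies (also with constant probability) that no valid candidate exists; running $\Theta(\log\delta^{-1})$ independent copies then boosts the overall success probability to $1-\delta$ while preserving the desired $O_{\xi}(dn^{\rho}\log\delta^{-1})$ query time. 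Verifying that this truncation-plus-boosting step indeed yields the bi-criteria guarantee stated in the lemma is expected to be the most delicate part of the proof.
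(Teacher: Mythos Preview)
Your plan has a genuine gap in the query-time argument, and the issue is structural rather than just technical. You hash only with respect to $c_{in}$, so your LSH filters points by proximity to $c_{in}$ alone. The Markov bound you invoke is the standard one for \emph{far} collisions, i.e., points at distance $>(1+\xi)r_{in}$ from $c_{in}$; those indeed collide with probability at most $p_2^{k}\le 1/n$, so their total expected count over all $L$ buckets is $O(L)$. But nothing in your scheme controls the number of \emph{close-but-bad} candidates: points lying in $B_{in}\cap B_{out}((1+\xi)^{-1})$. Each such point is within $r_{in}$ of $c_{in}$ and therefore collides with probability at least $p_1^{k}$, the same as $p^{*}$. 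If there are $m$ of them, the expected number of bad candidates across all $L$ buckets is $\Omega(m\cdot L\cdot p_1^{k})=\Omega(m)$, which can be $\Theta(n)$. Capping at $\Theta(L)$ examined points and randomizing the order within buckets does not rescue this: conditioned on $p^{*}$ landing in a query bucket, it is one good point among $\Theta(m/L)$ bad ones in that bucket, so the probability it appears among the first $O(1)$ examined points of that bucket is $O(L/m)$; summing over buckets still gives only $O(L/m)=O(n^{\rho-1})$ success probability when $m=\Theta(n)$. No amount of independent repetition with $O(\log\delta^{-1})$ copies fixes a base success probability that is polynomially small.

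The paper avoids this by building a \emph{hybrid} hash from two LSH families simultaneously: one $(r_{in},(1+\xi)r_{in},\mathsf{P}_1,\mathsf{P}_2)$-sensitive family $\mathcal{H}_{in}$ and one $((1+\xi)^{-1}r_{out},r_{out},\mathsf{P}_1,\mathsf{P}_2)$-sensitive family $\mathcal{H}_{out}$. Each point's label interleaves bits coming from both families, and at query time the label is formed from $o_{in}$ on the $\mathcal{H}_{in}$ positions and from $o_{out}$ (with bits flipped) on the $\mathcal{H}_{out}$ positions. A point in $B_{in}\cap B_{out}((1+\xi)^{-1})$ then fails the $\mathcal{H}_{out}$ part of the label-matching test with the same gap as a far point fails the $\mathcal{H}_{in}$ part, so \emph{all} points outside $B_{in}(1+\xi)\setminus B_{out}((1+\xi)^{-1})$ are uniformly rare collisions, and the standard $3c$-cap plus Markov argument goes through. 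To repair your plan you would need some analogous mechanism that makes the hashing itself sensitive to distance from $o_{out}$, not just a post-hoc filter.
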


Note that each BD query data structure works only for
query balls $B_{in}$ and $B_{out}$ with fixed radii 
$r_{in}$ and $r_{out}$, respectively.
This means that the constrained AIFP data structure should consist of multiple BD data structures with
different values of $r_{in}$ and $r_{out}$.

From the above discussion, we know that a constrained AIFP data structure can be built by
constructing a sequence of $\log(d_{max}/d_{min})$ BD data sturctures 
with $r_{in} := r_B$ and $r_{out}$ being  $d_{min}, (1+\xi) d_{min}, (1+\xi)^2 d_{min}, 
\ldots$.
Such a data structure will allow us to answer constrained AIFP queries using the ball peeling strategy.

Given any constants $\epsilon > 0$, $\gamma > 0$, $0 < \delta < 1$,
and constraint $(r_B,d_{min}, d_{max})$,
the following Algorithm \ref{alg-build-caifp} builds a constrained AIFP
data structure for a given point set $P$.
The data structure is simply a collection of BD query data structures.

\begin{algorithm}[th]
\caption{Build-CAIFP$(P;\epsilon,\gamma,\delta; r_{B}, d_{min}, d_{max})$}

\textbf{Input:} A $\mathbb{R}^d$ point set $P$ with cardinality $n$. 
Constants $\epsilon > 0, \gamma > 0, 0 < \delta < 1$.
Constraint tuple $(r_{B}, d_{min}, d_{max})$. \\
\textbf{Output:} A number of BD-Query data structures built with different parameters.
\label{alg-build-caifp}

\begin{algorithmic}[1]

\State{ Let $\xi = \min \{ (1-\epsilon)^{-1/2} - 1, \gamma \}$.
Construct a sequence of real numbers $r_0,r_1,r_2,\ldots,r_m$, by letting $r_0 = d_{min}$, $m$ be the integer such that $r_0 (1+\xi)^{m-1} < d_{max}$ and
$r_0 (1+\xi)^{m} \geq d_{max}$, $r_{i} = (1+\xi) r_{i-1}$ for $i = 1,2,\ldots,m$, and $\delta' = \delta/m$. 
}

\State{ \textbf{FOR}  $i = 0, 1,2,\ldots,m$, build a $\xi$-error BD query data structure for query balls with radii
$r_{in} = r_{B}$ and $r_{out} = r_i$, with query success probability at least $1-\delta'$.
}

\end{algorithmic}
\end{algorithm}

With such a collection of BD query data structures, we can answer any constrained AIFP query satisfying $(r_{B}, d_{min}, d_{max})$
by applying the ball peeling strategy mentioned before.
The algorithm is formally described as the Algorithm \ref{alg-query-caifp} below.

\begin{algorithm}[th]
\caption{Query-CAIFP$(B,q)$}

\textbf{Input:} A constrained AIFP query $(B,q)$ with constraint $(r_{B}, d_{min}, d_{max})$. \\
\textbf{Output:} A point $p_{ans}$ that is an approximate farthest point in
$B \cap Q$ to $p$, or NULL if no such point exists.
\label{alg-query-caifp}

\begin{algorithmic}[1]

\State{ Initialize variable $p_{ans}$ $\leftarrow$ NULL.
}

\noindent \textbf{Note:} In the following, we use $m$ and $r_i$ for $i=0,1,\ldots,m$ as in 
Algorithm \ref{alg-build-caifp}.

\State{ \textbf{For} $i$ from $0$ to $m$: Make a query $(B,B_{out,i})$ to the BD-Query data structure $BD_i$, by letting 
$B_{out,i} := \mathcal{B}(q, r_i)$.
If the query answer is NULL, \textbf{Return} $p_{ans}$.
Otherwise update $p_{ans}$ to be the query answer.
}

\State{\textbf{Return} $p_{ans}$.
}

\end{algorithmic}
\end{algorithm}

By some simple calculation, we know that the probability that 
all the BD queries in Algorithm \ref{alg-query-caifp} are successful is at least $1-\delta$,
and when this happens, the output point $p_{ans}$ is an $(\epsilon,\gamma)$-AIFP of $q$ in $B \cap P$.
This is summarized as the following lemma.

\begin{lemma}
\label{lm-aifp-correctness}
With probability at least $1-\delta$, Algorithm \ref{alg-query-caifp} outputs a 
point $p_{ans} \in B(1+\gamma)$ such that for any $q \in B \cap P$, 
$\norm{p_{ans} - p} \geq (1-\epsilon) \norm{q - p}$.
\end{lemma}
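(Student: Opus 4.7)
The plan is to first establish the probability bound by a straightforward union bound, and then argue correctness conditional on all BD queries succeeding. Since Algorithm \ref{alg-build-caifp} builds $m+1$ BD data structures each with failure probability at most $\delta' = \delta/m$, the probability that at least one of the at most $m+1$ queries performed by Algorithm \ref{alg-query-caifp} fails is at most $(m+1)\delta/m \leq 2\delta$; tightening the parameter $\delta'$ in the construction by a constant factor (or taking $\delta' = \delta/(m+1)$) gives exactly $1-\delta$. From here on I assume every invoked BD query returns a correct answer.

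Let $p^* \in P \cap B$ be a true farthest point from $q$. By the constraint, $d_{min} < \norm{p^*-q} \leq d_{max}$, so there is a unique index $j \in \{0,1,\ldots,m-1\}$ with $r_j < \norm{p^*-q} \leq r_{j+1}$. For every $i \leq j$, $p^* \in P \cap (B \setminus B_{out,i})$ is nonempty, so by the BD guarantee the $i$-th query returns a non-NULL point lying in $P \cap (B(1+\xi) \setminus B_{out,i}((1+\xi)^{-1}))$. In particular, iteration $j$ produces an answer $p$ with $\norm{p-q} > r_j/(1+\xi)$ and $p \in B(1+\xi)$; this value is stored in $p_{ans}$. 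Since the algorithm only overwrites $p_{ans}$ with later non-NULL answers (which again come from $B(1+\xi)$ and have distance at least $r_i/(1+\xi) \geq r_j/(1+\xi)$), the final $p_{ans}$ satisfies both $p_{ans} \in B(1+\xi)$ and $\norm{p_{ans}-q} \geq r_j/(1+\xi)$.

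The fuzziness claim is then immediate: $\xi \leq \gamma$ by construction, so $p_{ans} \in B(1+\xi) \subseteq B(1+\gamma)$. For the approximation ratio, pick any $p' \in B \cap P$. Then
\[
\norm{p'-q} \;\leq\; \norm{p^*-q} \;\leq\; r_{j+1} \;=\; (1+\xi)\, r_j \;\leq\; (1+\xi)^2 \norm{p_{ans}-q}.
\]
By the choice $\xi \leq (1-\epsilon)^{-1/2}-1$ in Algorithm \ref{alg-build-caifp}, $(1+\xi)^2 \leq (1-\epsilon)^{-1}$, hence $\norm{p_{ans}-q} \geq (1-\epsilon)\norm{p'-q}$, as required.

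The only subtlety, and what I expect to be the main point requiring care, is the boundary behavior of the ball-peeling loop: one must verify that the BD fuzziness (which might cause the $(j{+}1)$-th query to return a point in the fuzzy shell rather than NULL, or cause an earlier query to return NULL before the shell containing $p^*$ is reached) does not break the argument. The analysis above sidesteps this by only using a lower bound on the iteration at which $p_{ans}$ is set (namely iteration $j$) and the fact that every stored value is in $B(1+\xi)$; any later updates can only preserve these properties. Setting $P' := P \cap B(1+\xi)$ then witnesses that $p_{ans}$ meets the three conditions of the $(\epsilon,\gamma)$-AIFP definition, completing the proof.
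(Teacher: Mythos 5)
Your proof is correct and follows essentially the same route as the paper's: condition on all BD queries succeeding, use the BD guarantee to lower-bound $\norm{p_{ans}-q}$ by $r_j/(1+\xi)$ and upper-bound the true farthest distance by $r_{j+1}=(1+\xi)r_j$, and conclude via $(1+\xi)^2\le(1-\epsilon)^{-1}$ and $\xi\le\gamma$. Your anchoring on the shell index $j$ of the true farthest point (rather than the paper's case split on where the algorithm returns) is a minor reorganization, and your observation about the $(m+1)$ versus $m$ union bound is a fair, easily fixed quibble that the paper glosses over.
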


Next we analyze the space/time complexity of the AIFP scheme.
The query data structure is a combination of
$m = O_{\epsilon,\gamma}(\log (d_{max}/d_{min}))$ BD data structures.
From the discussion of BD data structures (see Lemma \ref{lm-bd}),
every BD query data structure we build has space/time complexity $O_{\epsilon,\gamma}(dn^{1+\rho} \log \delta^{-1})$
where $0<\rho<1$ depends only on $\epsilon,\gamma$. Each BD query takes $O_{\epsilon,\gamma}(dn^{\rho} \log \delta^{-1})$ time.
Lemma \ref{lm-aifp} then follows.

\subsection{The BD Query Scheme}
\label{sec-bd}

In this subsection we present the BD query scheme.
To our best knowledge, this is the first theoretical result to consider the BD range search problem. 
A very special case of BD query called the ``annulus queries'' where the two balls are co-centered is studied in \cite{aumuller2018distance}.
Nonetheless, the technique is not directly applicable to general BD queries.
Our BD range query scheme is based on the classic Locality Sensitive Hashing (LSH) technique \cite{har2012approximate,andoni2014beyond,datar2004locality}
which has been a somewhat standard technique for solving the proximity problems in high dimensional space. 
The main idea of LSH is to utilize a family of hash functions (called an LSH family)
 that have some interesting properties.
Given two points $p$ and $q$ in $\mathbb{R}^d$,
if we randomly pick a function $h$ from the LSH family,
the probability that the event of $h(p) = h(q)$ happens
will be high if $\norm{p-q}$ is smaller than a threshold value, and the probability for the same event will be lower if $\norm{p-q}$ is larger.
Such a property of the LSH family allows us to develop hashing and bucketing based schemes to solve similarity search problems in high dimensional space.
Below is 
the definition of an LSH family.

\begin{definition}
Let $0<r_1<r_2$ and $1 > \mathsf{P}_1 > \mathsf{P}_2 >0$ be any real numbers.
A family $\mathcal{H} = \{ h : \mathbb{R}^d \rightarrow U \}$, where $U$ can be any set of
objects, is called $(r_1,r_2,\mathsf{P}_1,\mathsf{P}_2)$-sensitive, if for any $p,q \in \mathbb{R}^d$.
\begin{enumerate}
    \item if $\norm{p-q} \leq r_1$, then $\text{Pr}_{\mathcal{H}}[h(p) = h(q)] \geq \mathsf{P}_1$,
    \item if $\norm{p-q} > r_2$, then $\text{Pr}_{\mathcal{H}}[h(p) = h(q)] \leq \mathsf{P}_2$.
\end{enumerate}
\end{definition}

It was shown in \cite{datar2004locality} that for any dimension $d$ and any $r>0, c>1$,
an $(r,cr,\mathsf{P}_1,\mathsf{P}_2)$-sensitive family $\mathcal{H}$ exists,
where $1 > \mathsf{P}_1 > \mathsf{P}_2 > 0$ depends only on $c$.
Every hash function $h(p) : \mathbb{R}^d \rightarrow \mathbb{Z}$ maps a point $p$ in $\mathbb{R}^d$ to an integer,
and $h(p)$ has the form $h(p) = \lfloor \frac{a \cdot p + b}{r} \rfloor$
for some $\mathbb{R}^d$ vector $a$ and integers $b,r$.
It takes $O(d)$ time to sample a hash function $h$ from such a family and compute $h(p)$.
Our data structure will make use of two such families.
Let $\mathcal{H}_{in}$ be an $(r_{in},(1+\xi)r_{in},\mathsf{P}_1, \mathsf{P}_2)$-sensitive family, 
and $\mathcal{H}_{out}$ be a $((1+\xi)^{-1}r_{out},r_{out},\mathsf{P}_1$, $\mathsf{P}_2)$-sensitive family,
where \textbf{$0 < \mathsf{P}_1, \mathsf{P}_2 < 1$ are constants depending only on $\xi$}, 
as described in  \cite{datar2004locality}.
Given any BD-query $(B_{in}, B_{out})$ with the centers of the balls being $o_{in}, o_{out}$ respectively,
the family $\mathcal{H}_{in}$ helps us to identify points that are close enough to $o_{in}$
(and therefore lie in $B_{in}$),
and $\mathcal{H}_{out}$ helps us to identify points that are far away enough from $o_{out}$
(and therefore lie outside of $B_{out}$).

\noindent\textbf{High level idea:} Our approach is based on a novel bucketing and query scheme that utilizes the properties of the LSH family.
Before presenting the technical details, We first illustrate the high level idea.
For convenience, we assume for now that the functions in $\mathcal{H}_{in}$ and $\mathcal{H}_{out}$ have range $\{0,1\}$ 
(this is achievable by some simple modification to these hash function families).
We use a randomized process to create a hybrid random hash function $S(\cdot)$ that 
maps any point in $\mathbb{R}^d$ to a bit string.
Such a function $S(\cdot)$ is a concatenation of a number of hash functions
drawn from $\mathcal{H}_{in}$ and $\mathcal{H}_{out}$.
Given $p \in \mathbb{R}^d$, $S(\cdot)$ applies the aforementioned hash functions (drawn from 
$\mathcal{H}_{in}$ and $\mathcal{H}_{out}$) on $p$ to obtain a bit-string.
With such a function $S(\cdot)$, 
consider comparing the bit-strings of $S(p), S(q)$ for points $p,q \in \mathbb{R}$.
Intuitively, based on the properties of $\mathcal{H}_{in}$ and $\mathcal{H}_{out}$, we know that 
if $p,q$ are close enough, $S(p)$ and $S(q)$ should have many common bits in positions
that are determined by functions from $\mathcal{H}_{in}$.
Contrarily, if $p,q$ are far away,
$S(p)$ and $S(q)$ should have only a few common bits in positions
that are determined by functions from $\mathcal{H}_{out}$.

For every point $p \in P$, we use $S(p)$ to compute a bit-string label for $p$,
and put $p$ into the corresponding buckets ({\em i.e.,} labeled with the same bit-strings).
To answer a given BD query $B_{in},B_{out}$ with centers of the balls being $o_{in},o_{out}$, respectively,
we compute $S(o_{in})$ and $S(o_{out})$. 
Note that, based on the above discussion,
we know that if a point $p$ satisfies the condition of $p \in B_{in} \setminus B_{out}$,
then $S(p)$ and $S(o_{in})$ should have many common bits in the positions determined by $\mathcal{H}_{in}$,
and $S(p)$ and $S(o_{out})$ should have few common bits in the positions determined by $\mathcal{H}_{out}$.
Thus, by counting the number of common bits in the labels, we can then locate buckets that are likely to contain points close to $o_{in}$ and far away from $o_{out}$, \emph{i.e.,} points are likely to be in $B_{in} \setminus B_{out}$.
To achieve the desired outcome, we will create multiple set of buckets using multiple random functions $S(p)$.

\noindent\textbf{Details of the Algorithms:}
After understanding the above general idea,
we now present the data structure and the query algorithm along with the analysis.
Let $\mathsf{P}'_1=(1+\mathsf{P}_1)/2,\mathsf{P}'_2=(1+\mathsf{P}_2)/2,\eta = (\mathsf{P}'_1-\mathsf{P}'_2)/3$,  
$a= \lceil (2 \mathsf{P}'_1 \ln 3)/\eta^2 \rceil$.
$\mathsf{P}''_1 = 2^{-2a} \cdot 4/9$, $\mathsf{P}''_2 = 2^{-2a}/3, b = \lceil \log_{1/\mathsf{P}''_2} n \rceil$, $\rho = \frac{\ln 1/\mathsf{P}''_1}{\ln 1/\mathsf{P}''_2}$,
and $c = \lceil n^\rho / \mathsf{P}''_1 \rceil$.
Let $F_{Z}$ be a function that maps
every element in $\mathbb{Z}$ randomly to $0$ or $1$, each with probability $1/2$.
The following Algorithm \ref{alg-buildBD}
shows how to construct a  $\xi$-error BD range query data structure for 
any point set $P$ and radii $r_{in}$ and $r_{out}$.
The data structure consists of $c$ groups of buckets, each
created using a random function $S(p)$ that maps a point to a bit-string of total length $2ab$.

\begin{algorithm}[th]
\caption{CreateBuckets$(P, \xi, r_{in}, r_{out})$}

\textbf{Input:} A point set $P$. 
Parameters $\xi > 0, ,r_{in} > 0, r_{out} > 0$.\\
\textbf{Output:} $\mathcal{G}_1, \mathcal{G}_2,\ldots \mathcal{G}_c$.
Each $\mathcal{G}_i$ is a collection of \emph{buckets} (\emph{i.e.,} sets of points of $P$).
Each bucket $G \in \mathcal{G}_i$ is labeled with \emph{a bit-string}, which is a concatenation of sub-bit-strings
$\textsc{lab}_{in}(G,1)$, $\textsc{lab}_{out}(G,1)$, $\textsc{lab}_{in}(G,2)$, $\textsc{lab}_{out}(G,2)$, $\ldots$,
$\textsc{lab}_{in}(G,b)$, $\textsc{lab}_{out}(G,b)$.
For every $p \in P$ and $i=1,2,\ldots,c$, $p$ appears in one of the buckets in
$\mathcal{G}_i$.

\label{alg-buildBD}

\begin{algorithmic}[1]

\State{ Initialize $\mathcal{G}_i, i=1,2,\ldots,c$, as empty sets.
Each $\mathcal{G}_i$ will be used as a container for buckets.
}

\State{ Randomly sample $abc$ functions from family $\mathcal{H}_{in}$,
and also $abc$ functions from family $\mathcal{H}_{out}$.
Denote these functions as $h_{in,i,j,k}$ and $h_{out,i,j,k}$,
for integers $1 \leq i \leq a, 1 \leq j \leq b, 1 \leq k \leq c$.
For every $h_{in,i,j,k}$, $h_{out,i,j,k}$ and every $p \in Q$,
compute $F_Z(h_{in,i,j,k}(p))$ and $F_Z(h_{out,i,j,k}(p))$.
}

\State{ \textbf{FOR} $k$ from 1 to $c$:
\begin{itemize}
\item For every point $p \in P$, we create a bit-string $S(p)$ that concatenates
$\textsc{lab}_{in}(p,1),\textsc{lab}_{out}(p,1),\textsc{lab}_{in}(p,2),\textsc{lab}_{out}(p,2),\ldots,
\textsc{lab}_{in}(p,b),\textsc{lab}_{out}(p,b)$:
For $j$ from $1$ to $b$, let $\textsc{lab}_{in}(p,j)$, $\textsc{lab}_{out}(p,j)$ be a pair of 
bit-strings of length $a$,
each with the $i$-th bit being $F_Z(h_{in,i,j,k}(p))$, $F_Z(h_{out,i,j,k}(p))$, respectively,
for $i=1,2,\ldots a$.
\item \textbf{IF} there is already a bucket $G$ in $\mathcal{G}_k$ with
label $S(G) = S(p)$, \textbf{DO}:
Put $p$ into $G$.
\item \textbf{ELSE, DO}: Create a new bucket $G$ and put $G$ into $\mathcal{G}_k$,
set the label of $G$ as $S(G) = S(p)$. Put $p$ into $G$.
\end{itemize}
}

\end{algorithmic}
\end{algorithm}

With the BD range query data structure created by the Algorithm \ref{alg-buildBD}, we can use 
the Algorithm \ref{alg-queryBD} below to answer a BD range query for any given pair of 
balls $(B_{in}$ and $B_{out})$.
The main idea of the algorithm compute a bit-string label $S$ for the query,
then examine points in buckets with labels that satisfy certain properties
(\emph{e.g.} should have enough common bits with $S$).
Due to the fact that we label these buckets using functions from two LSH families, it can be shown that the chance for us to find
a point in $B_{in}(1+\xi)  \setminus B_{out}((1+\xi)^{-1})$ 
from one of the examined buckets
will be high if there exists a point in $B_{in} \setminus B_{out}$.

\begin{algorithm}[th]
\caption{BD-Query$(B_{in}, B_{out})$}

\textbf{Input:} Two $\mathbb{R}^d$ balls: $B_{in}$ with center $o_{in}$ and radius $r_{in}$,
and $B_{out}$ with center $o_{out}$ and radius $r_{out}$. 
Assume that collections $\mathcal{G}_1, \mathcal{G}_2,\ldots \mathcal{G}_c$
 have already been generated by algorithm CreateBucket. \\
\textbf{Output:} A point $p \in Q$, or NULL. \\
Note: The algorithm  probes a number of points in the buckets of $\mathcal{G}_1, \mathcal{G}_2,\ldots \mathcal{G}_c$ until
a suitable point is found as the output, or terminates and returns NULL when no such point can be found or the number of probes exceeds a certain limit.
\label{alg-queryBD}

\begin{algorithmic}[1]

\State{ Do the following, but terminate and return NULL when $3c$ points are examined: 
\textbf{FOR} $k$ from $1$ to $c$:
\begin{itemize}
\item Create a bit string $S$ that concatenates
$\textsc{lab}_{in}(o_{in},1)$, $\textsc{lab}_{out}(o_{out},1)$, $\textsc{lab}_{in}(o_{in},2)$, $\textsc{lab}_{out}(o_{out},2)$, $\ldots$, 
$\textsc{lab}_{in}(o_{in},b)$, $\textsc{lab}_{out}(o_{out},b)$:
For $j$ from $1$ to $b$, 
let $\textsc{lab}_{in}(o_{in},j)$, $\textsc{lab}_{out}(o_{out},j)$ be a pair of bit strings of length $a$,
with the $i$-th bit of each string being $F_Z(h_{in,i,j,k}(o_{in}))$, $1 - F_Z(h_{out,i,j,k}(o_{out}))$, respectively, for $i=1,2,\ldots a$.
\item Create a bit string $S'$ that concatenates
$\textsc{lab}_{in}'(1),\textsc{lab}_{out}'(1),\textsc{lab}_{in}'(2),\textsc{lab}_{out}'(2),\ldots,$
$\textsc{lab}_{in}'(b),\textsc{lab}_{out}'(b)$:
Each of these sub bit strings is a random bit string of length $a$, drawn uniformly randomly from $\{ 0,1 \}^a$.
\item If there exists some integer $j$ such that
$\text{COM}(\textsc{lab}_{in}(o_{in},j), \textsc{lab}_{in}'(j)) < t_1$ or
$\text{COM}(\textsc{lab}_{out}(o_{out},j), \textsc{lab}_{out}'(j)) < t_2$,
where $\text{COM}(x,y)$ counts the number of common digits of 2 bit strings $x,y$,
$t_1 = \mathsf{P}'_1 a - \eta a, t_2 = (1-\mathsf{P}_2)a/2 - \eta a$
\textbf{CONTINUE.}
\item If there is no bucket in $\mathcal{G}_k$ that is labeled with $S'$,
\textbf{CONTINUE.}
\item Examine all the
points in the bucket $G$ in $\mathcal{G}_k$ that is labeled with $S'$.
Stop when there a point $p \in G$ such that
$p \in B_{in}(1+\xi)  \setminus B_{out}((1+\xi)^{-1})$.
Return $p$.
\end{itemize}
}

\State{Return NULL if no point is returned in the above process.}

\end{algorithmic}
\end{algorithm}

In the following we show the correctness of Algorithm \ref{alg-queryBD}.
Consider the for loop in Step 1 of Algorithm \ref{alg-queryBD} when answering
a query $(B_{in},B_{out})$.
Using the notations from Algorithm \ref{alg-queryBD},
for any $k$ from 1 to $c$ in Step 1, we have the following lemma,
which shows
that if a point in $P$ lies in (or outside of) the query range,
the number of common bits between its bucket label and the label computed from the query
would likely (or unlikely) be high, respectively.

\begin{lemma}
\label{lm-bd-1}
Let $p \in P$ be a point that lies in $B_{in} \setminus B_{out}$, and $q \in P$ be a point that does NOT lie in
$B_{in}(1+\xi)  \setminus B_{out}((1+\xi)^{-1})$.
Let 
$S(p) = \textsc{lab}_{in}(p,1),\textsc{lab}_{out}(p,1),\textsc{lab}_{in}(p,2),\textsc{lab}_{out}(p,2)$, $\ldots,\textsc{lab}_{in}(p,b),\textsc{lab}_{out}(p,b)$ and  
$S(q) = \textsc{lab}_{in}(q,1),\textsc{lab}_{out}(q,1),\textsc{lab}_{in}(q,2),\textsc{lab}_{out}(q,2),\ldots$, $\textsc{lab}_{in}(q,b),\textsc{lab}_{out}(q,b)$ 
be the labels of the bucket
in $\mathcal{G}_k$ that contains $p$ and $q$, respectively.
For any $j=1,2,\ldots,b$, the following holds.
\begin{itemize}
\item $\text{Pr}[\text{COM}(\textsc{lab}_{in}(p,j),\textsc{lab}_{in}(o_{in},j)) \geq t_1 \wedge \text{COM}(\textsc{lab}_{out}(p,j),\textsc{lab}_{out}(o_{out},j)) \geq t_2] \geq 4/9$.
\item $\text{Pr}[\text{COM}(\textsc{lab}_{in}(q,j),\textsc{lab}_{in}(o_{in},j)) \geq t_1 \wedge \text{COM}(\textsc{lab}_{out}(q,j),\textsc{lab}_{out}(o_{out},j)) \geq t_2] \leq 1/3$.
\end{itemize}
\end{lemma}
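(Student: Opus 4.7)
The plan is to treat each of the $a$ bits in $\textsc{lab}_{in}(\cdot,j)$ and $\textsc{lab}_{out}(\cdot,j)$ as an independent Bernoulli trial whose mean is determined by the LSH property together with the random labeling $F_Z$, and then invoke standard Chernoff/Hoeffding concentration. The first step is to compute per-bit agreement probabilities: for a hash $h$ drawn from either LSH family and any pair $x,y$, writing $\pi(x,y) = \Pr[h(x)=h(y)]$, the uniformity of $F_Z$ on $\mathbb{Z}$ gives $\Pr[F_Z(h(x)) = F_Z(h(y))] = \tfrac{1}{2}(1+\pi(x,y))$. Hence the $i$-th bit of $\textsc{lab}_{in}(p,j)$ matches the $i$-th bit of $\textsc{lab}_{in}(o_{in},j)$ with probability $\tfrac{1}{2}(1+\pi_{in}(p,o_{in}))$, while the $i$-th bit of $\textsc{lab}_{out}(p,j)$ matches the query's out-bit (which has been flipped) with probability $\tfrac{1}{2}(1-\pi_{out}(p,o_{out}))$. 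Because $h_{in,i,j,k}$ and $h_{out,i,j,k}$ are independently sampled from different families, the $a$ ``in'' bits and the $a$ ``out'' bits split into two independent groups of i.i.d.\ Bernoulli variables.

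For the first bullet I would use that $p \in B_{in}$ forces $\pi_{in}(p,o_{in}) \geq \mathsf{P}_1$, giving a per-bit ``in''-mean of at least $\mathsf{P}'_1$, and that $p \notin B_{out}$ forces $\pi_{out}(p,o_{out}) \leq \mathsf{P}_2$, giving a per-bit ``out''-mean of at least $(1-\mathsf{P}_2)/2$. The thresholds $t_1 = \mathsf{P}'_1 a - \eta a$ and $t_2 = (1-\mathsf{P}_2)a/2 - \eta a$ sit exactly $\eta a$ below their respective means, so a multiplicative Chernoff estimate together with $a \geq (2\mathsf{P}'_1\ln 3)/\eta^2$ drives each lower-tail failure probability below $e^{-a\eta^2/(2\mathsf{P}'_1)} \leq 1/3$. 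Independence of the ``in'' and ``out'' groups then yields a joint success probability of at least $(2/3)^2 = 4/9$.

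For the second bullet I would split on how $q$ fails the sandwich condition. If $\|q - o_{in}\| > (1+\xi)r_{in}$, then $\pi_{in}(q,o_{in}) \leq \mathsf{P}_2$ and the per-bit ``in''-agreement mean is at most $\mathsf{P}'_2 = \mathsf{P}'_1 - 3\eta$, so $t_1$ sits $2\eta a$ above this mean and Hoeffding gives $\Pr[\text{COM}(\textsc{lab}_{in}(q,j),\textsc{lab}_{in}(o_{in},j)) \geq t_1] \leq e^{-8\eta^2 a} \leq 1/3$. If instead $\|q - o_{out}\| \leq r_{out}/(1+\xi)$, then $\pi_{out}(q,o_{out}) \geq \mathsf{P}_1$ and the per-bit ``out''-agreement mean is at most $(1-\mathsf{P}_1)/2 = (1-\mathsf{P}_2)/2 - 3\eta$; the identical Hoeffding estimate yields $\Pr[\text{COM}(\textsc{lab}_{out}(q,j),\textsc{lab}_{out}(o_{out},j)) \geq t_2] \leq 1/3$. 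Since at least one of the two cases applies to any $q$ violating the sandwich, the conjunction is bounded by $1/3$ as claimed.

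The main technical care I anticipate is the bookkeeping around the two-sided concentration estimates: one must check, for both a lower tail (to reach $2/3$) and an upper tail (to reach $1/3$), that the single choice $a = \lceil (2\mathsf{P}'_1\ln 3)/\eta^2 \rceil$ suffices under the weakest applicable mean, including the ``out'' mean as small as $(1-\mathsf{P}_2)/2$, which requires the elementary observation $\mathsf{P}_1 + \mathsf{P}_2 \geq 0$. A secondary subtlety is to justify independence of the ``in'' and ``out'' Bernoulli groups; this reduces to observing that $h_{in,\cdot,j,k}$ and $h_{out,\cdot,j,k}$ are sampled independently, so that the shared table $F_Z$ is evaluated at independent, almost surely distinct integer arguments on the two sides and thus delivers independent uniform coins.
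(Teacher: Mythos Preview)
Your proposal is correct and follows essentially the same route as the paper: compute per-bit agreement probabilities via the identity $\Pr[F_Z(h(x))=F_Z(h(y))]=\tfrac12(1+\pi(x,y))$, invoke the LSH guarantees to bound $\pi$ in each regime, apply a concentration bound to each of the two length-$a$ blocks, and combine by independence to get $(2/3)^2=4/9$ for $p$ and a single $1/3$ bound on one factor for $q$. The only cosmetic difference is that for the upper tail in the second bullet the paper uses a multiplicative Chernoff bound $\Pr[X\ge \mathsf{P}'_2 a+\eta a]\le e^{-(\eta a)^2/(2\mathsf{P}'_2 a+\eta a/3)}$ and then observes $\mathsf{P}'_2 a+\eta a<t_1$, whereas you go directly with Hoeffding at the gap $2\eta a$; both are adequate given $a=\lceil (2\mathsf{P}'_1\ln 3)/\eta^2\rceil$. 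Your explicit flag about the shared table $F_Z$ and independence of the ``in'' versus ``out'' coin flips is a point the paper glosses over, so your treatment is if anything slightly more careful.
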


\begin{proof}
Since $p \in B_{in} \setminus B_{out}$, we have $\norm{p - o_{in}} \leq r_{in}$ and $\norm{p - o_{out}} \geq r_{out}$.
For any hash function $h_1 \in \mathcal{H}_{in}$ and $h_2 \in \mathcal{H}_{out}$,
$\text{Pr}[h_1(p) = h_1(o_{in})] \geq \mathsf{P}_1$ and $\text{Pr}[h_2(p) = h_2(o_{out})] \leq \mathsf{P}_2$.
Thus, we have $\text{Pr}[F_Z(h_1(p)) = F_Z(h_1(o_{in}))] \geq (\mathsf{P}_1+1)/2$ and $\text{Pr}[F_Z(h_2(p)) = 1-F_Z(h_2(o_{out}))] \leq (1-\mathsf{P}_2)/2$.
This means that for any $i=1,2\ldots,a$,
the probability that the $i$-th bit of $\textsc{lab}_{in}(p,j)$ is the same as that of $\textsc{lab}_{in}(o_{in},j)$
is at least $\mathsf{P}'_1 = (\mathsf{P}_1+1)/2$.
Since the hash functions to determine each of the bits are drawn independently, an estimation of $X = \text{COM}(\textsc{lab}_{in}(p,j),\textsc{lab}_{in}(o_{in},j))$
can be obtained by $\text{Pr}[F_Z(h_1(p)) = F_Z(h_1(o_{in}))] \geq (\mathsf{P}_1+1)/2$
using the concentration inequalities for binomial distributions.
Using a variant of the Chernoff inequalities from \cite{mitzenmacher2017probability}, we have
\[
\text{Pr}[X \leq \mathsf{P}'_1 a - \eta a] \leq e^{-(\eta a)^2/(2 \mathsf{P}'_1 a)}.
\]
From the definition of the parameters, 
we know that $\text{Pr}[X \leq \mathsf{P}'_1 a - \eta a] \leq 1/3$ (by simple calculation).
Thus, we have $\text{Pr}[X \geq t_1] \geq 2/3$.

Let $Y = \text{COM}(\textsc{lab}_{out}(p,j),\textsc{lab}_{out}(o_{out},j))$. From $\text{Pr}[F_Z(h_2(p)) = 1-F_Z(h_2(o_{out}))] \leq (1-\mathsf{P}_2)/2$
and using a similar argument as above, we can also obtain $\text{Pr}[Y \geq t_2] \geq 2/3$ (the details are omitted).
Since the hash functions are drawn independently,
we have $\text{Pr}[X \geq t_1 \wedge Y \geq t_2] \geq 4/9$.

In the following we discuss the case that $q \not\in B_{in}(1+\xi) \setminus B_{out}((1+\xi)^{-1})$.
This means either $\norm{q - o_{in}} \geq (1+\xi)r_{in}$ or $\norm{q - o_{out}} \leq (1+\xi)^{-1} r_{out}$.
We first consider the case $\norm{q - o_{in}} \geq (1+\xi)r_{in}$.
For any hash function $h_1 \in \mathcal{H}_{in}$,
$\text{Pr}[h_1(q) = h_1(o_{in})] \leq \mathsf{P}_2$.
Thus, we have $\text{Pr}[F_Z(h_1(q)) = F_Z(h_1(o_{in}))] \leq (\mathsf{P}_2+1)/2$.
This means that for any $i=1,2\ldots,a$,
the probability that the $i$-th bit of $\textsc{lab}_{in}(q,j)$ is  the same as that of $\textsc{lab}_{in}(o_{in},j)$
is at most $\mathsf{P}'_2 = (\mathsf{P}_2+1)/2$.
Again, we use a concentration inequality to obtain an estimation of $X = \text{COM}(\textsc{lab}_{in}(q,j),\textsc{lab}_{in}(o_{in},j))$.
Using a variant of the Chernoff inequalities from \cite{mitzenmacher2017probability}, we have
\[
\text{Pr}[X \geq \mathsf{P}'_2 a + \eta a] \leq e^{-(\eta a)^2/(2 \mathsf{P}'_2 a + \eta a/3)}.
\]
Note that $a = (2\mathsf{P}'_1 \ln 3) / \eta^2 \geq ((2\mathsf{P}'_2 + \eta/3) \ln 3) / \eta^2$,
which implies that $e^{-(\eta a)^2/(2 \mathsf{P}'_2 a + \eta a/3)} \leq 1/3$ (by simple calculation).
Thus, we have $\text{Pr}[X \geq \mathsf{P}'_2 a + \eta a] \leq 1/3$.
Also, since $\mathsf{P}'_2 a + \eta a < \mathsf{P}'_1 a - \eta a = t_1$,
we get $\text{Pr}[X \geq t_1] \leq 1/3$.
This immediately implies that 
$\text{Pr}[\text{COM}(\textsc{lab}_{in}(q,j),\textsc{lab}_{in}(o_{in},j)) \geq t_1 \wedge \text{COM}(\textsc{lab}_{out}(q,j),\textsc{lab}_{out}(o_{out},j)) \geq t_2] \leq 1/3$.

The argument for the case $\norm{q - o_{out}} \leq (1+\xi)^{-1} r_{out}$ is similar. Thus, we omit it here.
This completes the proof. 
\end{proof}

From the above lemma, we can conclude the following by basic calculation.
For any $k$ from 1 to $c$ in Step 1 of Algorithm \ref{alg-queryBD} (if the loop is actually executed), 
let $p \in P$ be a point that lies in $B_{in} \setminus B_{out}$,
and $q \in P$ be a point that does NOT lie in
$B_{in}(1+\xi)  \setminus B_{out}((1+\xi)^{-1})$,
we have the following.

\begin{lemma}
\label{lm-bd-2}
Let $G_p$ and $G_q$ be the buckets in $\mathcal{G}_k$ that contain $p$ and $q$, respectively.
The probability for $G_p$ to be examined is no smaller than $2^{-2ab} (4/9)^b = (\mathsf{P}''_1)^b$,
and the probability for the event ``ALL such $G_p$ for $k$ from 1 to $c$ are NOT examined'' is at most
$(1-(\mathsf{P}''_1)^b)^c \leq 1/e$.
The probability for $G_q$ to be examined is no larger than $2^{-2ab} (1/3)^b = (\mathsf{P}''_2)^b \leq 1/n$.
\end{lemma}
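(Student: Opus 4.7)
The plan is to decompose each ``bucket is examined'' event into two cleanly separable sub-events---an \emph{index hit}, namely $S' = S(G)$ so that the algorithm even looks at bucket $G$, and a \emph{threshold pass}, namely the $\text{COM}$ checks comparing the query string and $S'$ succeed---and then exploit that $S'$ is drawn uniformly from $\{0,1\}^{2ab}$ independently of every hash-function draw used to form the bucket labels. The key observation is that on the event $\{S' = S(G)\}$ each sub-block of $S'$ coincides with the corresponding sub-block of the label of any point residing in $G$, so the $\text{COM}$ checks reduce exactly to the events already bounded in Lemma \ref{lm-bd-1}.

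For the lower bound on $\text{Pr}[G_p \text{ examined in } \mathcal{G}_k]$, I would condition on the hash-function draws and factor this probability as $\text{Pr}[S' = S(p)] \cdot \text{Pr}[\text{COM-conditions hold on } S(p)]$; the first factor equals $2^{-2ab}$ by uniformity of $S'$, and the second factor is at least $(4/9)^b$ by applying the first inequality of Lemma \ref{lm-bd-1} independently to each of the $b$ blocks $\textsc{lab}_{in}(p,j),\textsc{lab}_{out}(p,j)$ (the hash families indexed by different $j$ are drawn independently). Multiplying gives the claimed $(\mathsf{P}''_1)^b$. Since the $c$ groups $\mathcal{G}_1,\ldots,\mathcal{G}_c$ use independent hash draws and independent $S'$ samples, the failure events across $k$ are independent and the product form $(1-(\mathsf{P}''_1)^b)^c$ is immediate. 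To upgrade to $\leq 1/e$ I will plug in the parameter definitions: $b \leq \log_{1/\mathsf{P}''_2}n + 1$ together with $\rho = \ln(1/\mathsf{P}''_1)/\ln(1/\mathsf{P}''_2)$ gives $(\mathsf{P}''_1)^b \geq \mathsf{P}''_1 \cdot n^{-\rho}$, and since $c \geq n^\rho/\mathsf{P}''_1$, the product $c(\mathsf{P}''_1)^b \geq 1$, at which point $(1-x)^c \leq e^{-cx}$ closes the estimate.

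For the upper bound on $\text{Pr}[G_q \text{ examined in } \mathcal{G}_k]$, the argument is symmetric but now invokes the second inequality of Lemma \ref{lm-bd-1}: the conjunction of threshold events evaluated on $S(q)$ has probability at most $(1/3)^b$ (again by across-block independence), so together with the $2^{-2ab}$ factor from $\{S' = S(q)\}$ we get $(\mathsf{P}''_2)^b$, and $b \geq \log_{1/\mathsf{P}''_2}n$ immediately yields $(\mathsf{P}''_2)^b \leq 1/n$.

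The point where I expect care to be needed---and the main conceptual obstacle---is the event-decomposition itself. The $\text{COM}$ tests in Algorithm \ref{alg-queryBD} are syntactically written against the uniform string $S'$, not against $S(p)$ or $S(q)$, and one must notice that these tests only ever affect whether the \emph{particular} bucket $G_p$ or $G_q$ gets examined through the conjunction with $\{S' = S(\cdot)\}$; on that conjunction the tests become tests on the bucket's own label, which is exactly what Lemma \ref{lm-bd-1} bounds. Once this reframing is made---and the full independence of $S'$ from the hash-derived labels is observed so that the two factors genuinely multiply---the remainder of the proof is arithmetic with the LSH parameters.
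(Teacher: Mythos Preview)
Your proposal is correct and follows essentially the same approach as the paper: decompose the event ``$G$ is examined'' into the index hit $\{S'=S(G)\}$ and the threshold-pass event, use uniformity of $S'$ for the $2^{-2ab}$ factor, and invoke Lemma~\ref{lm-bd-1} block-by-block for the $(4/9)^b$ and $(1/3)^b$ factors. You are in fact more explicit than the paper on two points: you spell out why the two factors multiply (the paper simply calls the two events ``independent,'' which is loose since $\{S'=S(p)\}$ also depends on the hash draws; your observation that $\Pr[S'=S(p)\mid\text{hash draws}]=2^{-2ab}$ deterministically is the clean justification), and you carry out the parameter arithmetic for $(1-(\mathsf{P}''_1)^b)^c\le 1/e$ here, whereas the paper defers that computation to the proof of Lemma~\ref{lm-bd-3}.
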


With the above lemma, we can obtain the following lemma using an argument similar to \cite{har2012approximate} for near neighbor search with LSH.
This proves the correctness of the query scheme.

\begin{lemma}
\label{lm-bd-3}
If there exists a point in $P$ that lies in $B_{in} \setminus B_{out}$,
 with probability at least $1/4$,
Algorithm \ref{alg-queryBD} reports a point in $P$
that lies in $B_{in}(1+\xi)  \setminus B_{out}((1+\xi)^{-1})$.
\end{lemma}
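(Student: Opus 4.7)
The plan is to fix any witness $p^* \in P \cap (B_{in} \setminus B_{out})$ (which exists by hypothesis, and automatically lies in $B_{in}(1+\xi) \setminus B_{out}((1+\xi)^{-1})$ since that region contains $B_{in} \setminus B_{out}$) and to argue that with constant probability the algorithm (i) reaches an iteration in which the bucket of $p^*$ passes the COM-threshold tests and the label test $S' = S(G_{p^*})$, and (ii) does so without first exhausting the $3c$-probe budget on useless points. On the intersection of those two events, $p^*$ is examined, passes the final membership check, and is returned.

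First I would formalize event $E_1$ as ``there exists $k \in \{1,\dots,c\}$ such that the bucket $G_{p^*} \in \mathcal{G}_k$ containing $p^*$ is examined in iteration $k$.'' Lemma \ref{lm-bd-2} is already tailored to this: the per-iteration probability of examining $G_{p^*}$ is at least $(\mathsf{P}''_1)^b$, and by independence across the $c$ iterations we immediately get $\Pr[\bar{E_1}] \leq (1-(\mathsf{P}''_1)^b)^c \leq 1/e$.

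Next I would bound the number of bad probes. Call a point $q \in P$ \emph{bad} if $q \notin B_{in}(1+\xi) \setminus B_{out}((1+\xi)^{-1})$. By Lemma \ref{lm-bd-2}, for any fixed bad $q$ and any fixed $k$, the probability that the bucket of $q$ in $\mathcal{G}_k$ is examined is at most $(\mathsf{P}''_2)^b \leq 1/n$. Linearity of expectation over the at most $n$ bad points and the $c$ iterations bounds the expected total number of bad examinations by $c$, and Markov's inequality gives $\Pr[\text{more than }3c\text{ bad examinations}] \leq 1/3$. Let $E_2$ be the complementary event.

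Finally I would assemble the two pieces. The key observation is that the algorithm halts the instant it examines \emph{any} point of $B_{in}(1+\xi) \setminus B_{out}((1+\xi)^{-1})$, so every point it probes prior to termination is either bad or triggers an immediate success. Hence on $E_2$ the $3c$-cutoff cannot fire before a good probe occurs, and on $E_1 \cap E_2$ the loop safely reaches the iteration that examines $G_{p^*}$ and returns a valid point (at worst $p^*$ itself). A union bound gives $\Pr[E_1 \cap E_2] \geq 1 - 1/e - 1/3 > 1/4$, which is the claim. The main obstacle I anticipate is cleanly decoupling the budget-exhaustion event from the hitting event: applying Markov only to the \emph{bad} examination count (rather than to the total number of probed points) is the trick that avoids circular reasoning and justifies why the particular budget $3c$ suffices.
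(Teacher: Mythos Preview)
Your proposal is correct and follows essentially the same approach as the paper: fix a witness $p^*$, let $E_1$ be the event that its bucket is examined in some iteration (bounded via Lemma~\ref{lm-bd-2} and independence across the $c$ rounds), let $E_2$ bound the total number of bad examinations by Markov, and combine by a union bound to get probability at least $1-1/e-1/3>1/4$. The paper even makes the same ``decoupling'' move you highlight, analyzing a version of the algorithm without the early-termination cutoff and counting only bad probes.
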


The complexity of the data structure is $O(dabcn)$, which is $O_{\xi}(dn^{1+\rho}\log n)$
from $a = O_{\xi}(1), b = O_{\xi}(\log n)$ and $c = O_{\xi}(n^\rho)$, and $\rho$ and the constant hidden in the big-O notation depends only on $\xi$. The query time is $O(abcd)$, which is $O_{\xi}(dn^{\rho}\log n)$.
To achieve $1-\delta$ success probability, it suffices to concatenate $O(\log \delta)$ such data structures together.

We leave the proof of the above 2 lemmas and the full proof of Lemma \ref{lm-bd}
to the appendix for readability.

\section{Multi-scale Construction}

In this section, we present the  {\em multi-scale construction} method, which  is a standalone technique with potential to be used to other high dimensional range-aggregate query problems.

The multi-scale construction method is motivated by
several high dimensional geometric query problems that share the following common feature: they are challenging in 
the general settings, but become more approachable if some key parameters are known in advance. 
The AIFP query problem discussed in this paper is such an example. 
In the previous section, we have shown how to construct an AIFP data structure if we fix the size of the query ball
and know that the farthest distance lies in a given range.

The basic ideas behind multi-scale construction are the follows. Firstly, we know that 
if a problem is solvable when one or more key parameters are fixed, a feasible way to solve the general case of the problem is to first enumerate all possible cases of the problem defined by (the combinations of) the values of the parameters. Then, 
solve each case of the problem,
and finally obtain the solution from that of all the enumerated cases.
The multi-scale construction method follows a similar idea. 
More specifically, to obtain a general AIFP query data structure, the multi-scale construction method builds
a set of constrained AIFP query data structures that cover all possible radii of $B$ and farthest distance value. Secondly, since 
it is impossible to enumerate the infinite number of all possible values for these parameters, our idea is to sample a small set of fixed radii (based on the distribution of the points in $P$) and build constrained AIFP 
data structures only for the set of sampled values. This will certainly introduce errors. However, 
good approximations are achievable by using
a range cover technique.

Below we first briefly introduce two key ingredients of our method, Aggregation Tree and Range Cover, and then show how they can be used to form a multi-scale construction. 

\subsection{Aggregation Tree and Range Cover} 
\label{sec-hst}

In this subsection,
we briefly introduce the two components of the multi-scale construction scheme:
the \emph{aggregation tree} and the \emph{range cover} data structure.
We first introduce aggregation tree, which is used in \cite{huang2019-tdrc} as an ingredient of the range cover data structure.
It is essentially a slight modification of  the Hierarchical Well-Separated Tree (HST) introduced in \cite{har2011geometric}. Below is the definition of an aggregation tree:
(1) Every node $v$ (called {\em aggregation node}) represents a subset $P(v)$ of  $P$, and the root represents $P$;
(2)Every aggregation node $v$ is associated with a representative point $re(v) \in P(v)$ and a size $s(v)$. Let $Dia(P(v))$ denotes
the diameter of $P(v)$, $s(v)$ is a polynomial approximation of $Dia(P(v))$: 
$Dia(P(v)) \leq s(v)$, and $\frac{s(v)}{ Dia(P)}$ is upper-bounded by
a polynomial function $\mathcal{P}_{HST}(n,d) \geq 1$ (called  {\em distortion polynomial});
(3) Every leaf node corresponds to one point in $P$ with size $s(v)=0$, and each point appears in exactly one leaf node;
(4) The two children $v_{1}$ and $v_{2}$ of any internal node $v$ form a partition of $v$ with $\max\{s(v_{1}),$ $s(v_{2})\} < s(v)$; and
(5) For every aggregation node $v$ with parent $v_{p}$,  $\frac{s(v_p)}{r_{out}}$ is bounded by
the distortion polynomial $\mathcal{P}_{HST}(n,d) \geq 1$, where  $r_{out}$ is the minimum distance between points in $P(v)$ and
points in $P \setminus P(v)$.

The  above definition is equivalent to the properties of HST in \cite{har2011geometric}, except that we have an additional distortion requirement (Item 5). See Figure \ref{fig-hst} for example of an aggregate tree.

\begin{figure}[htbp]
\centering
\includegraphics[width=0.6\textwidth]{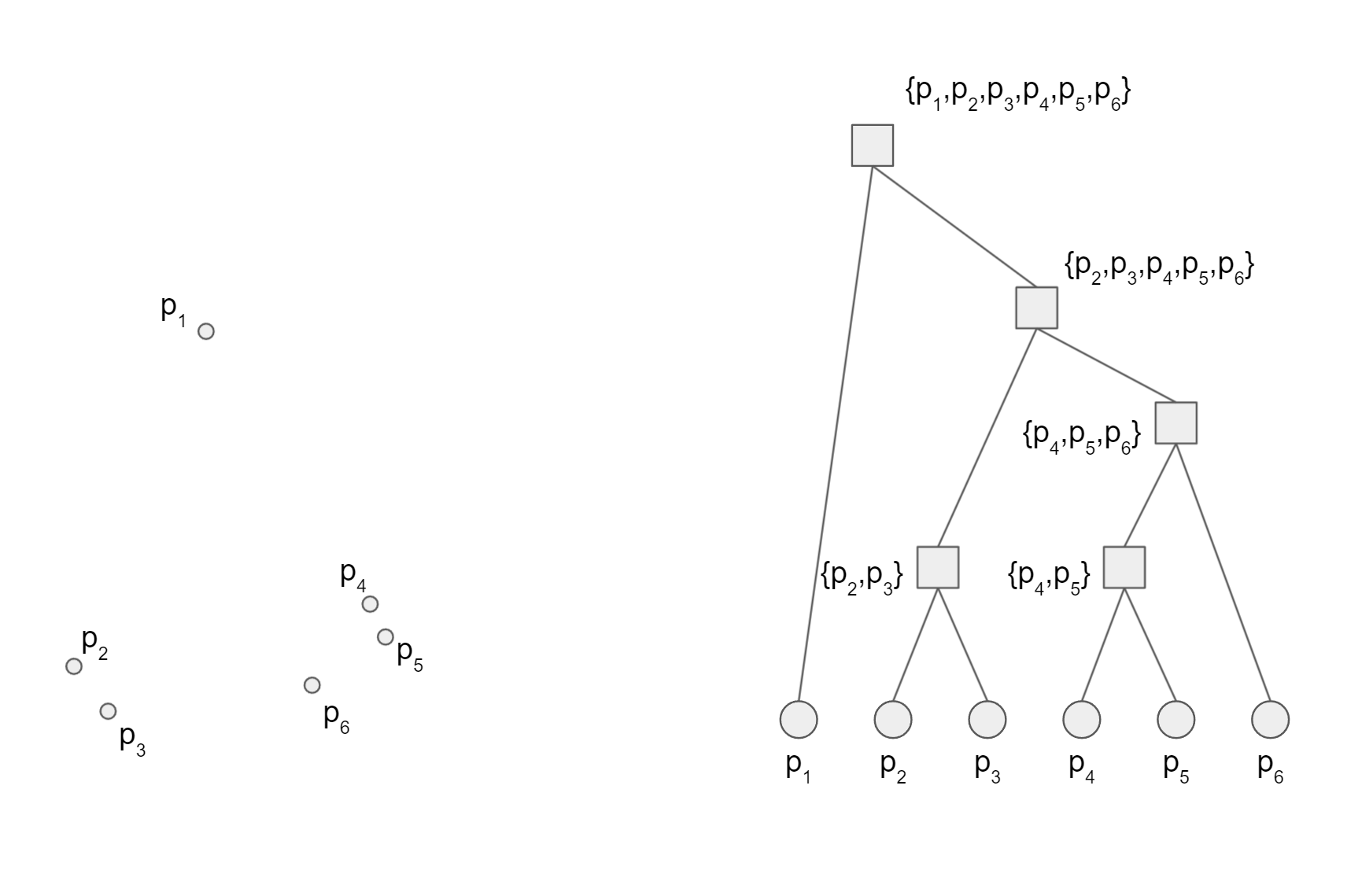}
\caption{An illustration of an aggregation tree built for 6 points.
}
\label{fig-hst}
\end{figure}

An aggregation tree can be constructed in $O(d n \log^2 n)$ time using the method in \cite{har2011geometric}.
It is proved in \cite{har2011geometric} that the distortion polynomial is $\mathcal{P}_{HST}(n,d) = dn$.
In the rest of the paper, we always assume that the distortion of an aggregation
tree is $\mathcal{P}_{HST}(n,d) = dn$.

\begin{algorithm}[th]
\caption{RangeCover$(T_P;\lambda,\Delta)$}

\textbf{Input:} An aggregation tree $T_P$ built over a set $P$ of $n$ points in $\mathbb{R}^d$;
controlling factors $0<\lambda<1$ and an integer $\Delta \geq  4\mathcal{P}_{HST}(n,d)$.\\
\textbf{Output:} A number of buckets, where each bucket stores a  number of tree nodes. Each bucket $B_t$ is indexed by an integer $t$ and associated with an interval
$((1+\lambda)^t,(1+\lambda)^{t+1}]$.
\label{alg-range}

\begin{algorithmic}[1]

\State{ For every integer $t$
create an empty bucket $B_t$ associated with interval $((1+\lambda)^t,(1+\lambda)^{t+1}]$. 
(Note that $B_{t}$ will not be actually created 
until some tree node $v$ is inserted into it.)
}

\State{ For every non-root node $v$ of $T_P$,
let $v_p$ be its parent  in $T_P$,
$r_H$ be $s(v_p)/\lambda$,
and $r_L$ be $\max\{s(v)/\lambda, s(v_p)/\Delta \}$.
Do 
\begin{itemize}
\item For every integer $t$ satisfying inequality
$r_L \leq (1+\lambda)^t < r_H$,
insert $v$ into bucket $B_t$.
\end{itemize}
}
\end{algorithmic}
\end{algorithm}

In the following, we briefly introduce range cover.
Range cover is a technique proposed in \cite{huang2019-tdrc} 
for solving the truth discovery problem 
in high dimensions. We utilize it in a completely different way to form a
multi-scale construction for the AIFP query problem. Below is the algorithm
(Algorithm \ref{alg-range}). 
Given an aggregation tree $T_{p}$ and real number parameters $\Delta \geq 8n$ and 
$0<\lambda <1$
(whose values 
will be determined later), 
the range cover  algorithm creates a number of buckets
for the nodes of $T_P$.
Each bucket $B_t$ is associated with an interval of real number
$((1+\lambda)^t,(1+\lambda)^{t+1}]$. If a value $r$ lies in the interval of a bucket $B_t$,
it can be shown that the diameter of every aggregation node $v$
is small compared to $r$,
and thus all points in $P(v)$ can be approximately viewed as one ``heavy'' point located at the representative point $re(v)$.
Intuitively, every bucket from the range cover algorithm provides a view of $P$ when observed from
a distance $r$ in the range of the bucket,
where each node in the bucket represents a ``heavy'' point
that is formed by the aggregation of a set of close (compared to the observing distance)
clusters of points in $P$.
Thus, the buckets of the range cover provides views of the input point set at different scales of observing distances  (see Figure \ref{fig-range} for an illustration).
The size of the output data structure is only $O(n \log n\Delta)$, as shown in \cite{huang2019-tdrc}.

\begin{figure}[htbp]
\centering
\includegraphics[width=0.6\textwidth]{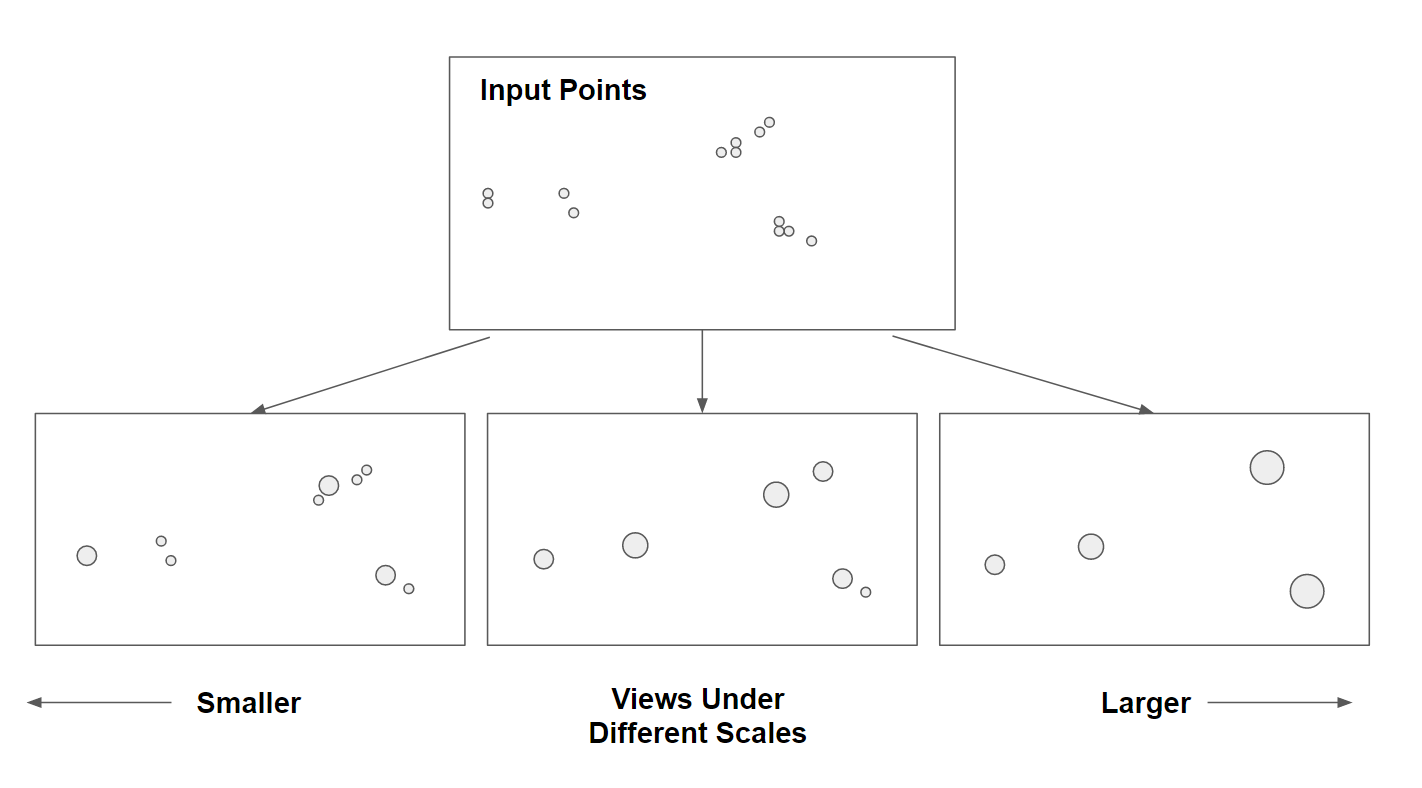}
\caption{An illustration of range cover. The nodes in every bucket can be viewed as ``heavy'' points yielded by the aggregation of  a set of close points. Every bucket provides a view of the input point set when observed from a certain distance. All the buckets jointly form a complete set of views of the input points at all possible scales.}
\label{fig-range}
\end{figure}

Note that for many problems, fixing some key parameters also means fixing the ``observing distance'' of $P$ from the perspective of solving the problem. This allows us to solve the problem based on the view of $P$ provided by the
bucket associated with the corresponding observation distance. We will show that this idea also applies to the AIFP problem.

\subsection{Multi-scale Construction for AIFP}

In this subsection, we use AIFP problem as an example to show how to implement multi-scale construction
using the range cover data structure.

We first observe that every bucket of the range cover can be used to solve
a constrained AIFP problem (with the proof given later).
Given an AIFP query $(B,q)$, 
if the (approximate) distance from $q$ to a point in $P \cap B$  
is known and
falls in the interval of  $((1+\lambda)^t,(1+\lambda)^{t+1}]$,
then the (approximate) distance from $q$ to a point of $B_t \cap B$ (where every node $v$ in $B_t$ is viewed as
a ``heavy'' 
point located at $re(v)$)
is an AIFP of $q$ in $P \cap B$.
This means that $B_t$ provides a good ``sketch'' of $P$ 
that allows more efficient computation of 
the AIFP of $q$ in $P \cap B$.
This observation leads to 
the main idea of the multi-scale construction method. 
To obtain a general AIFP query data structure, 
for every bucket $B_t$, 
we  construct a constrained AIFP data structure for $B_t$ (viewed as a set of ``heavy'' points), exploiting the assumption that the farthest distance is in the interval  of 
$((1+\lambda)^t,(1+\lambda)^{t+1}]$.
To answer a general AIFP query, we can 
find the AIFP for every bucket
by querying the constrained AIFP data structures associated with the bucket.
In this way, we can compute AIFPs for all possible radii.
When answering a general AIFP query, 
we first 
determine an approximate farthest distance of $q$ to $P \cap B$,
and then query the appropriate constrained AIFP data structures.
Despite the necessity of building multiple constrained data structures,
the complexity of the multi-scale construction is not high, 
as
the total number of nodes in all buckets is only $\tilde{O}(n)$.

However, the above idea is hard to implement,
because each bucket $B_t$ is only responsible for
a small range $((1+\lambda)^t,(1+\lambda)^{t+1}]$ of the possible farthest distance from $q$ to $P \cap P$.
This means that we need an accurate estimation of this distance when answering the query, 
which is almost as hard as the query itself.
We resolve this issue by merging multiple consecutive buckets into a larger one.
The resulting bucket can account for a larger range
of the possible farthest distances.
We then build a constrained data structure for each bucket.

This leads to the following Multi-Scale algorithm.
Let $\Gamma \geq 1$ be an integer constant to be determined,
and $\mathcal{A}$ be an algorithm 
for building a constrained data structure.
In this algorithm, for each integer $t$, we try to merge the aggregation nodes
in buckets $B_{t},B_{t+1},\ldots,B_{t+\Gamma}$ from the range cover (recall that
these buckets are associated with farthest distance ranges $((1+\lambda)^t, (1+\lambda)^{t+1}], ((1+\lambda)^{t+1}, (1+\lambda)^{t+2}] \ldots, ((1+\lambda)^{t+\Gamma}, (1+\lambda)^{t+\Gamma+1}]$, respectively)
into one bucket $B^{+}_{t}$ that could account for a larger range $((1+\lambda)^t, (1+\lambda)^{t+\Gamma+1}]$.
We then use $\mathcal{A}$ to build a data structure $\mathcal{S}_t$ for every bucket $B^{+}_{t}$
(by viewing every node in $B^{+}_{t}$ as a point).

\begin{algorithm}[th]
\caption{Multi-Scale$(T_P; \lambda,\Delta, \Gamma; \mathcal{A})$}

\textbf{Input:} An aggregation tree $T_P$ built over a set $P$ of $n$ points in $\mathbb{R}^d$;
controlling factors $0<\lambda<1$, integer $\Delta \geq 4\mathcal{P}_{HST}(n,d) = 4dn$,
and integer $\Gamma \geq 1$. A routine $\mathcal{A}$ which builds a constrained data structure for
any given bucket $B_t$ and point set $re(B^{+}_{t}) := \{ re(v) \mid v \in B^{+}_{t}\}$. \\
\textbf{Output:} A number of buckets, with each storing a number of tree nodes. Each bucket $B^+_t$ is indexed by an integer $t$ and associated with an interval
$((1+\lambda)^t,(1+\lambda)^{t+\Gamma+1}]$.
Each bucket $B^+_t$ is associated with data structure $\mathcal{S}_t$ built by $\mathcal{A}$.

\label{alg-merge}
\label{alg-multi}
\begin{algorithmic}[1]

\State{ Create a collection of buckets $\{ B_t \}$ by calling
RangeCover($T_P;\lambda,\Delta(1+\lambda)^{\Gamma}$).
}

\State{ For each integer $t$
create an empty bucket $B^+_t$ associated with 
$((1+\lambda)^t,(1+\lambda)^{t+\Gamma+1}]$. 
}

\State{ For every non-root node $v$ of $T_P$, 
enumerated in a bottom-up manner in $T_P$ so that the children of a node is always visited earlier than the parent node, 
put $v$ into $B^+_t$ for every $t$ such that the following is satisfied:
\begin{itemize}
\item $s(v) \leq \lambda (1+\lambda)^t$, $v$ appears in $B_{t'} \in \{ B_t \}$ for some $t \leq t' \leq t + \Gamma$, and none of $v$'s descendants are put in $B^+_t$
previously.
\end{itemize}
}

\State{
For every non-empty bucket $B^+_t$, create a data structure $\mathcal{S}_t$ using $\mathcal{A}$ for 
the point set $re(B^{+}_{t}) := \{ re(v) \mid v \in B^{+}_{t}\}$.
}

\end{algorithmic}
\end{algorithm}

For better understanding of this scheme,
we first briefly discuss the geometric properties of 
the buckets created by Algorithm \ref{alg-merge}.
Intuitively speaking, the aggregation nodes of every bucket
provide a sketch of \emph{almost} the whole input point set $P$, with the 
exception being points that satisfying some special isolation property.
This can be briefly described as follows:
(1) The diameter of each the aggregation node (viewed as a point set) should be small to the observation distances;
(2) The aggregation nodes are mutually disjoint; and (3)
Every point $p \in P$ is either in one of the nodes in the bucket, or 
it is in an aggregation node (not in the bucket) whose distance to other nodes is large.
These properties are formalized as follows. 
(We leave the proofs of all the following claims and lemma to the full version of the paper.)

\begin{claim}
\label{cl-range2}
Let $v$ be any aggregation node $v$ in a created bucket $B^+_t$. Then, 
$s(v) \leq \lambda (1+\lambda)^t$.
\end{claim}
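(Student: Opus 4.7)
The plan is essentially to unpack the construction in Algorithm \ref{alg-merge}. The claim asserts a size bound on aggregation nodes that end up in $B^+_t$, and inspection of Step 3 shows that this bound is literally one of the conditions required before a node is inserted into $B^+_t$. So the proof should be very short: I would simply point to the first bullet in Step 3 of the multi-scale algorithm, which only puts $v$ into $B^+_t$ when $s(v) \leq \lambda (1+\lambda)^t$ (together with the other two conditions regarding membership in some $B_{t'}$ and the descendant-exclusion rule). Hence every $v \in B^+_t$ satisfies $s(v) \leq \lambda (1+\lambda)^t$ by construction.

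There is no real obstacle here — the claim is definitional, and its purpose is to record in a usable form the property that downstream arguments will invoke (for example, to justify that at observing distance in $((1+\lambda)^t,(1+\lambda)^{t+\Gamma+1}]$, each $P(v)$ for $v \in B^+_t$ has diameter negligible compared to the distance, so that $P(v)$ can be treated as a single heavy point at $re(v)$). I would simply note, as a one-line explanation, that Step 3 explicitly enforces $s(v) \leq \lambda(1+\lambda)^t$ before inserting $v$ into $B^+_t$, so the bound holds for every aggregation node in that bucket. No induction or case analysis on the tree structure is needed at this stage; the potentially subtle properties (mutual disjointness of the $P(v)$'s across $v \in B^+_t$, and the covering of $P$ up to isolated exceptions) are separate claims that the paper presumably handles in subsequent items.

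If I wanted to make the writeup slightly more than a one-liner, I would phrase it as: fix an arbitrary $v \in B^+_t$; the only line in Algorithm \ref{alg-merge} that can insert $v$ into $B^+_t$ is the bulleted condition in Step 3, which requires $s(v) \leq \lambda(1+\lambda)^t$; therefore the bound holds. This is the full argument.
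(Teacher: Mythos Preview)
Your proposal is correct and matches the paper's treatment: the paper leaves the proof of this claim to the full version precisely because it is immediate from the insertion condition in Step~3 of Algorithm~\ref{alg-merge}, which explicitly requires $s(v) \leq \lambda(1+\lambda)^t$ before placing $v$ into $B^+_t$. Your one-line argument is exactly what is needed.
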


\begin{lemma}
\label{lm-range2}
For any $p \in P$ and and bucket $B^+_t$ created by Algorithm \ref{alg-merge},
one of the following holds:
\begin{enumerate}
    \item There exists exactly one aggregation node $v \in B^+_t$ such that $p \in P(v)$.
    \item Either $B^+_t$ is empty or there exists no aggregation node $v \in B^+_t$ such that $p \in P(v)$.
    There exists an aggregation node $v'$ in  $T_P$ such that
    $s(v')/\lambda \leq (1+\lambda)^t$.
    Furthermore,
    let $q$ be any point in $P \setminus P(v')$,
    then $\norm{p - q} > (\Delta/dn) (1+\lambda)^{t+\Gamma}$.
\end{enumerate}
\end{lemma}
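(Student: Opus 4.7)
The plan is to handle the two cases of the disjunction separately, using very different ideas. For Case 1 (uniqueness), I would argue by contradiction: suppose distinct $v_1, v_2 \in B^+_t$ both contain $p$. Since point sets of aggregation nodes are either nested or disjoint and $v_1, v_2$ share $p$, one is a strict ancestor of the other; say $v_1$ is the ancestor. The bottom-up enumeration in Step 3 of Algorithm \ref{alg-merge} processes $v_2$ before $v_1$, so $v_2$ is already in $B^+_t$ when $v_1$ is examined, violating the descendant-blocking condition and giving the desired contradiction.

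For Case 2, I would walk up the path of ancestors $u_0 = \ell_p, u_1, \ldots, u_k$ of the leaf for $p$ in $T_P$, and let $i^*$ be the largest index with $s(u_{i^*}) \leq \lambda (1+\lambda)^t$ (well-defined since $s(u_0)=0$). I would take $v' := u_{i^*}$; the size requirement $s(v')/\lambda \leq (1+\lambda)^t$ is then immediate. If $u_{i^*}$ is the root of $T_P$, then $P \setminus P(v') = \emptyset$ and the distance condition is vacuous, so assume $u_{i^*}$ has a parent $u_{i^*+1}$. By property (5) of the aggregation tree, the bound $\norm{p-q} > (\Delta/dn)(1+\lambda)^{t+\Gamma}$ for all $q \in P \setminus P(u_{i^*})$ would follow at once from the size gap $s(u_{i^*+1}) > \Delta (1+\lambda)^{t+\Gamma}$, so the remaining task is to derive this gap from the hypothesis of Case 2.

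The crucial structural observation is that no strict descendant $w$ of $u_{i^*}$ can belong to $B^+_t$: any such $w$ has a parent $w_p$ that is either $u_{i^*}$ itself or a strict descendant of $u_{i^*}$, so property (4) of the aggregation tree forces $s(w_p) \leq s(u_{i^*}) \leq \lambda(1+\lambda)^t$, whereas $w \in B^+_t$ requires $w \in B_{t'}$ for some $t' \in [t, t+\Gamma]$, which in turn demands $s(w_p) > \lambda(1+\lambda)^{t'} \geq \lambda(1+\lambda)^t$, a contradiction. Hence the descendant-blocking condition is automatic for $u_{i^*}$, while its first insertion condition holds by the choice of $i^*$; thus under Case 2 (no node of $B^+_t$ contains $p$, so in particular $u_{i^*} \notin B^+_t$) the second insertion condition must fail. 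I would then unpack the RangeCover rule at $t' = t$: the call in Step 1 of Algorithm \ref{alg-merge} was made with $\Delta(1+\lambda)^\Gamma$ in place of $\Delta$, so $u_{i^*} \in B_t$ iff $\max\{s(u_{i^*})/\lambda,\, s(u_{i^*+1})/(\Delta(1+\lambda)^\Gamma)\} \leq (1+\lambda)^t < s(u_{i^*+1})/\lambda$. The right-hand strict inequality holds by maximality of $i^*$ and $s(u_{i^*})/\lambda \leq (1+\lambda)^t$ by construction, so the only way $u_{i^*} \notin B_t$ can occur is through $s(u_{i^*+1}) > \Delta(1+\lambda)^{t+\Gamma}$, which is exactly the required gap. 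The main subtlety of the argument is this structural fact about descendants; once it is available, the descendant-blocking condition cannot prevent $u_{i^*}$ from being inserted, so Case 2 can arise only through the RangeCover interval failure, and that failure supplies precisely the isolation bound we need.
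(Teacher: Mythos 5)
Your proof is correct and follows essentially the same route as the paper's: you pick the same node $v'$ (the highest ancestor of $p$'s leaf with $s(v')/\lambda \le (1+\lambda)^t$), use the same observation that no proper descendant of $v'$ can block its insertion, and extract the gap $s(v'') > \Delta(1+\lambda)^{t+\Gamma}$ from the RangeCover condition together with property (5). The only difference is presentational — you argue by contraposition that failure to enter $B_t$ forces the gap, whereas the paper enumerates the three orderings of $s(v'')/\Delta'$, $(1+\lambda)^t$, and $s(v')/\lambda$ explicitly; your write-up of the descendant-blocking fact is, if anything, more careful than the paper's one-line justification.
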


\begin{proof}
Note that for any two nodes of the aggregation tree $T_P$,
either the associated point sets of the two nodes are disjoint,
or one of the node is a descendant of the other.
Since in Algorithm \ref{alg-merge},
we avoid putting a node in $B^+_t$ when the node's descendant is already in $B^+_t$.
Thus the associated sets of the nodes in $B^+_t$ are disjoint.
Clearly there exists no more than one $v \in B^+_t$ such that $p \in P(v)$. 

In the following, we let $v_0$ be the leaf of $T_P$ such that $P(v_0) = \{p\}$.
Let $v'$ be the farthest ancestor of $v_0$ in $T_P$ such that $s(v')/\lambda \leq (1+\lambda)^t$.
We show that either $v'\in B^+_t$,
or $\norm{p - q} > (\Delta/dn) (1+\lambda)^{t+\Gamma}$ for any point $q$ in $P \setminus P(v')$.  
We assume that $v'$ is not the root of $T_P$, since otherwise we are done.
Let $v''$ be the parent of $v'$ in $T_P$.
Let $\Delta' := \Delta \cdot (1+\lambda)^{\Gamma}$ (\emph{i.e.}, $\Delta'$ is the third parameter passed to the call to
RangeCover in Step 1 of Algorithm \ref{alg-multi}).
Clearly, $s(v')/\lambda \leq (1+\lambda)^t < s(v'')/\lambda$.
We consider three cases:
(1) $s(v'')/\lambda > (1+\lambda)^t \geq s(v')/\lambda \geq s(v'')/\Delta'$;
(2) $s(v'')/\lambda > (1+\lambda)^t \geq s(v'')/\Delta' > s(v')/\lambda$;
and (3) $s(v'')/\lambda > s(v'')/\Delta' > (1+\lambda)^t \geq s(v')/\lambda$.

\noindent\textbf{Case (1) and (2).} Clearly, in this case,  $v' \in B_t$.
Note that since $s(v')\lambda \leq (1+\lambda)^t$,
we know that any proper descendant of $v'$ will not be put in a bucket $B_{t'}$ for any $t' \geq t$.
Thus, we have $v' \in B^+_t$.

\noindent\textbf{Case (3).} Assume that $v' \not\in B^+_t$.
From the property of the aggregation tree,
we know that for any $q \in P \setminus P(v')$,
$\norm{q - p} \geq s(v'')/nd > (\Delta'/nd)(1+\lambda)^t = (\Delta/nd)(1+\lambda)^{t+\Gamma}$.
This completes the proof.
\end{proof}

Although the sketch does not fully cover $P$, in many problems (including  AIFP) 
these points are either negligible or easy to handle
by other means due to their special properties.

From \cite{huang2019-tdrc}, we know that the running time of Algorithm \ref{alg-range} 
and the space complexity of the output data structure is
$O_{\lambda}(n \log n \Delta)$ (where the hidden constant in the big-O notation
depends only on $\lambda$).
Algorithm \ref{alg-merge} essentially merges  $\Gamma+1$ consecutive
buckets $B_t$, $B_{t+1}$, $\ldots$, $B_{t+\Gamma}$
created by Algorithm \ref{alg-range} into one bucket $B^+_t$.
Thus, we have the following lemma.

\begin{lemma}
\label{lm-rangesize}
Excluding the time it takes for $\mathcal{A}$ to process each $B^{+}_t$ in Step 4,
the running time of Algorithm \ref{alg-multi}
and the total number of nodes in all buckets is
$O_{\lambda}(\Gamma^2 n \log n\Delta)$,
where the hidden constant in big-O notation
depends only on $\lambda$.
\end{lemma}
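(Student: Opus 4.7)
\textbf{Proof Proposal for Lemma \ref{lm-rangesize}.} The plan is to bound both the running time of Algorithm \ref{alg-multi} (excluding Step 4) and the total size of the merged buckets $\{B^+_t\}$ by the same quantity, by reducing to properties of the underlying range cover data structure.

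First I would invoke the result of \cite{huang2019-tdrc} quoted in the excerpt: the call to \textsc{RangeCover}$(T_P;\lambda,\Delta')$ with $\Delta' := \Delta (1+\lambda)^{\Gamma}$ in Step 1 runs in $O_{\lambda}(n \log(n \Delta'))$ time and produces a family of buckets whose total size (counting each occurrence of a node in a bucket) is $O_{\lambda}(n \log(n \Delta'))$. Since $\log(n \Delta') = \log(n\Delta) + \Gamma \log(1+\lambda) = O_{\lambda}(\log(n\Delta) + \Gamma)$, this contributes $O_{\lambda}(n(\log(n\Delta)+\Gamma))$ to both measures.

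Next I would analyze Step 3. Each aggregation node $v$ of $T_P$ is eligible to be placed into a merged bucket $B^+_t$ only when $v$ appears in some $B_{t'}$ of the range cover with $t \leq t' \leq t+\Gamma$. Hence, for each (node, bucket) pair $(v,B_{t'})$ produced in Step 1, there are at most $\Gamma+1$ indices $t \in \{t'-\Gamma,\ldots,t'\}$ at which $v$ could enter $B^+_t$; the additional constraints $s(v) \leq \lambda(1+\lambda)^t$ and the ``no descendant already placed'' rule can only reduce this count. Summing over all range-cover pairs gives a total insertion count of $(\Gamma+1)\cdot O_{\lambda}(n(\log(n\Delta)+\Gamma)) = O_{\lambda}(\Gamma^2 n \log(n\Delta))$, which bounds the total number of nodes across all $B^+_t$. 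Implementing Step 3 by walking through the range-cover pairs bottom-up in $T_P$ and checking descendants in $O(1)$ amortized time per insertion (via a flag per node) yields the same running-time bound.

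The main obstacle I anticipate is cleanly arguing that the ``none of $v$'s descendants are put in $B^+_t$ previously'' condition, which is enforced online during Step 3, does not increase the asymptotic work. I would handle this by observing that the bottom-up traversal order guarantees that the check for each candidate $(v,B^+_t)$ pair can be performed by consulting an auxiliary bit indicating whether $B^+_t$ already contains a descendant of $v$; this bit can be updated in $O(1)$ per insertion via standard tree-marking, so the cost per candidate remains constant and the bound of $O_{\lambda}(\Gamma^2 n \log(n\Delta))$ is preserved. Adding the setup cost from Step 1, which is absorbed into the same bound, completes the plan.
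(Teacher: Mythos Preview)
Your proposal is correct and follows essentially the same approach as the paper: invoke the $O_{\lambda}(n\log(n\Delta'))$ bound for \textsc{RangeCover} from \cite{huang2019-tdrc}, then observe that each range-cover bucket $B_{t'}$ contributes to at most $\Gamma+1$ merged buckets $B^+_t$. The paper's argument is in fact terser than yours---it does not spell out that one $\Gamma$ factor comes from the merging and the other from the inflated parameter $\Delta' = \Delta(1+\lambda)^{\Gamma}$, nor does it discuss the implementation of the descendant check---so your write-up is a strictly more detailed version of the same reasoning.
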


We conclude this subsection by providing a key lemma showing that,
given a constrained AIPF query $(B,q)$ satisfying constraint $(r_{B}, d_{min}, d_{max})$
with $d_{min} \leq r_{B} \leq d_{max}$,
if there is a bucket $B^{+}_{t}$ such that $(1+\lambda)^{t+1}/(1-\lambda) < d_{min} < d_{max} \leq (1+\lambda)^{t+\Gamma} - (2+2/\lambda)r_B$
 (\emph{i.e.}  the range $[d_{min}, d_{max}]$ falls in interval $((1+\lambda)^{t}, (1+\lambda)^{t+\Gamma+1}]$
 with some gap),
then, with an easy-to-handle exception,
an AIFP to $q$ in $B^{+}_{t}$ (by viewing every node of $B^{+}_{t}$ as one point)
in (slightly enlarged) range $B$ is also
an AIFP of $q$ to $P \cap B$.
Formally, let $re(B^{+}_{t}) := \{ re(v) \mid v \in B^{+}_{t}\}$.
Let $p_t$ be the farthest point to $q$ in $re(B^{+}_{t}) \cap B(1+\lambda)$
if $re(B^{+}_{t}) \cap B(1+\lambda) \not= \emptyset$,
and $p$ be a $(\lambda/6,\lambda/6)$-AIFP of $q$ in $re(B^{+}_{t}) \cap B(1+\lambda)$
\footnote{Note $p$ could be NULL here. This could happen when bucket $B^{+}_t$ is empty or $re(B^{+}_{t}) \cap B(1+\lambda) = \emptyset$. }.
Let $p_N$ be a $(1+\lambda)$-approximate nearest neighbor of $q$ in $P$.

\begin{lemma}
\label{lm-multi}
One of the following holds: 
(1) $p_N$ is a $(2\lambda,2\lambda)$-AIFP of q in $P \cap B$, or
(2) $p_t$ exists and $(1+\lambda)^t \leq \norm{q - p_t} \leq (1+\lambda)^{t+\Gamma+1}$, 
    and $p$ is a $(2\lambda,2\lambda)$-AIFP of $q$ in $P \cap B$.
\end{lemma}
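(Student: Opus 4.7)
\textbf{Proof plan for Lemma \ref{lm-multi}.}
The plan is to dichotomize on whether the true in-range farthest point $p^{*}$ of $q$ in $P \cap B$ (whose distance satisfies $d_{min} < \norm{q - p^{*}} \leq d_{max}$ by the constraint) sits in an aggregation node that actually belongs to the sketch $B^{+}_{t}$, or instead falls into one of the ``isolated'' large aggregation nodes singled out by Lemma \ref{lm-range2}. I expect these two alternatives to correspond respectively to conclusion~(2) and conclusion~(1).

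First I would apply Lemma \ref{lm-range2} to $p^{*}$. In case~(1) of that lemma, there is some $v^{*} \in B^{+}_{t}$ with $p^{*} \in P(v^{*})$. Using Claim \ref{cl-range2} together with the hypothesis $(1+\lambda)^{t+1}/(1-\lambda) < d_{min} \leq r_B$, I would show $\norm{re(v^{*}) - p^{*}} \leq s(v^{*}) \leq \lambda r_B$, so that $re(v^{*}) \in B(1+\lambda)$. Consequently $p_t$ exists and satisfies $\norm{q - p_t} \geq \norm{q - re(v^{*})} \geq \norm{q - p^{*}} - \lambda(1+\lambda)^{t}$; a short calculation combining the hypothesis $d_{max} \leq (1+\lambda)^{t+\Gamma} - (2+2/\lambda)r_B$ with the triangle inequality (applied to $q$, the centre of $B$, and the fact that $p_t \in B(1+\lambda)$) then places $\norm{q - p_t}$ inside $[(1+\lambda)^{t}, (1+\lambda)^{t+\Gamma+1}]$. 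Chaining the internal AIFP guarantee $\norm{p - q} \geq (1-\lambda/6)\norm{q - p_t}$ with the above estimate yields $\norm{p - q} \geq (1-2\lambda)\norm{q - p^{*}}$, and choosing $P' = (P \cap B) \cup \{p\} \subseteq P \cap B(1+2\lambda)$ certifies $p$ as a $(2\lambda, 2\lambda)$-AIFP of $q$ in $P \cap B$, giving conclusion~(2).

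If instead case~(2) of Lemma \ref{lm-range2} applies, there is an ancestor $v'$ of $p^{*}$ with $s(v') \leq \lambda(1+\lambda)^{t}$ and $\norm{p^{*} - q'} > (\Delta/dn)(1+\lambda)^{t+\Gamma} \geq 4 d_{max}$ for every $q' \in P \setminus P(v')$ (using $\Delta \geq 4dn$ and $d_{max} < (1+\lambda)^{t+\Gamma}$). Since every other point of $P \cap B$ lies within $2r_B \leq 2d_{max}$ of $p^{*}$, this forces $P \cap B \subseteq P(v')$. A triangle-inequality argument then rules out the true nearest neighbour $q_N$ of $q$ lying outside $P(v')$ (otherwise $\norm{q - q_N} > 3 d_{max} > \norm{q - p^{*}}$, contradicting its minimality), and the $(1+\lambda)$-approximation guarantee propagates this to $p_N \in P(v')$ as well. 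Hence $\norm{p_N - p^{*}} \leq s(v') < \lambda r_B$, which gives $p_N \in B(1+\lambda) \subseteq B(1+2\lambda)$ and $\norm{q - p_N} \geq (1-\lambda)\norm{q - p^{*}}$; taking $P' = (P \cap B) \cup \{p_N\}$ then verifies that $p_N$ is a $(2\lambda, 2\lambda)$-AIFP of $q$ in $P \cap B$ and yields conclusion~(1).

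The main obstacle is the tight quantitative bookkeeping in Case~A: three separate multiplicative losses---the aggregation-tree distortion at scale $(1+\lambda)^{t}$, the internal $(\lambda/6,\lambda/6)$-AIFP slack within the sketch, and the fuzziness expansion when moving from $B(1+\lambda)$ back out to $B(1+2\lambda)$---must all fit inside the final $(2\lambda,2\lambda)$ approximation budget. The seemingly technical hypothesis $d_{max} \leq (1+\lambda)^{t+\Gamma} - (2+2/\lambda)r_B$ is precisely the slack that simultaneously delivers the upper bound $\norm{q - p_t} \leq (1+\lambda)^{t+\Gamma+1}$ in Case~A and the isolation-gap inequality $\norm{p^{*}-q'} > 4d_{max}$ used in Case~B.
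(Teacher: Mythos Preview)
Your proposal is correct and follows essentially the same approach as the paper's proof. Both arguments invoke Lemma~\ref{lm-range2} and split into two cases, use Claim~\ref{cl-range2} to bound $s(v)\le\lambda(1+\lambda)^t<\lambda d_{min}$, push the representative $re(v^*)$ into $B(1+\lambda)$ to guarantee $p_t$ exists, and bound $\norm{q-p_t}$ above via the diameter $(2+2\lambda)r_B$ of $B(1+\lambda)$; in the isolated case both deduce $P\cap B\subseteq P(v')$ from the separation $(\Delta/dn)(1+\lambda)^{t+\Gamma}$ and then force $p_N\in P(v')$.

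The one cosmetic difference is the dichotomy itself: the paper splits on whether \emph{some} point of $P\cap B$ fails to be covered by a node of $B^+_t$, whereas you split on whether the specific farthest point $p^*$ is covered. Your split is slightly sharper (your Case~A strictly contains the paper's ``all covered'' case), and your Case~A argument for conclusion~(2) goes through without needing every point of $P\cap B$ to be represented in the sketch---only $p^*$ matters, since comparing against $p^*$ already dominates all other $p'\in P\cap B$. The paper's extra assumption in its second case is harmless but not actually used. Otherwise the chains of inequalities (e.g.\ $\norm{q-p}\ge(1-\lambda/6)\norm{q-re(v^*)}\ge(1-\lambda/6)(1-\lambda)\norm{q-p^*}\ge(1-2\lambda)\norm{q-p^*}$) are identical in substance.
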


\begin{proof}
We first consider the case where there exists a point $p \in P \cap B$,
such that there does NOT exist $v(p) \in B^+_t$ with $p \in P(v(p))$.
We will show that in this case, item (1) of the lemma holds.

If such a $p$ exists,
by Lemma \ref{lm-range2},
we know that there exists an aggregation node $v'$ such that $p \in P(v')$,
$s(v') \leq \lambda (1+\lambda)^t < \lambda d_{min}$,
and for any $p' \in P \setminus P(v')$,
$\norm{p - p'} > (\Delta/dn) (1+\lambda)^{t+\Gamma} \geq 4 (d_{max} + (2+2\lambda)r_B)
\geq 4 (\norm{p - q} + (2+2\lambda)r_B)$.
Since $p$ lies in $B$, it is not hard to see that $p'$ does not lie in $B(1+\lambda)$.
This implies if a point lies in $P \cap B$,
then it is in $P(v')$.
Let $p_f$ be the farthest point to $q$ in $B \cap P$.
Thus $p_f \in P(v')$.
For any point $p'' \in P(v')$, we have
$\norm{p'' - p_f} \leq s(v') \leq  \lambda d_{min} \leq \lambda \norm{q - p_f}$,
therefore $\norm{q - p''} \geq \norm{q - p_f} - \norm{p'' - p_f} \geq (1-\lambda) \norm{q - p_f}$.
Also we have 
$\norm{p'' - p_f} \leq  \lambda d_{min} \leq \lambda r_B$.
Since $p_f \in B$, it is not hard to see
$p'' \in B(1+\lambda)$.
We have shown that any point $p'' \in P(v')$ is an $(\lambda,\lambda)$-AIFP of $q$ in $B \cap P$.
Next, we show that $P_N \in p(v)$.
Recall that for any $p' \in P \setminus P(v')$,
$\norm{p - p'} \geq 4 (\norm{p - q} + (2+2\lambda)r_B)$.
We have $\norm{q - p'} \geq \norm{p - p'} - \norm{q - p} \geq 3(\norm{p - q} + (2+2\lambda)r_B)$.
On the other hand, 
since $p_f \in B$, we have $\norm{p_f - q} \leq \norm{p_f - p} + \norm{p - q} \leq 2r_B + \norm{p - q}$.
Therefore, $\norm{p_f - q} < \norm{p - p'}/2$, thus
$p'$ cannot be a $(1+\lambda)$-approximate nearest neighbor of $q$ in $P$.
We have proved item (1) of the lemma in this case.

In the following,
we assume that for any $p \in P \cap B$,
there exist $v(p) \in B^+_t$ with $p \in P(v(p))$.
We prove item (2) of the lemma holds in this case.

Let $p_f$ be the farthest point to $q$ in $B \cap P$.
(Note we can assume that $p_f$ exists, otherwise $B \cap P = \emptyset$, which means any point in $P$
is an AIFP of $q$, and item (1) of the lemma trivially holds.)
We first show $p_t$ exists and $\norm{q - p_t} \geq (1+\lambda)^{t}$.
Let $v_f \in B^{+}_t$ be the node such that $p_f \in P(v_f)$.
We have $\norm{p_f - re(v_f)} \leq s(v_f) \leq \lambda (1+\lambda)^t \leq \lambda r_B$.
Since $p_f \in B$, we have $re(v_f) \in B(1+\lambda)$.
This means $B(1+\lambda) \cap re(B^{+}_t)$ is non-empty, thus $p_t$ exists.
Also, since $re(v_f) \in re(B^{+}_t)$,
from the definition of $p_t$,
we have $\norm{q - p_t} \geq \norm{q - re(v_f)}$.
Thus $\norm{q - p_t} \geq \norm{q - re(v_f)} \geq \norm{q - p_f} - \norm{p_f - re(v_f)}
\geq d_{min} -  \lambda (1+\lambda)^t \geq d_{min} -  \lambda d_{min} \geq (1-\lambda) (1+\lambda)^{t+1} / (1-\lambda) \geq 
(1+\lambda)^{t}$.

Next we show $\norm{q - p_t} \leq (1+\lambda)^{t+\Gamma+1}$.
Clearly both $p_f$ and $p_t$ are in $B(1+\lambda)$,
thus $\norm{p_t - p_f} \leq (2 + 2\lambda)r_B$.
We have $\norm{q - p_t} \leq \norm{q - p_f} + \norm{p_t - p_f} \leq d_{max} + (2 + 2\lambda)r_B \leq (1+\lambda)^{t+\Gamma+1}$.

Finally, we prove that the point $p$, which is a $(\lambda/6, \lambda/6)$-AIFP of $q$ in $P \cap B(1+\lambda)$,
is a $(2\lambda, 2\lambda)$-AIFP of $q$ in $P \cap B$.
Note $p \in B((1+\lambda) \cdot (1+\lambda/6)) \subset B(1+2\lambda)$.
Also, 
note $\norm{p_f - re(v_f)} \leq s(v_f) \leq \lambda (1+\lambda)^t \leq \lambda d_{min} \leq \lambda \norm{q - p_f}$,
Thus $\norm{q - re(v_f)} \geq  \norm{q - p_f} - \norm{p_f - re(v_f)} \geq (1-\lambda) \norm{q - p_f}$.
For any $p_B \in p \cap B$,
We have $\norm{q - p} \geq (1-\lambda/6) \norm{q - re(v_f)} \geq (1-\lambda/6)(1 - \lambda) \norm{q - p_f}
\geq (1-\lambda/6)(1 - \lambda) \norm{q - p_B} \geq (1-2\lambda) \norm{q - p_B}$.
We have proved $p$ is a $(2\lambda,2\lambda)$-AIFP of $q$ in $P \cap B$.

\end{proof}

The above lemma implies that,
in Algorithm \ref{alg-multi},
if routine $\mathcal{A}$ builds a constrained AIFP data structure for
farthest distance lies in interval $[(1+\lambda)^t, (1+\lambda)^{t+\Gamma+1}]$,
then either this data structure can be used to answer any query with constraint $(r_{B}, d_{min}, d_{max})$
(with other parameters, like $r_{B}$ and the approximate factors for constrained AIFP, set properly),
or the AIFP query can be solved easily using a nearest neighbor search.
In the following section, we will show how to build
a general AIFP query data structure through multi-scale construction
by selecting appropriate parameters.
With the multi-scale data structure (together with some auxiliary data structures), 
we can answer an AIFP query by (1) obtaining a rough estimation
of the farthest distance, and (2) querying the bucket corresponding to the estimated range.

\section{General AIFP Query}
\label{sec-aifp}

In this section, we present a general $(\epsilon, \gamma)$-AIFP query scheme.
We first establish several facts for better understanding of the query scheme.
In the following, let $B$ be a closed ball 
with radius
$r_{B} > 0$
and $q$ be an arbitrary point. Let $0 < \epsilon <1$ and $0 < \gamma < 1$ be any pair of constants and 
$\lambda := \min(\epsilon,\gamma)/512$.
We assume that $r_{B}$ is \emph{$\lambda$-aligned}, which means that $r_{B} = (1+\lambda)^{t}$ for some integer $t$.
The alignment assumption makes it easier to implement the multi-scale construction.
Note that if $r_{B}$ is not $\lambda$-aligned, we can always enlarge $B$ a little to make $r_{B}$ aligned and 
still obtain a good approximation with carefully chosen parameters.

Our main idea 
is to convert each query $(B,q)$ into one or more AIFP queries $(B',q)$ 
such that it is possible to find a lower bound $d'_{min}$ and an upper bound $d'_{max}$ on the farthest distance between $q$ and a point in $B' \cap P$.
With such bounds, the AIFP can then be found by querying a  pre-built constrained AIFP data structure. 
To ensure efficiency, the gap between $d_{min}$ and $d_{max}$ cannot be too large, {\em i.e.,} 
$d_{max}/d_{min}$ should be bounded by a polynomial of $n$ and $d$.
Since the complexity of a constrained data structure depends on $d'_{max}/d'_{min}$,
a small gap will also enable us to control the size of the data structure.
We start with a simple claim. 

\begin{claim}
\label{cl-far}
If the distance between $q$ and the center $o_{B}$ of $B$ is
very large compared to $r_{B}$, \emph{i.e.} $\norm{q - o_{B}} \geq (3+\gamma) \epsilon^{-1} r_{B}$,
then any point in $B(1+\gamma/2)$ is an $(\epsilon,\gamma)$-AIFP of $q$ in $P \cap B$.
\end{claim}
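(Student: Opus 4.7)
The plan is to verify the three conditions in the definition of an $(\epsilon,\gamma)$-AIFP directly, with the witness set $P'$ chosen as the full set of data points in the expanded ball $P \cap B(1+\gamma)$. This reduces everything to a single distance inequality that follows from two applications of the triangle inequality together with the hypothesis $\norm{q - o_B} \geq (3+\gamma)\epsilon^{-1} r_B$.

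More concretely, let $p$ be an arbitrary point of $P \cap B(1+\gamma/2)$, and set $P' := P \cap B(1+\gamma)$. Then conditions (1) and (2) of the AIFP definition are immediate: we have $P \cap B \subseteq P'$ by construction, $P' \subseteq P \cap B(1+\gamma)$ by construction, and $p \in P'$ because $B(1+\gamma/2) \subseteq B(1+\gamma)$. The only nontrivial step is condition (3), namely that for every $p' \in P'$,
\[
(1-\epsilon)\norm{p' - q} \leq \norm{p - q}.
\]

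To establish this, I would bound the two sides separately in terms of $D := \norm{q - o_B}$ and $r_B$. Since $p' \in B(1+\gamma)$, the triangle inequality gives $\norm{p'-q} \leq D + (1+\gamma)r_B$, and since $p \in B(1+\gamma/2)$, the reverse triangle inequality gives $\norm{p-q} \geq D - (1+\gamma/2)r_B$. Substituting these into the desired inequality and rearranging, it suffices to check
\[
r_B\bigl[(1-\epsilon)(1+\gamma) + (1+\gamma/2)\bigr] \;\leq\; \epsilon D.
\]
The bracketed coefficient simplifies to $2 + \tfrac{3\gamma}{2} - \epsilon - \epsilon\gamma \leq 3 + \gamma$ (using $\gamma < 2$), so the hypothesis $D \geq (3+\gamma)\epsilon^{-1} r_B$ immediately implies the required bound.

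There is essentially no obstacle here beyond choosing the right witness set $P'$; the only mild care is to make sure the coefficient on the left side of the rearranged inequality is indeed dominated by $(3+\gamma)$, which the hypothesis provides exactly. Since $p$ was arbitrary in $P \cap B(1+\gamma/2)$, the claim follows.
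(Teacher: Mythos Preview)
Your proof is correct and follows essentially the same triangle-inequality argument as the paper. The only minor difference is that the paper takes the witness set to be (implicitly) $(P\cap B)\cup\{p\}$ and bounds $\norm{p'-p}\leq\epsilon\norm{p'-q}$ directly, thereby obtaining the slightly stronger conclusion that $p$ is an $(\epsilon,\gamma/2)$-AIFP; your choice $P'=P\cap B(1+\gamma)$ and the separate upper/lower bounds on $\norm{p'-q}$ and $\norm{p-q}$ lead to the same result with a marginally different bookkeeping.
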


This claim suggests that we can safely assume that the farthest distance between $q$ and $P \cap B$ is not too large (compared to $r_{B}$), as
otherwise the AIFP can be easily found. This helps us establish an upper bound on the farthest distance.
In the following, we let $d_{max} :=  (4+2\gamma) \epsilon^{-1} r_{B}$,
and assume that $\norm{q - o_{B}} \leq (3+\gamma) \epsilon^{-1} r_{B}$. 
From simple calculation, this implies that 
for any $p \in P \cap B(1+\gamma)$, we have $\norm{p-q} \leq d_{max}$.

Next, we try to find a lower bound for the farthest distance.
This process is much more complicated.
We need an aggregation tree $T_P$ with distortion polynomial
$\mathcal{P}_{HST}(n,d) = nd$. Later, we will use this $T_P$ to  construct the query data structure.
Let $p_N \in P$ be a $2$-nearest neighbor of $q$ in $P$  (\emph{i.e.} for any $p' \in P$,
$\norm{p_N - q} \leq 2 \norm{p'-q}$), and 
$r_N$ be its distance to $q$ ({\em i.e.,} $r_N := \norm{p_N - q}$). Denote by $v$ the lowest (closest to a leaf)
node of $T_P$ such that $p_N \in P(v)$ and $s(v) + r_{N} \leq \gamma r_{B}/64$.
Note that such a node may not exist (\emph{i.e.} $r_{N} > \gamma r_{B}/64$), and we will show that the AIFP can be found easily in such a case. We have the following claims.

\begin{claim}
\label{cl-out}
If node $v$ does not exist, for any $p \in P$,
$\norm{q - p} \geq  \gamma r_{B}/256 nd$;
Otherwise, for any $p_{out} \in P \setminus P(v)$,
$\norm{q - p_{out}} \geq  \gamma r_{B}/256 nd$.
\end{claim}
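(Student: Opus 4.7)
The plan is to split on whether $v$ exists and to leverage two separate lower bounds on $\norm{q - p_{out}}$: (i) the 2-nearest-neighbor property, which gives $\norm{q-p} \geq r_N/2$ for every $p \in P$; and (ii) property (5) of the aggregation tree, which, when $v$ has a parent $v_p$ in $T_P$, forces the minimum distance $r_{out}$ between $P(v)$ and $P \setminus P(v)$ to satisfy $r_{out} \geq s(v_p)/(nd)$. By the choice of $v$ (the node at which the defining condition $s(\cdot) + r_N \leq \gamma r_B/64$ first fails to extend upward along the ancestor chain of $p_N$'s leaf), the parent $v_p$ violates this condition, yielding $s(v_p) > \gamma r_B/64 - r_N$ and therefore $r_{out} > (\gamma r_B/64 - r_N)/(nd)$.

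The case where $v$ does not exist is easy: even the leaf of $T_P$ representing $p_N$, for which $s = 0$, violates the defining condition, so $r_N > \gamma r_B/64$. Then (i) immediately gives $\norm{q-p} \geq r_N/2 > \gamma r_B/128 \geq \gamma r_B/(256\, nd)$ for every $p \in P$, since $nd \geq 1$.

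Now suppose $v$ exists. If $v$ is the root, then $P \setminus P(v) = \emptyset$ and there is nothing to prove. Otherwise, fix $p_{out} \in P \setminus P(v)$. The triangle inequality together with $\norm{p_N - p_{out}} \geq r_{out}$ gives $\norm{q - p_{out}} \geq r_{out} - r_N > (\gamma r_B/64 - r_N)/(nd) - r_N$, while (i) independently gives $\norm{q - p_{out}} \geq r_N/2$. I would then do a short case split on whether $r_N$ sits above or below a threshold of order $\gamma r_B/(nd)$: when $r_N$ is large, the 2-nearest-neighbor bound $r_N/2$ already exceeds $\gamma r_B/(256 nd)$; when $r_N$ is small, the distortion bound $r_{out} - r_N$ is close to $\gamma r_B/(64 nd)$ and again exceeds $\gamma r_B/(256 nd)$.

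The only real difficulty is constant-bookkeeping: making sure the two complementary bounds combine to give precisely $\gamma r_B/(256\, nd)$ rather than a slightly looser expression. There is no conceptual obstacle beyond correctly invoking the failure of the defining condition at $v_p$ and the distortion polynomial $\mathcal{P}_{HST}(n,d) = nd$, and the proof is essentially a two-line calculation in each of the two regimes of $r_N$.
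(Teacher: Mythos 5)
Your proposal is correct and follows essentially the same route as the paper: dispose of the case where $v$ does not exist via $r_N > \gamma r_B/64$ and the 2-nearest-neighbor bound, and otherwise combine the failure of the defining condition at the parent $v_p$ (giving $s(v_p) + r_N \geq \gamma r_B/64$) with the distortion property (5) and a case split on the size of $r_N$. The one place you are actually more careful than the paper is the triangle-inequality step $\norm{q-p_{out}} \geq r_{out} - r_N$: the paper simply writes $s(v_p) \leq nd\,\norm{q-p_{out}}$, which is not literally what property (5) gives since $q$ need not lie in $P(v)$, and your extra $-r_N$ term makes the constant bookkeeping genuinely tight --- with the natural threshold $r_N \geq \gamma r_B/(128nd)$ the crude estimate $r_N + r_N/(nd) \leq 2r_N$ in the small-$r_N$ case yields only a lower bound of $0$, so you need the finer estimate $r_N(1+1/(nd))$ together with $nd \geq 2$ to land exactly on $\gamma r_B/(256\,nd)$ (or else settle for $\gamma r_B/(512\,nd)$, which would be harmless elsewhere in the paper).
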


The following two claims assume that $v$ does exist.

\begin{claim}
\label{cl-case1}
If $\norm{o_{B} - q} > (1+\gamma/64)r_{B}$, then for every $p \in P(v)$,
$p \not\in B$.
\end{claim}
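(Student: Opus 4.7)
The plan is to bound the distance from $q$ to any point $p \in P(v)$ from above using the defining property of $v$, and then combine this bound with the assumed lower bound on $\norm{o_B - q}$ via the triangle inequality to force $p$ outside $B$.

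First I would observe that by the definition of the aggregation tree, the diameter of $P(v)$ is at most $s(v)$. In particular, since $p_N \in P(v)$, every $p \in P(v)$ satisfies $\norm{p - p_N} \leq s(v)$. Combined with the triangle inequality and the definition $r_N = \norm{p_N - q}$, this gives
\[
\norm{p - q} \leq \norm{p - p_N} + \norm{p_N - q} \leq s(v) + r_N \leq \gamma r_B / 64,
\]
where the last inequality is the defining condition on $v$.

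Next I would apply the triangle inequality the other way: $\norm{o_B - q} \leq \norm{o_B - p} + \norm{p - q}$, and rearrange to obtain
\[
\norm{o_B - p} \geq \norm{o_B - q} - \norm{p - q} > (1 + \gamma/64) r_B - \gamma r_B / 64 = r_B,
\]
using the hypothesis $\norm{o_B - q} > (1+\gamma/64) r_B$ and the upper bound on $\norm{p - q}$ just derived. Thus $p \notin B$, proving the claim.

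There is no real obstacle here; the claim is essentially a direct two-step triangle inequality argument. The only subtlety is making sure one correctly invokes the definition of $v$ (which supplies $s(v) + r_N \leq \gamma r_B/64$) and the diameter property $\mathrm{Dia}(P(v)) \leq s(v)$ from the aggregation tree definition; once those are in place, the computation is immediate.
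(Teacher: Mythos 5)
Your proof is correct and follows essentially the same route as the paper's: bound $\norm{p-q} \leq s(v) + r_N \leq \gamma r_B/64$ using the diameter property of $P(v)$ and the defining condition on $v$, then apply the triangle inequality against $\norm{o_B - q} > (1+\gamma/64)r_B$ to conclude $\norm{o_B - p} > r_B$. Your write-up just makes explicit the two triangle-inequality steps that the paper leaves implicit.
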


For any positive real number $x$, let $[x]_{\lambda}$ denote the smallest real number
that can be written as
$(1+\lambda)^t$ for some integer $t$ 
such that $(1+\lambda)^t \geq x$.

\begin{claim}
\label{cl-case2}
Let $r^{-} := r_{N} + s(v)$ and $r^{(2)}_B := [(1+\gamma/16)r_B]_{\lambda}$. 
If $\norm{o_{B} - q} \leq (1+\gamma/64)r_{B}$, and $P \cap B \not= \emptyset$
then  
\begin{enumerate}
    \item If $r^{-} > \gamma r_{B}/512 n^2 d^2$, 
    there exists $p_1 \in \mathcal{B}(o_B, r^{(2)}_B) \cap P$ 
    such that $\norm{p_1 - q} \geq \gamma r_{B}/2048 n^3 d^3$.
    \item If $r^{-} \leq \gamma r_{B}/512 n^2 d^2$, the following holds, where  $r^{(3)}_B := [r^{-}]_{\lambda}$.    (a) $\mathcal{B}(q, r^{(3)}_B) \subset B(1+\gamma/32)$;
        (b) For every $p \in P(v)$, $p \in \mathcal{B}(q, r^{(3)}_B)$ and for every $p \in P \setminus P(v)$, $p \not\in \mathcal{B}(q, (1+\gamma)r^{(3)}_B)$; 
        (c) There exists $p' \in P(v)$ such that $\norm{p' - q} \geq r^{(3)}_B/8nd$. 
\end{enumerate}
\end{claim}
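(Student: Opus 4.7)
The plan is to split the argument along the dichotomy stated in the claim ($r^{-} > \gamma r_B/(512 n^2 d^2)$ versus $r^{-} \leq \gamma r_B/(512 n^2 d^2)$), and within each branch exploit two main tools: (i)~Claim~\ref{cl-out}, which quantifies the isolation of $P(v)$ from $P \setminus P(v)$ by a margin of $\gamma r_B/(256 nd)$, and (ii)~the aggregation-tree distortion property applied to the subtree rooted at $v$, which yields the lower bound $Dia(P(v)) \geq s(v)/(nd)$. Before splitting, I would record once and for all a geometric inclusion: every $p \in P(v)$ lies in $\mathcal{B}(o_B, r_B^{(2)})$. Indeed, since $v$ exists, $s(v) + r_N \leq \gamma r_B/64$, and combined with $\norm{q - o_B} \leq (1+\gamma/64) r_B$ this yields $\norm{p - o_B} \leq s(v) + r_N + \norm{q - o_B} \leq (1 + 3\gamma/64) r_B < (1+\gamma/16) r_B \leq r_B^{(2)}$.

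For Case~1, I would further split by whether $r_N \geq r^{-}/2$ or $s(v) \geq r^{-}/2$. In the first subcase, $p_N$ itself is the witness: it lies in $\mathcal{B}(o_B, r_B^{(2)})$ by the inclusion above, and $\norm{p_N - q} = r_N \geq r^{-}/2 > \gamma r_B/(1024 n^2 d^2) \geq \gamma r_B/(2048 n^3 d^3)$. In the second subcase the diameter bound supplies $p_a, p_b \in P(v) \subset \mathcal{B}(o_B, r_B^{(2)})$ with $\norm{p_a - p_b} \geq s(v)/(nd) > \gamma r_B/(1024 n^3 d^3)$, and the triangle inequality forces one of them to be at distance at least $\gamma r_B/(2048 n^3 d^3)$ from $q$.

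For Case~2 I would handle the three parts in turn. Part~(a) is a direct calculation: $r_B^{(3)} \leq (1+\lambda) r^{-} \leq 2 r^{-} \leq \gamma r_B/(256 n^2 d^2) \leq \gamma r_B/64$, which together with $\norm{q - o_B} \leq (1+\gamma/64) r_B$ gives $\mathcal{B}(q, r_B^{(3)}) \subset B(1+\gamma/32)$. Part~(b) has two sub-parts: for $p \in P(v)$ the bound $\norm{p - q} \leq s(v) + r_N = r^{-} \leq r_B^{(3)}$ is immediate; for $p \in P \setminus P(v)$, Claim~\ref{cl-out} supplies $\norm{p - q} \geq \gamma r_B/(256 nd)$, and a short comparison shows this exceeds $(1+\gamma) r_B^{(3)} \leq \gamma r_B/(128 n^2 d^2)$ as soon as $nd \geq 2$. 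Part~(c) again splits on $r_N$ versus $s(v)$: if $r_N \geq r_B^{(3)}/(8nd)$, take $p' = p_N$; otherwise $s(v) = r^{-} - r_N \geq r_B^{(3)}/(1+\lambda) - r_B^{(3)}/(8nd)$, and invoking the diameter bound together with a triangle inequality produces $p' \in P(v)$ with $\norm{p' - q} \geq s(v)/(2nd) \geq r_B^{(3)}/(8nd)$ after a short numerical check using $\lambda \ll 1$ and $nd \geq 1$.

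The main obstacle is the subcase where $s(v)$ dominates (appearing both in Case~1 and in part~(c) of Case~2): the argument must convert the abstract size parameter $s(v)$ of an internal aggregation node into a genuine geometric separation of two points of $P(v)$, and then lose another factor of two to push this separation into a lower bound on the distance from $q$ to one of them. The inequality $Dia(P(v)) \geq s(v)/(nd)$ that drives this step is not stated explicitly in the excerpt but follows by applying the distortion property to the subtree rooted at $v$; once it is in hand, the rest of the proof is a sequence of elementary triangle-inequality estimates and scale comparisons between the five relevant length scales $r_B$, $r_B^{(2)}$, $r_B^{(3)}$, $r^{-}$, and $\gamma r_B/(nd)^{\text{poly}}$.
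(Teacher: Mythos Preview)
Your proposal is correct and follows essentially the same route as the paper's proof: the same $r_N$-vs.-$s(v)$ split in Case~1, the same use of Claim~\ref{cl-out} and the distortion bound $s(v) \leq nd \cdot Dia(P(v))$ throughout, and the same triangle-inequality bookkeeping for the inclusions in Case~2. The only cosmetic differences are that in Case~1(a) you take $p_N$ itself as the witness (using $p_N \in P(v) \subset \mathcal{B}(o_B, r_B^{(2)})$) whereas the paper takes an arbitrary point of $P \cap B$ and invokes the $2$-nearest-neighbor property, and in Case~2(c) you split at the threshold $r_B^{(3)}/(8nd)$ rather than the paper's $r_B^{(3)}/4$; both variants work.
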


The above facts allow us to  reduce the AIFP query
to a constrained query with an estimated lower bound on the farthest distance.
We assume $P \cap B \not= \emptyset$, otherwise the query becomes trivial.
Let $p_{F}$ be the farthest point to $q$ in $P \cap B$. Note that there are three possible cases: $p_{F} \in P(v)$, $p_{F} \in P \setminus P(v)$, or $v$ does not exist.
If $p_{F} \in P \setminus P(v)$ or $v$ does not exist,
from Claim \ref{cl-out} we know that $(B,q)$ satisfies the constraint of $(r^{(1)}_{B}, d^{(1)}_{min}, d^{(1)}_{max})$, where $r^{(1)}_B := r_B$, $d^{(1)}_{min} := \gamma r_{B}/256 nd$ and $d^{(1)}_{max} := d_{max}$. 
Thus, these two cases can be captured by performing a constrained AIFP query to the pre-built data structures.
Also note that $d^{(1)}_{max}/d^{(1)}_{min} = 256(4+2\gamma) nd/\epsilon$, which is a polynomial of $n$ and $d$.
This means that the space complexity of the pre-built data structures can be bounded.

Next,  we consider the case of $p_{F} \in P(v)$. 
Note that if $\norm{o_{B} - q} > (1+\gamma/64)r_{B}$,
from Claim \ref{cl-case1} we know that all points in $P(v)$ are not in $B$,  which is a contradiction.
Thus, we can assume that $\norm{o_{B} - q} \leq (1+\gamma/64)r_{B}$, which leads to two sub cases
discussed in Claim \ref{cl-case2}:
(1) $r^{-} > \gamma r_{B}/512 n^2 d^2$, and (2) $r^{-} \leq \gamma r_{B}/512 n^2 d^2$.

For sub-case (1), we first let $d^{(2)}_{min} := \gamma r_{B}/2048 n^3 d^3$. Then, from Claim \ref{cl-case2} we know that 
the farthest distance between $q$ and a point in $P \cap \mathcal{B}(o_B, r^{(2)}_B)$ is at least $d^{(2)}_{min}$.
From previous discussion and the fact that $\mathcal{B}(o_B, r^{(2)}_B) \subset B(1+\gamma)$,
we also know that this farthest distance is at most $d_{max}$.
Let $d^{(2)}_{max} := d_{max}$.
The problem of finding the farthest point $p'_F$ to $q$ from all points in $P \cap \mathcal{B}(o_B, r^{(2)}_B)$
satisfies the constraint of $(r^{(2)}_B, d^{(2)}_{min}, d^{(2)}_{max})$, and is thus solvable by querying a constrained
AIFP data structure. Since the ball range $\mathcal{B}(o_B, r^{(2)}_B)$ only slightly enlarges $B$ and
$\mathcal{B}(q, r^{(2)}_B) \subset B(1+\gamma)$,
with properly selected approximation factors, it can be shown that the found $p'_F$ is a $(\epsilon,\gamma)$-AIFP of $q$ in $B \cap P$.
In this case we also have $d^{(2)}_{max}/d^{(2)}_{min}$ bounded by a polynomial $2048(4+2\gamma) n^3 d^3/\epsilon$.

For sub-case (2), we let $r^{(3)}_B := d^{(3)}_{max} := r^{(3)}_B$ and $d''_{min} := r^{(3)}/8nd$.
From Claim \ref{cl-case2}, we know that if $p_{F} \in P(v)$,  $p_{F}$ is also the farthest point to $q$ in 
$P \cap B \cap \mathcal{B}(q, r^{(3)}_B)$.
An $(\epsilon,\gamma)$-AIFP of $q$ in $P \cap B$ can then be found by identifying an AIFP of $q$ in 
$P \cap \mathcal{B}(q, r^{(3)}_B)$. Note that this query satisfies the constraint of $(r^{(3)}_B, d^{(3)}_{min}, d^{(3)}_{max})$,
with $d^{(3)}_{max}/d^{(3)}_{min}$ bounded by a polynomial of $n$ and $d$.

From the above discussions, we know that 
given any AIFP query $(B,q)$ (with radius of $B$ aligned),
it is possible to reduce it to a constrained AIFP query with  constraint  $(r^{(i)}_B, d^{(i)}_{min}, d^{(i)}_{max})$, which  
satisfies the inequality of 
$d^{(i)}_{min} \leq r^{(i)}_B \leq d^{(i)}_{max}$, 
 $d^{(i)}_{max}/d^{(i)}_{min}$ is bounded by a polynomial $\mathcal{P}_{gap}(n,d) := 2048(4+2\gamma) n^3 d^3/\epsilon$,
and $r^{(i)}_B$ is aligned. 
In the following, we will show how to use multi-scale construction to build a data structure that supports all such 
constrained AIFP queries.

\subsection{Multi-scale Construction for General AIFP Query}

In this subsection, we show how to build a Multi-Scale data structure 
to  answer general AIFP queries.
Our goal is to choose the appropriate parameters $\Gamma \geq 1$, $0<\lambda<1$ and $\Delta \geq 4nd$ 
for Algorithm \ref{alg-merge} so that for
every AIFP query with constraint $(r^{(i)}_B, d^{(i)}_{min}, d^{(i)}_{max})$, 
there exists a bucket $B^{+}_t$ whose range $( (1+\lambda)^t, (1+\lambda)^{t+\Gamma+1} ]$
wholly covers the interval $ [d^{(i)}_{min}, d^{(i)}_{max}] $,
and the constrained AIFP data structure built for the bucket 
with $d_{min} := (1+\lambda)^t$ and $d_{max} := (1+\lambda)^{t+\Gamma+1}$ 
can be used to answer the query.
Note that we have already defined $\lambda := \min(\epsilon,\gamma)/512$ and  $\Delta := 4nd$. 
The remaining task is to determine the value of $\Gamma$.

Observe that $d^{(i)}_{max}/d^{(i)}_{min}$ is bounded by a polynomial $\mathcal{P}_{gap}(n,d) := 2048(4+2\gamma) n^3 d^3/\epsilon$.
Let $\Gamma' := \lceil \log_{1+\lambda} \mathcal{P}_{gap}(n,d) \rceil$, and 
 $\Gamma_{L} \geq \Gamma'$ and $\Gamma_{R} \geq \Gamma'$ be integer parameters to be determined later.
Denote by $\Gamma$ the sum of $\Gamma_{L}$ and $ \Gamma_{R}$, {\em i.e.,}  $\Gamma := \Gamma_{L} + \Gamma_{R}$.
For every integer $t$, define $r_{mid}(t) := (1+\lambda)^{t+\Gamma_{L}}$. 
Therefore, we have $ r_{mid}(t)/(1+\lambda)^t \geq (1+\lambda)^{\Gamma_{L}}  \geq \mathcal{P}_{gap}(n,d)$
and $ (1+\lambda)^{t+\Gamma+1}/r_{mid}(t) \geq (1+\lambda)^{\Gamma_{R}}  \geq \mathcal{P}_{gap}(n,d)$.
For any AIFP query $(B,q)$ with constraint $(r^{(i)}_B, d^{(i)}_{min}, d^{(i)}_{max})$,
it is always possible to find a bucket $B^{+}_t$ such that $r^{(i)}_B = r_{mid}(t)$. 
Clearly, interval $( (1+\lambda)^t, (1+\lambda)^{t+\Gamma+1} ]$
wholly covers the interval $ [d^{(i)}_{min}, d^{(i)}_{max}] $.
If a constrained AIFP data structure is constructed for $B^{+}_t$ with constraint
$((1+\lambda)r_{mid}(t), (1+\lambda)^{t}, (1+\lambda)^{t+\Gamma+1})$,
it can be used to answer the AIFP query $(B,q)$. (See Lemma \ref{lm-multi}.)


To summarize the above discussions,
we set the parameters of Algorithm \ref{alg-multi} as the following.
The algorithm then produces the data structure for $(\epsilon, \gamma)$-AIFP query.
Assume that a real number $0<\delta<1$ is given and we would like to achieve $1-\delta$ query success probability.

\begin{itemize}
    \item $\lambda := \min(\epsilon, \gamma)/512, 
    \Gamma_{L} := \Gamma' + \lceil \log_{1+\lambda} 8 \rceil, 
    \Gamma_{R} := \Gamma' + \lceil \log_{1+\lambda} 8 \rceil,
    \Gamma := \Gamma_{L} + \Gamma_{R}, \Delta := 4nd$.
    \item Routine $\mathcal{A}$: Given a non-empty bucket $B^+_t$, $\mathcal{A}$,
    it uses Algorithm \ref{alg-build-caifp} to
    creates a constrained $(\lambda/6, \lambda/6)$-AIFP data structure for point set $re(B^+_t)$
    for constraint $( (1+\lambda) r_{mid}(t), (1+\lambda)^t, (1+\lambda)^{t+\Gamma+1} )$,
    with success probability at least $1 - \delta/4$.
\end{itemize}

Note that we let $\Gamma_{L} := \Gamma_{R} := \Gamma' + \lceil \log_{1+\lambda} 8 \rceil$. 
This allows more gap when fitting the interval $ [d^{(i)}_{min}, d^{(i)}_{max}] $
in  $( (1+\lambda)^t, (1+\lambda)^{t+\Gamma+1} ]$, which is required by Lemma \ref{lm-multi}.







With the multi-scale data structure constructed by Algorithm \ref{alg-multi} using the above parameters, 
we are able to answer any general AIFP query
by reducing it to constrained AIFP queries with constraints  $(r^{(i)}_B, d^{(i)}_{min}, d^{(i)}_{max})$, $i=1,2,3$.
The detailed algorithm is described as the following Algorithm \ref{alg-queryAIFP} 
(for align query ball radius) and
Algorithm \ref{alg-queryAIFP-full} (for any radius).
Besides the multi-scale structure built using the
parameters $\lambda,\Gamma,\Delta$ and $\mathcal{A}$ as described above,
a $(1+\lambda)$-nearest neighbor data structure from \cite{har2012approximate}, with success probability $1 - \delta/4$ is also used.

\begin{algorithm}[th]
\caption{Query-AIFP-ALIGNED$(B,q)$}

\textbf{Input:} A query ball $B$ with $\lambda$-aligned radius $r_B$ and centered at $o_{B}$. A query point $q$. \\
\textbf{Output:} An $(\epsilon,\gamma/2)$-AIFP of $q$ in $P \cap B$.
\label{alg-queryAIFP}

\begin{algorithmic}[1]

\State{ If $\norm{q - o_{B}} \geq (3+\gamma) \epsilon^{-1} r_{B}$.
Find a $(1+\gamma/2)$-approximate nearest neighbor $p_o \in P$ of $o_{B}$.
(Note this can be done using the $(1+\lambda)$-approximate nearest neighbor data structure,
since $1+\lambda < 1 + \gamma/2$.)
If $p_o \in B(1+\gamma/2)$, return $p_o$, otherwise return NULL.
}

\State{
Initialize a point set $\mathcal{Q} := \{ \}$.
}

\State{
Let $t_1$ be the integer such that $r_{mid}(t_1) = r_B$.
Query the constrained AIFP data structure for bucket $B^{+}_{t_1}$ with $(B,q)$ (using {\bf Algorithm} \ref{alg-query-caifp}).
Let $p_1$ be the result and put $p_1$ into $\mathcal{Q}$.
}

\State{
Find $p_N \in P$ which is a $(1+\lambda)$-nearest neighbor of $q$ in $P$.
Denote $r_N := \norm{p_N - q}$, and put $p_N$ into $\mathcal{Q}$.
}

\State{
Let $v$ be the lowest (closest to a leaf)
node of $T_P$ such that $p \in P(v)$ and $s(v) + r_{N} \leq r_{B}/64\gamma$.
(Such $v$ can be found by binary search on the tree path.)
If $v$ does not exist, \textbf{GOTO} step 7.
}

\State{
Denote $r^{(2)}_B := [(1+\gamma/16)r_B]_{\lambda}$.
Let $t_2$ be the integer such that $r_{mid}(t_2) = r^{(2)}_B$.
Query the constrained AIFP data structure for bucket $B^{+}_{t_2}$
with $(\mathcal{B}(o_B, r^{(2)}_B), q)$.
Let $p_2$ be the result and put $p_2$ into $\mathcal{Q}$.
}

\State{
Denote $r^{(3)}_B := [r_{N} + s(v)]_{\lambda}$.
Let $t_3$ be the integer such that $r_{mid}(t_3) = r^{(3)}_B$.
Query the constrained AIFP data structure for bucket $B^{+}_{t_3}$
with $(\mathcal{B}(q, r^{(3)}_B), q)$.
Let $p_3$ be the result and put $p_3$ into $\mathcal{Q}$.
}

\State{
From $\mathcal{Q}$, pick the one that lies in $B(1+\gamma/2)$ and the distance to $q$ is the farthest.
If no such point exists, return NULL.
}

\end{algorithmic}
\end{algorithm}

\FloatBarrier

We then present AIFP query algorithm for $B$ with arbitrary radius.
The algorithm simply enlarges $B$ a little to make the radius aligned,
then call the above Algorithm \ref{alg-queryAIFP} to handle it.

\begin{algorithm}[th]
\caption{Query-AIFP$(B,q)$}

\textbf{Input:} A query ball $B$ with radius $r_B$ and center $o_B$. A query point $q$. \\
\textbf{Output:} An $(\epsilon,\gamma)$-AIFP of $q$ in $P \cap B$.
\label{alg-queryAIFP-full}

\begin{algorithmic}[1]

\State{
Let $r'_B := [r_B]_{\lambda}$.
Make a query $(\mathcal{B}(o_B, r'_B),q)$ using Algorithm \ref{alg-queryAIFP},
and return the result.
}

\end{algorithmic}
\end{algorithm}

It is worth noting that since we reduce a general query
to at most three constrained AIFP queries,
from Lemma \ref{lm-aifp} and the fact that the ratio $d_{max}/d_{min}$ 
satisfies
$d_{max}/d_{min} = (1+\lambda)^{t+\Gamma+1}/(1+\lambda)^{t}
= (1+\lambda)^{\Gamma+1}$, which is bounded by a polynomial of $n,d$,
we know that $\log d_{max}/d_{min}$ is $O_{\epsilon,\gamma}(\log nd)$ and
the query time is sub-linear.
In the next subsection we provide the detailed analysis of the algorithms and prove Theorem \ref{thm-main-aipf}.

\subsection{The Analysis of the AIFP Algorithms}

In this subsection, we analyze the algorithms for AIFP and prove Theorem \ref{thm-main-aipf},
the main result of this paper.
We first prove the correctness of Algorithm \ref{alg-queryAIFP}.

\begin{lemma}
With probability at least $1 - \delta$, Algorithm \ref{alg-queryAIFP}
returns a $(\epsilon,\gamma/2)$-AIFP of $q$ in $P \cap B$.
\end{lemma}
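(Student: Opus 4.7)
The plan is to verify correctness by first peeling off the easy ``far'' case handled in Step~1, and then, for the remaining case, to do a case analysis on where the true farthest point $p_F$ of $q$ in $P \cap B$ lies, matching each case against one of the three constrained queries producing $p_1$, $p_2$, $p_3$ (with $p_N$ serving as a backup via item (1) of Lemma~\ref{lm-multi}). I will then combine everything with a union bound.

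First I will dispose of Step~1. If $\norm{q-o_{B}} \geq (3+\gamma)\epsilon^{-1} r_B$, Claim~\ref{cl-far} says any point in $B(1+\gamma/2)$ is an $(\epsilon,\gamma/2)$-AIFP; the $(1+\lambda)$-approximate nearest neighbor of $o_B$ is such a point whenever $P\cap B\neq\emptyset$, because then $p_o$ lies in $\mathcal{B}(o_B,(1+\lambda)\cdot r_B)\subset B(1+\gamma/2)$. If $p_o\notin B(1+\gamma/2)$, then no point of $P$ can lie in $B$, so returning NULL is correct.

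Next, assume $\norm{q-o_{B}}<(3+\gamma)\epsilon^{-1}r_B$, so by the preamble of Section~\ref{sec-aifp} every $p\in P\cap B(1+\gamma)$ satisfies $\norm{p-q}\le d_{max}$. I will split according to where $p_F$ sits. In Case A, either $v$ does not exist or $p_F\in P\setminus P(v)$. Then by Claim~\ref{cl-out}, the query $(B,q)$ satisfies the constraint $(r_B,\gamma r_B/256nd,d_{max})$, which is wholly covered by the interval of bucket $B^{+}_{t_1}$ (with the $\Gamma_L,\Gamma_R$ slack we added). Lemma~\ref{lm-multi} then gives that either $p_N$ is a $(2\lambda,2\lambda)$-AIFP, or $p_1$ (returned by the constrained query, which is a $(\lambda/6,\lambda/6)$-AIFP on $re(B^{+}_{t_1})$ by Lemma~\ref{lm-aifp}) is a $(2\lambda,2\lambda)$-AIFP of $q$ in $P\cap B$; since $2\lambda\le \min(\epsilon,\gamma/2)$, both $p_N$ and $p_1$ (if valid) are $(\epsilon,\gamma/2)$-AIFPs. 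In Case B ($p_F\in P(v)$ with $r^->\gamma r_B/512 n^2 d^2$), I use Claim~\ref{cl-case2}(1) to see that the sub-query $(\mathcal{B}(o_B,r^{(2)}_B),q)$ satisfies $(r^{(2)}_B,\gamma r_B/2048 n^3 d^3,d_{max})$ which fits in the interval of $B^{+}_{t_2}$; then Lemma~\ref{lm-multi} applied to this bucket gives that $p_2$ (or $p_N$) is a $(2\lambda,2\lambda)$-AIFP of $q$ in $P\cap \mathcal{B}(o_B,r^{(2)}_B)$, which in particular dominates every point of $P\cap B$ up to factor $(1-2\lambda)\ge 1-\epsilon$, and lies in $\mathcal{B}(o_B,(1+2\lambda)r^{(2)}_B)\subset B(1+\gamma/2)$. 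In Case C ($p_F\in P(v)$ with $r^-\le\gamma r_B/512n^2d^2$), Claim~\ref{cl-case2}(2) places all of $P(v)$ inside $\mathcal{B}(q,r^{(3)}_B)\subset B(1+\gamma/32)$ and separates $P\setminus P(v)$ from it, so the farthest point of $P\cap B$ is also the farthest point of $P\cap \mathcal{B}(q,r^{(3)}_B)$; an analogous application of Lemma~\ref{lm-multi} on $B^{+}_{t_3}$ shows $p_3$ (or $p_N$) is a $(2\lambda,2\lambda)$-AIFP and also lies in $B(1+\gamma/2)$.

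Finally, since Step~8 returns the farthest element of $\mathcal{Q}$ that lies in $B(1+\gamma/2)$, the returned point is in every case at least as far from $q$ as the qualifying candidate identified above, hence is an $(\epsilon,\gamma/2)$-AIFP. For the probability bound, the three constrained AIFP queries and the single approximate nearest-neighbor query each succeed with probability at least $1-\delta/4$, so by a union bound the entire algorithm succeeds with probability at least $1-\delta$. The main technical obstacle will be Case~B: one has to verify that the \emph{slightly enlarged} range $\mathcal{B}(o_B,r^{(2)}_B)$ produced by the $\lambda$-aligned ceiling still lies inside $B(1+\gamma/2)$ after the further $(1+\gamma)$-fuzziness of the AIFP guarantee is applied, and that the farthest point over $P\cap\mathcal{B}(o_B,r^{(2)}_B)$ indeed upper-bounds (up to $1-\epsilon$) the farthest point over $P\cap B$; this requires a careful chaining of the approximation factors $\lambda/6$, $2\lambda$, $\gamma/16$, $\gamma/32$ and $\gamma/2$, using $\lambda=\min(\epsilon,\gamma)/512$ to absorb all cross-terms.
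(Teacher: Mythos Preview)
Your proposal is correct and follows essentially the same approach as the paper's own proof: both condition on all sub-queries succeeding (union bound over the three constrained AIFP queries and the nearest-neighbor query), then perform the same case split---the far case via Claim~\ref{cl-far}, then the three cases governed by whether $v$ exists and where $p_F$ sits relative to $P(v)$, invoking Lemma~\ref{lm-multi} on the appropriate bucket $B^{+}_{t_i}$ with $p_N$ as the alternative branch. One small imprecision: in your Case~C you assert that ``the farthest point of $P\cap B$ is also the farthest point of $P\cap \mathcal{B}(q,r^{(3)}_B)$,'' which need not hold literally (points of $P(v)$ may lie in $\mathcal{B}(q,r^{(3)}_B)\setminus B$); the paper instead only uses $p_F\in P(v)\subset \mathcal{B}(q,r^{(3)}_B)$ so that the $(2\lambda,2\lambda)$-AIFP guarantee on the smaller ball already dominates $p_F$, and you should also explicitly invoke Claim~\ref{cl-case1} to justify the hypothesis $\norm{o_B-q}\le(1+\gamma/64)r_B$ of Claim~\ref{cl-case2} in Cases~B and~C.
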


\begin{proof}
We assume that all the constrained AIFP queries and nearest neighbor queries are successful.
It is easy to see that this happens with probability at least $1 - \delta$.

We decompose the discussion into several cases.

\textbf{Case 1:} $\norm{q - o_B} \geq (3+\gamma) \epsilon^{-1} r_B$.
In this case, the algorithm return a point $p_o \in P \cap B(1 + \gamma/2)$ in Step 1, 
or return NULL, which only happens
when there does not exist points in $P \cap B$ and NULL is a correct answer in this case.
From our earlier proof of Claim \ref{cl-far}, we have shown any point in $P \cap B(1 + \gamma/2)$
is a $(\epsilon, \gamma/2)$-AIFP of $q$ in $P \cap B$.
This case is proved.

In the following, we assume that $\norm{q - o_B} < (3+\gamma) \epsilon^{-1} r_B$.
We would show that in each of the following cases,
a correct $(\epsilon,\gamma/2)$-AIFP of $q$ in $P \cap B$ would be put into the set $\mathcal{Q}$.
Thus an $(\epsilon,\gamma/2)$-AIFP of $q$ in $P \cap B$ would be returned in Step 8.

\textbf{Case 2:} The node $v$ in Step 5 does not exists, or $p_F \in P \setminus P(v)$ where $p_F$ is
the farthest point to $q$ in $P \cap B$.
From the discussion in Section \ref{sec-aifp}, query $(B,q)$ satisfies a constraint
$(r^{(1)}_B,d^{(1)}_{min},d^{(1)}_{max})$ so that it can be answered by the bucket $B^+_{t_1}$ in Step 3.
From Lemma \ref{lm-multi},
either the point $p_1$ obtained in Step 1, or $p_N$ obtained in Step 4, would be a $(2\lambda, 2\lambda)$-AIFP of $q$ in $P \cap B$,
and is clearly a $(\epsilon,\gamma/2)$-AIFP of $q$ in $P \cap B$.

In the following, we exclude the scenario where $p_N$ is a $(2\lambda, 2\lambda)$-AIFP of $q$ for a constrained query we
make in Step 6 or 7.
Indeed this scenario is handled well since $p_N$ is already put into $\mathcal{Q}$.

\textbf{Case 3:} $v$ exists and $p_F \in P(v)$.
From the discussion in Section \ref{sec-aifp},
one of the following holds.
\begin{enumerate}
    \item[(a)] Item 1 of Claim \ref{cl-case2} holds, 
    and the query $(\mathcal{B}(o_B, r^{(2)}_B), q)$
    satisfies a constraint
    $(r^{(2)}_B,d^{(2)}_{min},d^{(2)}_{max})$ so that it can be answered by the bucket $B^+_{t_2}$ in Step 6;
    \item[(b)] Item 2 of Claim \ref{cl-case2} holds, and the query $(\mathcal{B}(q, r^{(3)}_B), q)$
    satisfies a constraint
    $(r^{(3)}_B,d^{(3)}_{min},d^{(3)}_{max})$ so that it can be answered by the bucket $B^+_{t_3}$ in Step 7.
\end{enumerate}

For sub-case (a),
$p_2$ is a $(2\lambda, 2\lambda)$-AIFP of $q$ in $P \cap \mathcal{B}(o_B, r^{(2)}_B)$.
Thus $p_2 \in \mathcal{B}(o_B, (1+2\lambda) r^{(2)}_B)$.
From some simple calculation, we know that $(1+2\lambda) r^{(2)}_B \leq (1+\gamma/2) r_B$.
Also $B \subset \mathcal{B}(o_B, r^{(2)}_B)$.
Thus $p_2$ is a $(\epsilon,\gamma/2)$-AIFP of $q$ in $P \cap B$.

Finally, for sub-case (b), $p_3$ is a $(2\lambda, 2\lambda)$-AIFP of $q$ in $P \cap \mathcal{B}(q, r^{(3)}_B)$.
Note since $p_F \in P(v) \subset \mathcal{B}(q, r^{(3)}_B)$, we have 
$\norm{q - p_3} \geq (1-2\lambda)\norm{q - p_F} \geq (1-\epsilon) \norm{q - p_F}$.
Also $\norm{q - p_3} \leq (1+2\lambda)r^{(3)}_B \leq (1+\gamma)r^{(3)}_B$,
thus $p_3 \in P(v) \subset \mathcal{B}(q, r^{(3)}_B) \subset B(1+\gamma/2)$.
Thus $p_3$ is a $(\epsilon,\gamma/2)$-AIFP of $q$ in $P \cap B$.

We have proved that in any case, some point in $\mathcal{Q}$ would be a $(\epsilon, \gamma/2)$
of $q$ in $P \cap B$. The proof is complete.

\end{proof}

In the following we prove the correctness of Algorithm \ref{alg-queryAIFP-full}.
We show that Algorithm \ref{alg-queryAIFP-full} returns $(\epsilon,\gamma)$-AIFP
of $q$ in $P \cap B$ with probability at least $1-\delta$.

\begin{proof}

Assume the call to Algorithm \ref{alg-queryAIFP} is successful, which happens with probability at
least $1 - \delta$.
Let $p$ be the result of the call to Algorithm \ref{alg-queryAIFP}.
Then $\norm{o_B - p} \leq (1+\gamma/2) r'_B \leq (1+\gamma/2)(1+\lambda) r_B \leq (1+\gamma) r_B$.
Also for any $p' \in B \cap P \subset  \mathcal{B}(o_B, r'_B) \cap P$,
we have $(1-\epsilon)\norm{q - p'} \leq \norm{q - p}$.
Thus $p$ is an $(\epsilon,\gamma)$-AIFP
of $q$ in $P \cap B$.

\end{proof}

In the following, we analyze the space and time complexity of the query scheme.

\textbf{Pre-processing Time and Space.}
The AIFP query data structure consists of three parts: An aggregation tree,  a multi-scale data structure
and a $(1+\lambda)$-nearest neighbor data structure.
Using the technique in \cite{har2012approximate},
an aggregation tree takes $O(dn\log^2 n)$ time to build and takes $O(dn)$ space.
A $(1+\lambda)$-nearest neighbor data structure takes $O_{c}(dn^{1+1/c} \cdot \text{Polylog}(n))$ time to build and
uses $O_{c}(dn^{1+1/c} \cdot \text{Polylog}(n))$ space, where $c := 1 + \lambda$ and the constant hidden by
the $O_{c}(\cdot)$ notation depends only on $c$.
For the multi-scale data structure,
note that the total number of nodes in all buckets is $O_{\lambda}(\Gamma^2 n \cdot \log (n\Delta))$.
From the values of $\lambda, \Gamma, \Delta$,
we know the total number of nodes is $O_{\epsilon,\gamma}(n \cdot \log n \cdot \log^2 (nd))$.
For each bucket, we build a constraint $(\lambda/6,\lambda/6)$-AIFP data structure with query success probability
$1 - \delta/4$,
and we have $\log (d_{max}/d_{min}) = O_{\epsilon,\gamma}(\log nd)$.
From Lemma \ref{lm-aifp},
the total preprocessing time and space for these data structures would be 
$O_{\epsilon,\gamma}(d \cdot [n \cdot \log n \cdot \log (nd)]^{1 + \rho} \cdot \log \delta^{-1} \cdot \log^2 nd)$,
for some constant $0<\rho<1$ which depends on $\epsilon, \gamma$.
The space/time bound can be simplified as $O_{\epsilon,\gamma}(d n^{1+\rho} \cdot \log \delta^{-1} \cdot \log^{5} nd)$.
Adding the complexity for each part together, it is not hard to see the total space/time complexity for preprocessing
is  $O_{\epsilon,\gamma}(d n^{1+\rho} \cdot \log \delta^{-1} \cdot \text{Polylog}(nd))$ for some $0<\rho<1$ which depends on
$\epsilon, \gamma$.

\textbf{Query Time.} The query algorithm consists of constant number of nearest neighbor search and constrained AIFP queries.
Each nearest neighbor search takes $O_{c}(dn^{1/c} \cdot \text{Polylog}(n))$
where $c := 1 + \lambda$ and the constant hidden by
the $O_{c}(\cdot)$ notation depends only on $c$.
For constrained AIFP queries, 
note each bucket has no more than $n$ nodes,
from Lemma \ref{lm-aifp},
the query time is $O_{\epsilon,\gamma}(d n^{\rho} \cdot \log \delta^{-1} \cdot \log(nd))$
for some constant $0<\rho<1$ which depends on $\epsilon, \gamma$.
To summarize,
the query time is 
$O_{\epsilon,\gamma}(d n^{\rho} \cdot \log \delta^{-1} \cdot \text{Polylog}(nd))$
for some constant $0<\rho<1$ which depends on $\epsilon, \gamma$.

Combining all the above arguments, Theorem \ref{thm-main-aipf} is proved.

\section{MEB Range Aggregate Query}
\label{sec-ameb}


Given any query ball $B$,
we find the AMEB of $P \cap B$ using an iterative 
algorithm by Badoiu and Clarkson \cite{badoiu2003-mebcore}.
Their algorithm was originally designed for finding an approximate MEB for a fixed point set $P$. 
With careful analysis we show that their approach, after some modifications, can still be used to 
find AMEB in any given range $B$.
Briefly speaking, our idea is to construct a small-size coreset of $P \cap B$.
The MEB of the coreset is then a $(1+\epsilon)$-approximate MEB of 
$P \cap B$. 
The algorithm selects the coreset in an iterative fashion. It starts with an
arbitrary point $p$ from $P$. 
Each iteration, it performs the following operation to add a point to the coreset:
(1) Compute an (approximate) MEB of the current coreset; 
(2) Identify the IFP in $P$ to the center of the current MEB, and add it to the coreset.
We show that after $O_{\epsilon}(\log n)$ iterations,
the MEB of the coreset is then  a $(1+\epsilon)$-AMEB of $P\cap B$.

Let $0<\epsilon, \gamma, \delta <1$ be any small constants.
Below we discuss the details of how to use an AIFP data structure built with appropriate
parameters to answer $(\epsilon,\gamma)$-AMEB
with success probability at least $1 - \delta$.

Let $\epsilon_{A} := \gamma_{A} := \min(\epsilon/18,\gamma/18)$.
Define parameters $\epsilon_0 = \epsilon^{2}/1600$
and $\epsilon' = \min((1-\epsilon_0)^{-1} - 1, (1-\epsilon^2/100)^{-1/2} - 1, \epsilon/3)$.
Let $\mathcal{S}$ be a data structure that is
capable of answering $(\epsilon_A,\gamma_A)$-AIFP with success probability
at least $1 - \delta \epsilon^2_0/16$.
We show that with $\mathcal{S}$
and a $(1+\gamma_A)$-nearest neighbor data structure,
given any query ball $B$ (with query success probability $1 - \delta/2$),
the following Algorithm \ref{alg-queryAMEB}
output an AMEB with desired approximation quality and success probability.

The algorithm follows the main idea of the coreset algorithm in \cite{badoiu2003-mebcore}:
We first construct an initial coreset $P_{init}$.
This can be done by finding an arbitrary point in $p_1 \in B(1+\gamma)$,
then find another point $p_2$ which is an AIFP of $p_1$ in $P \cap B$,
then let $P_{init} := \{ p_1,p_2 \}$.
We compute the MEB of $P_{init}$; In each iteration, we find the approximate farthest point in $P \cap B$ to the center of the previous MEB, add it to the coreset, and then update the MEB of the coreset; We output the last found MEB if after one update, the size of MEB remains roughly the same. 

\begin{algorithm}[th]
\caption{AMEB-Query$(B)$}

\textbf{Input:} A query ball $B$ centered at $o_B$ with radius $r_B$. \\
\textbf{Output:} A ball $B_{MEB}$, which is a $(\epsilon, \gamma)$-AMEB of $P \cap B$.
\label{alg-queryAMEB}

\begin{algorithmic}[1]

\State{
Find a $(1+\gamma_A)$-nearest neighbor of $o_B$ in $P$ and let $p_a$ be the result.
If $p_a \not\in B(1+\gamma_A)$, \textbf{return} NULL.
}

\State{
Find an $(\epsilon_A,\gamma_A)$-AIFP of $q$ in $P \cap B$ and let $p_b$ be the result.
If $p_b$ is NULL, \textbf{return} NULL. Otherwise let $P_{init} := \{ p_a, p_b \}$
}

\State{ 
Initialize a set $P_{core}$ as $P_{init}$.
Compute a the MEB of $P_{core}$ (Note currently $P_{core}$ contains only two points so the MEB can be found easily).
Initialize $B_{0}$ to be the computed MEB.
Let $c_{0}$ be the center of $B_{0}$.
}

\State{ Let $w=\lceil 4 / \epsilon^2_0 \rceil$. For $i$ from $1$ to $w$, do the following: 
\begin{enumerate}
\item[4.1]
Try to find an $(\epsilon_{A},\gamma_{A})$-AIFP of $c_{i-1}$ in $B \cap P$
and denote the result as $p_{i}$.
Let $p'$ be the farthest point in $P_{core}$ to $c_{i-1}$.
If $p_{i}$ is NULL or $\norm{c_{i-1} - p_{i}} \leq \norm{c_{i-1} - p'}$,
\textbf{return} $B_{i-1}((1-\epsilon_A)^{-1})$ as the result.
\item[4.2] Add $p_i$ to $P_{core}$.
\item[4.3] Compute a $(1+\epsilon')$ approximate MEB of $P_{core}$ and denote the result as $B_{i}$.
(We use the approximate MEB algorithm in \cite{kumar22computing}.)
Let $c_{i}$ be the center of $B_{i}$, and $r_i$ be the radius of $B_{i}$.
\item[4.4] 
Let $r_i$ and $r_{i-1}$ be the radius of $B_i$ and $B_{i-1}$, respectively.
If $r_i \leq (1+\epsilon_0) r_{i-1}$,
\textbf{return} $B_i(1+\epsilon/3)$ as the result. 
\end{enumerate}
}

\State{ \textbf{return} $B_w(1+\epsilon/3)$ as the result. }

\end{algorithmic}
\end{algorithm}

We show that the query algorithm output an $(\epsilon,\gamma)$-AMEB of $B \cap P$
with probability at least $1-\delta$. This proves the correctness of the algorithm.

\begin{proof}
In the following we assume that the $w$ AIFP queries
and the nearest neighbor query 
are all successful. By simple calculation, we know that 
 this happens with probability at least $1-\delta$.

We first define some notations.
Let $P_{i}$ denote the set $P_{core}$ after the $i$-th iteration in Step 2 of Algorithm 
\ref{alg-queryAMEB} and $P_0 = P_{init}$.
Let $P'_i = P_i \cup (P \cap B)$.

Next, we show that the output ball is always an $(\epsilon, \gamma)$-AMEB of $P$ in $B$.
We consider 5 cases, depending on where the algorithm returns: (1) Step 1, (2) Step 2, (3) Step 4.1, (4) Step 4.4 and
(5) Step 5.

\noindent\textbf{Case (1):} If $p_a$ is a $(1+\gamma_A)$-nearest neighbor and $p_a \not\in B(1+\gamma_A)$,
this implies $P \cap B = \emptyset$. In this case, NULL is a valid answer for an AMEB of $P \cap B$.

\noindent\textbf{Case (2):} This case also only happens when $P \cap B = \emptyset$,
thus NULL is a valid answer.

\noindent\textbf{Case (3):} In this case, for every point $p \in B \cap P$, we have $p \in B_{i-1}((1-\epsilon_A)^{-1})$. Also,  it is clear that $P'_{i} \subset B(1+\gamma_A)$.
Thus, $B_{i-1}((1-\epsilon_A)^{-1})$ is 
a $((1+\epsilon')(1-\epsilon_A)^{-1})$-approximate MEB of $P'_{i}$, where $P \cap B \subseteq P'_{i} \subseteq P \cap B(1+\gamma_A)
\subseteq P \cap B(1+\gamma)$. 
By simple calculation, we know that 
 $B_{i-1}((1-\epsilon_A)^{-1})$ is a $(\epsilon,\gamma)$-AMEB of $P$ in $B$.

\noindent\textbf{Case (4):} To show this case, we need 
the following claim.
Let $i$ be any positive integer, and $q_i$ be the point on the ray $(c_{i-1},c_{i})$ whose distance to $c_{i-1}$ is $\norm{c_{i-1} - q_i} = \sqrt{1 - (1+\epsilon')^{-2}} \, r_{i-1}$.
Assume that $\norm{c_{i-1}-c_{i}} \geq \norm{c_{i-1} - q_i}$ (The case where $\norm{c_{i-1}-c_{i}} < \norm{c_{i-1} - q_i}$ 
will be discussed later in the proof).
Let $H_i$ be the hyperplane that passes through $c_{i}$
and is orthogonal to $(c_{i},c_{i-1})$. Let $\mathcal{H}_i$ be the closed half-space
bounded by $H_i$ and containing $c_{i-1}$.
Let $B^-(q_i,(1+\epsilon')^{-1} r_{i-1})$ be an \emph{open} ball
centered at $q_i$ and with radius $(1+\epsilon')^{-1} r_{i-1}$.

\begin{claim} \label{claim-11}
There exists a point $p'_{i-1} \in P_{i-1}$ such that
$p'_{i-1} \in \mathcal{H}_i \setminus B^-(q_i,(1+\epsilon')^{-1} r_i)$.
\end{claim}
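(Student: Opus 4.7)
The plan is to argue by contradiction: suppose no $p'_{i-1}\in P_{i-1}$ satisfies both $p'_{i-1}\in\mathcal{H}_i$ and $\norm{p'_{i-1}-q_i}\geq(1+\epsilon')^{-1}r_i$. I will combine this with a Pythagorean estimate for the remaining points of $P_{i-1}$ (those lying outside $\mathcal{H}_i$) to strictly enclose $P_{i-1}$ by a ball around $q_i$ that is too small, contradicting the fact that $B_{i-1}$ is a $(1+\epsilon')$-approximate MEB of $P_{i-1}$ so that any enclosing ball of $P_{i-1}$ has radius at least $r_{i-1}/(1+\epsilon')$. The first step is to choose coordinates so that $c_{i-1}=0$ and $c_i = d\hat e_1$ with $d=\norm{c_{i-1}-c_i}$; then $q_i = h\hat e_1$ with $h = \sqrt{1-(1+\epsilon')^{-2}}\,r_{i-1}$, $\mathcal{H}_i = \{x : x_1\leq d\}$, and the assumed inequality $d\geq h>0$ is in force.

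The core computation is a Pythagorean estimate for any $p = (p_1,p_{>1})\in P_{i-1}$ with $p_1 > d$ (i.e., $p\notin\mathcal{H}_i$). Using $\norm{p}^2\leq r_{i-1}^2$ (since $p\in B_{i-1}$) and expanding, I will show
\[
\norm{p - q_i}^2 \;=\; \norm{p}^2 - 2hp_1 + h^2 \;<\; r_{i-1}^2 - 2hd + h^2 \;\leq\; r_{i-1}^2 - h^2 \;=\; (1+\epsilon')^{-2}r_{i-1}^2,
\]
where the strict inequality uses $p_1 > d$ and $h > 0$, and the last inequality uses $d\geq h$. Thus every point of $P_{i-1}$ on the far side of $H_i$ from $c_{i-1}$ already lies strictly inside $B^-(q_i,(1+\epsilon')^{-1}r_{i-1})$, with no use of the contradiction hypothesis. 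Combined with the hypothesis on points in $\mathcal{H}_i$, all of $P_{i-1}$ then sits inside an open ball around $q_i$ whose radius is $(1+\epsilon')^{-1}\max(r_{i-1},r_i)$.

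The final step is to derive the contradiction. Since $P_{i-1}$ is finite, being contained in the open ball gives a closed enclosing ball of strictly smaller radius, which must then be at least $r_{i-1}/(1+\epsilon')$. This produces a contradiction precisely when the maximum in the previous step equals $r_{i-1}$, matching the radius used in the definition of $B^-(q_i,(1+\epsilon')^{-1}r_{i-1})$ stated just above the claim. The main obstacle is thus the regime $r_i > r_{i-1}$, where the enclosing radius degrades to $(1+\epsilon')^{-1}r_i$ and does not directly contradict the MEB lower bound for $P_{i-1}$; in that regime I would promote the argument from $P_{i-1}$ to $P_i = P_{i-1}\cup\{p_i\}$, invoking the stronger lower bound $r_i/(1+\epsilon')$ on the MEB radius of $P_i$ (since $B_i$ is its $(1+\epsilon')$-approximate MEB) and showing, via $\norm{p_i-c_i}\leq r_i$ together with the placement of $q_i$ between $c_{i-1}$ and $c_i$, that $p_i$ itself must also sit inside the enclosing ball around $q_i$, giving the contradiction for $P_i$ rather than $P_{i-1}$.
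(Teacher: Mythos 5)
Your argument is the same contradiction-plus-MEB-lower-bound argument the paper uses, and your Pythagorean computation supplies exactly the step the paper dismisses with ``it is not hard to see'': writing $c_{i-1}=0$, $c_i=d\hat e_1$, $q_i=h\hat e_1$ and using $\norm{p}\leq r_{i-1}$ for $p\in P_{i-1}\subset B_{i-1}$, every point of $P_{i-1}$ outside $\mathcal{H}_i$ already lies strictly inside the open ball of radius $(1+\epsilon')^{-1}r_{i-1}$ about $q_i$, so under the contradiction hypothesis all of $P_{i-1}$ admits a closed enclosing ball of radius strictly below $(1+\epsilon')^{-1}r_{i-1}$, contradicting that $B_{i-1}$ is a $(1+\epsilon')$-approximate MEB of $P_{i-1}$. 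That part is correct and complete. The complication you raise in the regime $r_i>r_{i-1}$ comes from what is evidently a typo in the claim statement: the open ball is defined in the sentence immediately preceding the claim with radius $(1+\epsilon')^{-1}r_{i-1}$, the paper's own proof shrinks to a ball of radius strictly smaller than $(1+\epsilon')^{-1}r_{i-1}$, and the only downstream use of the claim is the bound $\norm{p'_{i-1}-c_i}^2\geq(1+\epsilon')^{-2}r_{i-1}^2+\bigl(\norm{c_i-c_{i-1}}-\sqrt{1-(1+\epsilon')^{-2}}\,r_{i-1}\bigr)^2$, which requires $\norm{p'_{i-1}-q_i}\geq(1+\epsilon')^{-1}r_{i-1}$. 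So the intended radius is $r_{i-1}$ throughout, and your main argument already proves everything that is needed. Be aware, however, that your proposed patch for the literal $r_i$-reading would not go through: $p_i$ is an approximate farthest point from $c_{i-1}$ in $P\cap B$, the only control available is $\norm{p_i-c_i}\leq r_i$ with $q_i\neq c_i$, so $\norm{p_i-q_i}$ can be as large as roughly $r_i+\norm{c_i-q_i}$ and there is no reason for $p_i$ to lie in $B^-(q_i,(1+\epsilon')^{-1}r_i)$. The clean fix is to replace $r_i$ by $r_{i-1}$ in the radius and drop the patch.
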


\begin{figure}[htbp]
\centering
\includegraphics[width=0.7\textwidth]{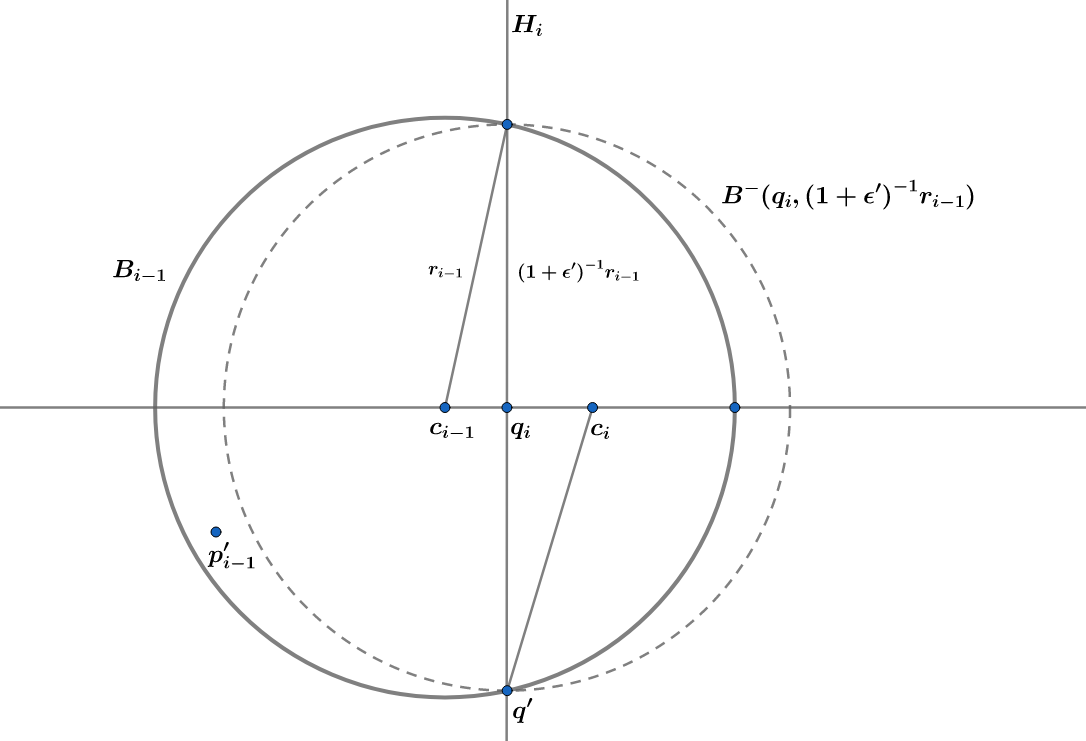}
\caption{An illustration for Claim \ref{claim-11} and its related arguments.}
\label{fig-circles}
\end{figure}

\emph{Proof of Claim \ref{claim-11}.} See Figure \ref{fig-circles} for a better understanding of the configuration.
Assume by contradiction that the claim does not hold. Then, it is not hard to
see that $B^-(q_i,(1+\epsilon')^{-1} r_i)$ contains every point in $P_{i-1}$.
Since $B^-(q_i,(1+\epsilon')^{-1} r_i)$ is an open ball, it is possible to slightly shrink the ball
to obtain a closed enclosing ball of $P_{i-1}$ whose radius is strictly smaller than $(1+\epsilon')^{-1} r_{i-1}$.
However, since $B_{i-1}$ is a $(1+\epsilon')$-approximate MEB of $P_{i-1}$ of radius $r_{i-1}$,
any enclosing ball of $P_{i-1}$ would have a radius at least $(1+\epsilon')^{-1} r_{i-1}$. This is a contradiction, and thus the claim is true. \qed

We first show that 
$\norm{c_{i} - c_{i-1}} \leq (\epsilon/10) r_{i-1}$.
To demonstrate this, we first consider the case where 
$\norm{c_{i} - c_{i-1}} \geq \sqrt{1-(1+\epsilon')^{-2}}  \, r_{i-1} = \norm{c_{i-1} - q_i}$ . 
Let $x_i = \norm{c_{i} - c_{i-1}} / r_{i-1}$, and  $p'_{i-1}$ be the point that satisfies the above claim.
Then, we have  
\begin{align}
\norm{p'_{i-1} - c_{i}}^2 
& \geq (1+\epsilon')^{-2} r^2_{i-1} + 
(\norm{c_{i} - c_{i-1}} - \sqrt{1-(1+\epsilon')^{-2}}  \, r_{i-1})^{2} \\
& = [(1+\epsilon')^{-2} + (x_i - \sqrt{1-(1+\epsilon')^{-2}})^2 ] r^2_{i-1}.
\end{align}

Since $B_{i}$ is an approximate MEB of $P_{i}$ and
$p'_{i-1} \in P_{i-1} \subset P_{i}$, we have
$\norm{p'_{i-1} - c_{i}} \leq r_{i} \leq (1+\epsilon_0) r_{i-1}$.
Combining this with the above inequality, we get
\[
(1+\epsilon_0)^2 \geq (1+\epsilon')^{-2} + (x_i - \sqrt{1-(1+\epsilon')^{-2}})^2. 
\]
From the definition of $\epsilon'$ and $\epsilon_0$
and some calculation,
we know that $x_i = \sqrt{(1+\epsilon_0)^2 - (1+\epsilon')^{-2}} + \sqrt{1-(1+\epsilon')^{-2}} 
\leq 2\sqrt{(1+\epsilon_0)^2 - (1+\epsilon')^{-2}}
\leq 2\sqrt{(1+\epsilon_0)^2 - (1-\epsilon_0)^{2}}
= 4\sqrt{\epsilon_0}
\leq \epsilon/10$.
This means that 
$\norm{c_{i} - c_{i-1}} \leq (\epsilon/10) r_{i-1})$
when $\norm{c_{i} - c_{i-1}} \geq \sqrt{1-(1+\epsilon')^{-2}}  \, r_{i-1}$.
For the case $\norm{c_{i} - c_{i-1}} < \sqrt{1-(1+\epsilon')^{-2}}  \, r_{i-1} = \norm{c_{i-1}-q}$, 
from the definition of $\epsilon'$, we have 
$\epsilon' \leq (1-\epsilon^2/100)^{-1/2} - 1$. 
Thus, $\sqrt{1-(1+\epsilon')^{-2}} \leq \epsilon/10$.
Combining the two cases, we conclude that $\norm{c_{i} - c_{i-1}} \leq (\epsilon/10) r_{i-1}$.

Let $p$ be an arbitrary point in $P \cap B$.
Then, we have $\norm{p_i-c_{i-1}} \geq (1-\epsilon_A) \norm{p - c_{i-1}}$.
Thus, $\norm{p - c_{i}} \leq 
\norm{p - c_{i-1}} + \norm{c_{i} - c_{i-1}}
\leq (1-\epsilon_A)^{-1} \norm{p_i - c_{i-1}} + (\epsilon/10) r_{i-1}
\leq (1-\epsilon_A)^{-1} \norm{p_i - c_{i}} + (1-\epsilon_A)^{-1} \norm{c_i - c_{i-1}} +
(\epsilon/10) r_{i-1}
\leq (1-\epsilon_A)^{-1} r_i + (1+(1-\epsilon_A)^{-1}) (\epsilon/10) r_{i-1}$.
Note that since $B_{i}$ is an enclosing ball of $P_i$,
it is also an enclosing ball of $P_{i-1}$.
From the fact that $B_{i-1}$ is a $(1+\epsilon')$-approximate MEB of $P_{i-1}$,
we know that $r_i \geq (1+\epsilon')^{-1} r_{i-1}$.
Plugging it into the above inequality,
we have 
$\norm{p - c_{i}} \leq 
[(1-\epsilon_A)^{-1} +  (1+(1-\epsilon_A)^{-1})(1+\epsilon') (\epsilon/10)] r_{i}$.
From the definition of $\epsilon_A,\epsilon'$,
we get that $\norm{p - c_{i}} \leq (1+\epsilon/3)r_i$.
This means that
for any $p \in B \cap P$,
$p \in B_i(1+\epsilon/3)$.
Since $B_i$ is a $(1+\epsilon/3)$-approximate MEB of $P_{i}$,
$B_i(1+\epsilon/3)$ is a 
$(1+\epsilon/3)^2$-approximate MEB of $P'_{i} = P_i \cup (B \cap P)$, and thus a $(1+\epsilon)$-approximate MEB.
Clearly, we have $B \cap P \subseteq P'_{i} \subseteq B(1+\gamma) \cap P$.
Therefore, $B_i(1+\epsilon/3)$ is a $(\epsilon,\gamma)$-AMEB of $P$ in $B$.

\noindent\textbf{Case (5):}
Note that for any $i = 0,1,\ldots,w-1$, we have 
$r_{i+1} \geq (1+\epsilon_0) r_i$.
Also note $w=\lceil 4 / \epsilon^2_0 \rceil$.
Thus we have $r_w \geq (1+\epsilon_0)^{w} r_0
\geq 4 r_0 / \epsilon_0$.
For any $p' \in B \cap P$,
we claim that
$p' \in B_w(1+\epsilon/3)$.
To see this, note clearly $r_0 = \norm{p_a - p_b}/2$,
and $(1-\epsilon_A)\norm{p' - p} \leq \norm{p_a - p_b}$.
Thus $\norm{p' - p} \leq 2 (1 - \epsilon_A)^{-1} r_0 \leq ((1 - \epsilon_A)^{-1}/2) \epsilon_0 r_w \leq \epsilon_0 r_w$.
Thus $p' \in B_w(1+\epsilon_0) \subset B_w(1+\epsilon/3)$.

Since $B_w$ is a $(1+\epsilon')$-approximate 
enclosing ball of a subset $P'$ of $P \cap B(1+\gamma_A) \subset P \cap B(1+\gamma)$, 
by simple calculation, we know that $B_w(1+\epsilon/3)$ is
an $(1+\epsilon)$-approximate MEB of $P'$.
Since $P \cap B \subset B_w(1+\epsilon/3)$,
we know $B_w(1+\epsilon/3)$ is
an $(1+\epsilon)$-approximate MEB of $P' \cup (P \cap B)$,
and clearly
$P \cap B \subset P' \cup (P \cap B) \subset P \cap B(1+\gamma)$.
Thus $B_w(1+\epsilon/3)$ is an $(\epsilon,\gamma)$-AMEB of $B \cap P$.

\end{proof}

\textbf{Complexity Analysis.} The nearest neighbor data structure takes $O_{\epsilon,\gamma}(dn^{1+\rho})$ time to build where $0<\rho<1$ depends on $\epsilon,\gamma$.
Algorithm \ref{alg-queryAMEB} makes one nearest neighbor search 
(which takes $O_{\epsilon,\gamma}(dn^{\rho})$ time),
and $O_{\epsilon,\gamma}(1)$ AIFP queries.
Combining the result of AIFP query (Theorem \ref{thm-main-aipf}),
Theorem \ref{thm-main} follows.

\FloatBarrier

\bibliographystyle{unsrt}
\bibliography{ref}

\newpage
\appendix

\section{Appendix}

\subsection{Proof of Lemma \ref{lm-aifp-correctness}}

\begin{proof}
In the following we assume that all the BD queries are successful. By simple calculation, we know that this happens with
probability at least $1-\delta$.

We first note that the return value would not be NULL if $(B,q)$ satisfies the constraint $(r_{B}, d_{min}, d_{max})$,
since this only happens when the BD query $(B,B_{out,0})$ returns a NULL,
which implies there is no point in $P$ that lies in $B \setminus B_{out,0}$.
This means there is no point in $P \cap B$ whose distance to $q$ is at least $r_0 = d_{min}$,
contradicting our assumption.

Next we consider the case where the algorithm returns at Step 3.
Clearly $p_{ans} \in B(1+\xi) \setminus B_{out,m}((1+\xi)^{-1})$.
Thus $\norm{p_{ans} - q} \geq d_{max}/(1+\xi)$.
Note for any $p \in P \cap B$, $\norm{q - p} \leq d_{max}$ from the constraint. 
The proof of this case is therefore complete.

Finally we consider the case that the algorithm returns at Step 2.
There exists $i$ such that $p_{ans} \in B(1+\xi) \setminus B_{out,i}((1+\xi)^{-1})$,
and the BD query $(B,B_{out,i+1})$ returns a NULL.
We know that there does not exist a point in $P \cap B \setminus B_{out,i+1}$.
Thus, let $p$ be any point in $P \cap B$,
we have $p \in B_{out,i+1}$, which means
$\norm{q - p} \leq r_{i+1} = (1+\xi)r_{i}$.
Since $p_{ans} \not\in B_{out,i}((1+\xi)^{-1})$,
we have $\norm{p_{ans} - q} \geq (1+\xi)^{-1} r_i$.
Thus $\norm{p_{ans} - q} \geq (1+\xi)^{-2} \norm{p - q} \geq (1-\epsilon) \norm{p - q}$.
This completes the proof.
\end{proof}

\subsection{Missing Proofs in Section \ref{sec-bd}}
\label{sec-bd-full}

The following is the proof of Lemma \ref{lm-bd-2}.

\begin{proof}
In order for $G_p$ to be examined, for every $j=1,2,\ldots,b$, the following two (independent) events have to happen:
(1) $\text{COM}(\textsc{lab}_{in}(p,j),\textsc{lab}_{in}(o_{in},j))$ $\geq t_1 \wedge \text{COM}(\textsc{lab}_{out}(p,j)$, $\textsc{lab}_{out}(o_{out},j)) \geq t_2$,
 which happens with probability at least $4/9$,
 and (2) $\textsc{lab}_{in}(p,j) = \textsc{lab}'_{in}(j)$ and $\textsc{lab}_{out}(p,j) = \textsc{lab}'_{out}(j)$, which happens with probability $2^{-2a}$.
Thus, the probability of the above two events happen is $[(4/9)(2^{-2a})]^b$. This proves the lemma for $G_p$.
The case for $G_q$ can also be proved in a similar manner.
\end{proof}

With the above two lemmas, we can obtain the following lemma. 
The argument has some similarity with some proofs in \cite{har2012approximate} for near neighbor search with LSH.

The following is the proof of Lemma \ref{lm-bd-3}.

\begin{proof}
For ease of analysis, we consider a modified version of Algorithm \ref{alg-queryBD}:
we assume that the algorithm does not terminate after $3c$ points are examined,
and it does not return even after a point in $B_{in}(1+\xi) \setminus B_{out}((1+\xi)^{-1})$
is found. Consider the following event:
\begin{itemize}
    \item $E$: Among the first $3c$ examined points, there exists one point that is in $P \cap (B_{in}(1+\xi) \setminus B_{out}((1+\xi)^{-1}))$.
\end{itemize}

We prove that event $E$ happens with probability at least $1/4$. The lemma will then follows.

Let $p \in P$ be a point that lies in $B_{in} \setminus B_{out}$.
Consider the following 2 events:
\begin{itemize}
    \item $E_1$: $p$ is examined.
    \item $E_2$: The number of points $q \not\in (B_{in}(1+\xi) \setminus B_{out}((1+\xi)^{-1}))$
    such that $q$ is examined is strictly less than $3c$.
\end{itemize}
If both of the events occur, then clearly $E$ happens. Thus, we only need to prove that 
with probability $1/4$ both $E_1$ and $E_2$ happen.

We first consider $E_1$. Note that for every $k=1,2,\ldots,c$,
the probability that $p$ is examined in the for loop of Step 1 of Algorithm \ref{alg-queryBD}
is at least $(P''_1)^b$. Note that
\[
(P''_1)^b \geq (P''_1)^{1 + \log_{1/P''_2} n} = P''_1 n^{-\rho}.
\]
The probability that $E_1$ does not happen is at most $(1 - (P''_1)^b)^c \leq (1 - P''_1 n^{-\rho})^c \leq
(1 - P''_1 n^{-\rho})^{n^{\rho}/P''_1} \leq 1/e$.
Thus, we have $\text{PR}[E_1] \geq 1 - 1/e$.

Next, we consider event $E_2$.
Let $q$ be any point that lies outside of $B_{in}(1+\xi) \setminus B_{out}((1+\xi)^{-1})$.
We call such $q$ a ``bad'' point.
For any bad point $q$,
the probability that $q$ is examined in the for loop of Step 1 of Algorithm \ref{alg-queryBD}
is at most $(P''_2)^b \leq 1/n$.
Thus, the expected number of examined bad point in each iteration of the loop is at most 1.
Let $X$ denote the number of examined bad points
after all the $c$ iterations.
Then, we have $E[X] \leq c$.
By Markov's inequality,
we get $\text{Pr}[X \geq 3c] \leq 1/3$.
Hence, $\text{Pr}[E_2] = \text{Pr}[X < 3c] \geq 2/3$.

The probability that both $E_1$ and $E_2$ happen is thus no less than $(1 - 1/e) + 2/3 - 1 \geq 1/4$.
This completes the proof.
\end{proof}

In the following we analyze the space/time complexity of the BD query scheme
and prove Lemma \ref{lm-bd}.
We first consider the process of creating the BD data structure in Algorithm \ref{alg-buildBD}.
The algorithm creates $c$ groups of buckets. For each group,
every point in $P$ is hashed into one  bucket.
In order to create these buckets, we sample a total of $2abc$ hash functions from LSH families $\mathcal{H}_{in}$ and $\mathcal{H}_{out}$, and
apply them to every point in $P$ to determine the buckets it belongs to.
The total running time for this is $O(dabcn)$. Since $a = O_{\xi}(1), b = O_{\xi}(\log n)$ and $c = O_{\xi}(n^\rho)$, where $\rho$ and the constant hidden in the big-O notation depends only on $\xi$.
Thus, the preprocessing time and the space complexity of the $\xi$-error BD query data structure is $O_{\xi}(dn^{1+\rho}\log n)$.
 
Next, we analyze the query process, {\em i.e.,} Algorithm \ref{alg-queryBD}.
The main loop has at most $c$ iterations. In each iteration,
we first determine the bucket to examine, which takes $O(dab)$ time to compute the labels.
Also, note that the process will terminate after $3c$ points are examined.
Thus, the total running time for the query algorithm is $O(abcd)$, which is $O_{\xi}(dn^{\rho}\log n)$.

Note that it is easy to see that a BD query data structure with query success rate $1-\delta$
can be obtained by combining $O(\log \delta^{-1})$ BD structures with query success rate $1/4$.
Lemma \ref{lm-bd} then follows, which concludes this subsection.

\subsection{Missing Proofs in Section \ref{sec-aifp}}

\noindent\textbf{Proof of Claim \ref{cl-far}.}
\begin{proof}
Indeed, we will prove a slightly stronger version of Claim \ref{cl-far}:
If $\norm{q - o_B} \geq (3+\gamma)\epsilon^{-1}r_B$,
then any point $p$ in $B(1+\gamma/2)$ is an $(\epsilon, \gamma/2)$-AIFP of $q$ in $P \cap B$.

Let $p'$ be arbitrary point in $P \cap B$,
then $\norm{p' - p} \leq (2+\gamma)r_B$.
Note $\norm{p' - q} \geq \norm{q - o_B} - \norm{p' - o_B}
\geq (3+\gamma)\epsilon^{-1}r_B - r_B \geq (2+\gamma) \epsilon^{-1} r_B$.
Therefore $\norm{p' - p} \leq \epsilon \norm{p' - q}$.
Thus $\norm{p - q} \geq \norm{p' - q} - \norm{p' - p} \geq (1-\epsilon) \norm{p' - q}$.
Since $p$ is in $B(1+\gamma/2)$,
we conclude that $p$ is an $(\epsilon, \gamma/2)$-AIFP of $q$ in $P \cap B$.

\end{proof}

\quad

\noindent\textbf{Proof of Claim \ref{cl-out}.}

\begin{proof}
If $v$ does not exists, then $r_N > \gamma r_B / 64$.
Since $p_N$ is a 2-nearest neighbor,
for any $p' \in P$,
we have $\norm{p' - q} \geq r_N/2$.
Thus $\norm{p' - q} > \gamma r_B / 128$,
and the claim follows easily in this case.

In the following, let $v_p$ denote the parent node of $v$ (if $v_p$ does not exists, \emph{i.e.} $v$ is the root
of $T_P$, then the claim is trivial).
Then from the definition of $v$,
we know that $s(v_p) + r_N \geq \gamma r_B/64$.
From the property of $T_P$,
for any $p_{out} \in P \setminus P(v)$,
$s(v_p) \leq  \norm{p_{out} - q} nd$.
Thus $\norm{p_{out} - q} nd + r_N \geq \gamma r_B/64$.
Therefore, either $r_N \geq \gamma r_B/128$ or
$\norm{p_{out} - q} nd  \geq \gamma r_B/128$.
Note that if $r_N \geq \gamma r_B/128$,
then the claim follows easily,
since $p_N$ is a 2-nearest neighbor of $q$ in $P$.
So we assume $\norm{p_{out} - q} nd  \geq \gamma r_B/128$,
which gives us $\norm{p_{out} - q} \geq \gamma r_B/128nd$.
The proof is complete.

\end{proof}

\quad

\noindent\textbf{Proof of Claim \ref{cl-case1}}

\begin{proof}
Assume $\norm{o_B - q} > (1+\gamma/64)r_B = r_B + \gamma r_B/64$.
Note $s(v) + r_N \leq \gamma r_B/64$,
and $s(v)$ is an upper bound of the diameter of $P(v)$,
then for any $p \in P(v)$, we have $\norm{p - q} \leq \gamma r_B/64$.
Therefore $p \not\in B$.
\end{proof}

\quad

\noindent\textbf{Proof of Claim \ref{cl-case2}}

\begin{proof}
We first consider case $r^{-} = r_N + s(v) > \gamma r_B /512 n^2 d^2$.
Then either (a) $r_N > \gamma r_B /1024 n^2 d^2$, or (b) $s(v) > \gamma r_B /1024 n^2 d^2$.
For case (a), let $p_1$ be arbitrary point in $P \cap B \subset P \cap \mathcal{B}(o_B, r^{(2)}_B)$,
since $p_N$ is a 2-nearest neighbor,
we have $\norm{p_1 - q} \geq r_N/2 \geq \gamma r_B /2048 n^2 d^2$.
For case (b),
note $s(v)$ is not larger than $nd$ times the diameter of $P(v)$,
thus there exists $p_1, p_2 \in P(v)$,
such that $s(v) \leq \norm{p_1 - p_2} nd$.
Thus $\norm{p_1 - p_2} \geq \gamma r_B /1024 n^3 d^3$.
Thus at least one of $p_1$ or $p_2$, just say $p_1$,
would satisfy $\norm{p_1 - q} \geq \gamma r_B /2048 n^3 d^3$.
Also $p_1 \in \mathcal{B}(o_B, r^{(2)}_B)$. In fact, 
for any point $p' \in P(v)$,
$\norm{o_B - p'} \leq \norm{o_B - q} + \norm{q - p'} \leq (1+\gamma/64)r_B + (r_N + s(v)) \leq (1+\gamma/64)r_B + \gamma r_B/64
= (1+\gamma/32)r_B$,
which implies $p' \in \mathcal{B}(o_B, r^{(2)}_B)$.
We have proved item 1 of the claim.

Now assume $r^{-} = r_N + s(v) \leq \gamma r_B /512 n^2 d^2$.
For every $p \in P(v)$,
we have $\norm{p - q} \leq r_N + s(v) = r^{-} \leq [r^{-}]_{\lambda}$,
thus $p \in \mathcal{B}(o_B, r^{(3)}_B)$.
Also from Claim \ref{cl-out},
for any $p_{out} \in P \setminus P(v)$,
we have $\norm{q-p_{out}} \geq \gamma r_B /256nd$.
From this, it is not hard to see by calculation that
$p_{out}$ lies outside of $p \in \mathcal{B}(o_B, (1+\gamma) r^{(3)}_B)$.

Note clearly $s(v) + r_N = r^{-} \geq (1+\lambda)^{-1} r^{(3)}_B \geq r^{(3)}_B/2$.
If $r_N \geq r^{(3)}_B/4$,
then we have $\norm{q - p_N} = r_N \leq r^{(3)}_B/4$ and $p_N \in P(v)$,
and the claim is proved in this case.
Assume $r_N < r^{(3)}_B/4$,
then $s(v) \geq r^{(3)}_B/4$.
Now let $p_1, p_2 \in P(v)$ such that $\norm{p_1 - p_2}$ is the diameter of $P(v)$.
Thus $r^{(3)}_B/4 \leq s(v) \leq nd \norm{p_1 - p_2}$.
Thus $\norm{p_1 - p_2} \geq r^{(3)}_B/4nd$.
Then for at least one of $p_1,p_2$,
its distance to $q$ is at least $r^{(3)}_B/8nd$.
The proof is complete.

For any $p' \in \mathcal{B}(o_B, r^{(3)}_B)$,
$\norm{o_B - p'} \leq \norm{o_B - q} + \norm{q - p'} \leq (1+\gamma/64)r_B + \gamma r_B /256 n^2 d^2
\leq (1+\gamma/32)r_B$. Thus $\mathcal{B}(o_B, r^{(3)}_B) \subset B(1+\gamma/32)$.

\end{proof}

\end{document}